\definecolor{darkgreen}{rgb}{0.0, 0.7, 0.0}
\DeclareSymbolFont{symbolsC}{U}{txsyc}{m}{n}
\DeclareMathSymbol{\boxright}{\mathrel}{symbolsC}{128}
\theoremstyle{definition}
\newtheorem{theorem}{Theorem}[section]
\newtheorem{proposition}[theorem]{Proposition}
\newtheorem{corollary}[theorem]{Corollary}
\newtheorem{definition}[theorem]{Definition}
\newtheorem{lemma}[theorem]{Lemma}
\newtheorem{fact}[theorem]{Fact}
\newtheorem{example}[theorem]{Example}
\newtheorem{remark}[theorem]{Remark}
\definecolor{cobalt}{RGB}{0,71,171}
\definecolor{brickred}{RGB}{203, 65, 84}
\newcommand{\Tautequiv}{\;\rotatebox[origin=c]{180}{$\vDash$}\vDash}
\begin{document} 

\title{The Orthologic of Epistemic Modals}
 \author{Wesley H. Holliday$^\dagger$ and Matthew Mandelkern$^\ddagger$ \\ \\$\dagger$ University of California, Berkeley {\normalsize (\href{mailto:wesholliday@berkeley.edu}{wesholliday@berkeley.edu})} \\ $\ddagger$ New York University {\normalsize (\href{mailto:mandelkern@nyu.edu}{mandelkern@nyu.edu})}}

\date{Forthcoming in \textit{Journal of Philosophical Logic}.}

\maketitle

\begin{abstract} Epistemic modals have peculiar logical features that are challenging to account for in a broadly classical framework. For instance, while a sentence  of the form $p\wedge\Diamond\neg p$  (`$p$, but it might be that not~$p$') appears to be a contradiction, $\Diamond\neg p$ does not entail $\neg p$, which would follow in classical logic. Likewise, the classical laws of distributivity and disjunctive syllogism fail for  epistemic modals. Existing attempts to account for these facts generally either under- or over-correct. Some theories predict that $ p\wedge\Diamond\neg p$, a so-called  \emph{epistemic contradiction}, is a contradiction only in an etiolated sense, under a notion of entailment that does not always allow us to replace $p\wedge\Diamond\neg p$ with a contradiction; these theories underpredict the infelicity of embedded epistemic contradictions. Other theories savage classical logic, eliminating not just rules that intuitively fail, like distributivity and disjunctive syllogism,  but also rules like non-contradiction, excluded middle, De Morgan's laws, and disjunction introduction, which intuitively remain valid for epistemic modals. In this paper, we aim for a middle ground, developing a semantics and logic for epistemic modals that makes epistemic contradictions genuine contradictions and that invalidates distributivity and disjunctive syllogism but that otherwise preserves classical laws that intuitively remain valid. We start with an \textit{algebraic semantics}, based on ortholattices instead of Boolean algebras, and then propose a \textit{possibility semantics}, based on partial possibilities related by compatibility. Both semantics yield the same consequence relation, which we axiomatize. We then show how to lift an arbitrary possible worlds model for a non-modal language to a possibility model for a language with epistemic modals. The goal throughout is to retain what is desirable about classical logic while  accounting for the non-classicality of epistemic vocabulary.\end{abstract}

{\small\hspace{.14in}\textbf{Keywords:} epistemic modals, negation, connectives, ortholattices, algebraic semantics, possibility semantics}

\section{Introduction}\label{Intro}

Exploration of epistemic modals in the last decades has shown that they do not fit easily into the framework of  classical modal logic. Following Yalcin \citeyearpar{Yalcin2007}, we can characterize the problem as follows. There is strong evidence that sentences of the form $p\wedge\lozenge\neg p$  or $\neg p\wedge \lozenge p$ (as well as variants that reverse the order) are contradictory: not only are they unassertable, but also they  embed like contradictions across the board. To get a sense for the evidence, note the infelicity of the following:

\ex. \label{butler}\a. \# The butler is the murderer, but he might not be. 
\b. \# Suppose that the butler is the murderer, but he might not be. 
\c. \#Everyone who is tall might not be tall.

In light of these data, it is natural to consider a logic that makes sentences of the form $ p\wedge\lozenge\neg p$ contradictory. The problem is that adding $ p\wedge\lozenge \neg p\vDash \bot$ as an entailment to any  modal logic that extends classical logic immediately yields the entailment $\lozenge \neg p\vDash \neg p$. But $\lozenge \neg p$ plainly does not entail $\neg p$. That the butler might not be the murderer does not entail that the butler \textit{is not} the murderer.

A profusion of responses to this problem have been developed in recent years. These responses either reject the assumption that $p\wedge\lozenge\neg p$ is truly a contradiction or  hold that it is a contradiction only under a non-classical notion of $\vDash$, one that blocks the reasoning above. For a prominent example, consider the dynamic system of \citet{GSV:1996}, wherein $p\wedge\lozenge\neg p$ is a contradiction,\footnote{Provided $p$ is free of modals or conditionals.} but conjunction and entailment are interpreted non-classically, so that we cannot conclude that $\lozenge \neg p\vDash \neg p$ (\citealt{GSV:1996}). The non-classicality of this approach, however, goes very deep: for instance, the classical laws of non-contradiction and excluded middle also fail in the dynamic system, but we know of no evidence for their failure from modal language. 
Our goal in this paper is to develop a more minimalist non-classical approach to epistemic modals that validates  $ p\wedge\lozenge\neg p\vDash \bot$ without validating $\lozenge \neg p\vDash \neg p$.  In particular, we block the inference from $ p\wedge\lozenge\neg p\vDash \bot$ to $\lozenge \neg p\vDash \neg p$ by treating negation, algebraically speaking, as an \emph{orthocomplementation} (the complement operation characteristic of ortholattices) but not necessarily a \emph{pseudocomplementation} (the complement operation of Heyting and Boolean algebras, obeying the law that  $a\wedge b=0$ implies $b\leq\neg a$). However, we aim to depart from classical logic in a minimal way, by invalidating only those classical inference patterns that have intuitive counterexamples involving epistemic modals. We note in particular that those intuitive counterexamples always involve combinations of sentences with different levels of epistemic iteration: for example, when $p$ itself is modal-free, $p\wedge\lozenge\neg p$ conjoins a sentence $p$ with no epistemic modality and a sentence $\lozenge\neg p$ with one level of epistemic modality. Hence we develop a system that is fully classical when reasoning with sentences of the same ``epistemic level.'' On this picture,  classical reasoning is dangerous only when it crosses different epistemic levels.

We start in \S~\ref{des} by drawing out intuitive desiderata for a logic of epistemic modals. We then characterize an \emph{epistemic orthologic} that meets those desiderata, first algebraically in \S~\ref{algebraic} and then using a possibility semantics in \S~\ref{PossSem}. Both semantics yield the same consequence relation. We provide a sound and complete axiomatization for this epistemic orthologic, following the call in \citealt{HollidayIcard2018} to axiomatize proposed consequence relations in natural language semantics.  In \S~\ref{Epistemicization}, we show how to lift an arbitrary possible worlds model for a non-modal language to a possibility model for a language with epistemic modals,  which yields an implementation of our possibility semantics built on the more familiar primitives of possible worlds semantics.  Finally, we compare our approach to existing approaches in \S~\ref{Comparisons} and conclude in \S~\ref{Conclusion}.

An associated online repository (\href{https://github.com/wesholliday/ortho-modals}{https://github.com/wesholliday/ortho-modals}) 
contains a Jupyter notebook with code to use our logic with the Natural Language Toolkit's interface to the Prover9 theorem prover and Mace4 model builder, as well as a Jupyter notebook with code for using the possibility semantics of \S~\ref{PossSem}. 

\section{Desiderata}\label{des}

We will begin by  bringing out the key desiderata for the logic of epistemic modals. 
In particular, our goal in this section and the next is to identify properties of an entailment relation appropriate to a language with epistemic modality.
As a methodological preliminary, let us say a bit more about the kind of entailment relation we aim to characterize. Our target consequence relation is a classical one in the sense that it aims to capture universal preservation of truth. 
This has two upshots worth noting. First, $\varphi$ entails $\psi$ iff $\varphi$ is semantically equivalent to $\varphi\wedge\psi$, where two  formulas $\varphi, \psi$ are semantically equivalent iff for any formulas $\chi$ and $\rho$, $\chi[\nicefrac{\varphi}{\rho}]$ is true iff $\chi[\nicefrac{\psi}{\rho}]$ is true (where $\chi[\nicefrac{\varphi}{\rho}]$ is the sentence that results from uniformly substituting $\varphi$ for $\rho$ in $\chi$; we will ignore potential issues about hyperintensionality here). 
Second,  if $\varphi$ entails $\psi$, then the probability of $\psi$ must be at least that of $\varphi$  (since on any natural notion of probability, the probability of semantically equivalent propositions will be equal, and the probability of a conjunct will always be at least as great as the probability of a conjunction). 

We highlight these two upshots of the classical consequence relation because they yield two ways to empirically test our target notion of entailment. 
For a brief illustration, consider the inference from $\varphi$ to $\Box \varphi$, which is valid in many logics for epistemic modals. We will invalidate this inference, and indeed, it is invalid in a logic that aims to capture inferences that universally preserve truth.  For if $\varphi$  entailed $\Box\varphi$ in this sense, then the probability of $\Box \varphi$ would always have to be at least as great as that of $\varphi$, but this is wrong. Consider a fair coin that was just flipped; suppose we do not know the outcome of the flip. Plausibly, the probability that it  landed heads is around .5, but the probability that it \emph{must} have landed heads is much lower, around 0. 
Second, since the present notion of entailment contraposes (given classical assumptions about De Morgan's laws and negation that we will not question here), if $\varphi\vDash\Box \varphi$, then $\neg\Box\varphi\vDash\neg\varphi$; but the latter is plainly implausible, since $\neg\Box\varphi$ is equivalent to $\lozenge\neg \varphi$, and clearly $\lozenge\neg \varphi$ does not entail $\neg \varphi$ (from `It might not be raining' it does not follow that it is not raining).
Finally, given that most accept that $\Box\varphi$ entails $\varphi$, if $\varphi$ also entailed $\Box\varphi$ then they would be logically equivalent. But then they would be everywhere substitutable for one another, salva veritate. However, they are clearly not: $\neg\varphi$ and $\neg\Box\varphi$ are not equivalent (compare `it's not raining' to `it's not the case that it \emph{must} be raining'). 

There are other kinds of entailment relations that we can also  characterize: for instance, prominently, we might develop a logic characterizing the inferences that preserve rational acceptance in all cases  (\citealt{Yalcin2007}, \citealt{Bledin:2014,Bledin:2015}, \citealt{Santorio:2018a}, \citealt{Norlin:2019,Norlin:2020}). These logics are well worth studying but are not our central target here. In the conclusion, we briefly discuss how to characterize a logic of acceptance  in our framework. In such a logic---unlike ours---the inference from $\varphi$ to $\Box\varphi$ might reasonably be seen to be valid. However, we do not see any sense in which our project (characterizing a logic of truth preservation) and the project of characterizing a logic of acceptance are in conflict. There is a purely terminological question about which of these deserves the name of `logic', a question whose interest we do not see; we are happy to use `logic' in the traditional way for the set of truth-preserving inferences, but if others prefer different usages, feel free to substitute your preferred terminology. There could also be a substantive question if someone thought that we should \emph{only} characterize the logic of acceptance and that the logic of truth-preservation is uninteresting or irrelevant. But such a position has little merit and (unsurprisingly) has not been defended in the literature to our knowledge. For even if we had an adequate characterization of the logic of acceptance, there would remain the further question of which inferences preserve probability and of which sentences are always substitutable salva veritate. These questions are not answered by a logic of acceptance (since we might have, e.g., that $\varphi$ entails $\Box\varphi$ on that notion, but the inference does not preserve probability, nor are $\varphi$ and $\Box\varphi$ substitutable). So everyone interested in core semantic notions needs to characterize the logic of truth preservation. It is natural to think that a logic of rational acceptance will then supervene on this logic, together with a notion of rational acceptance (and that is indeed how such a logic is defined in, e.g., \citealt{Yalcin2007}). But even if you reject this supervenience claim, you still need a logic of truth preservation, since it is not plausible that the logic of truth preservation supervenes on the logic of rational acceptance. 

We take these points to be uncontroversial, but questions about our target notion of entailment have frequently been raised in conversation and are particularly pertinent to the literature on epistemic modals (where it is often claimed that $\varphi$ entails $\Box\varphi$---which, again, we are happy to accept in one sense, but not in our target sense of entailment). 
Let us now turn to our  substantive desiderata for our target logic.

\subsection{Epistemic contradictions}\label{EpContra}

First, we give more evidence that sentences of the form  $p\wedge\lozenge\neg p$ and $\neg p\wedge \lozenge p$  are indeed contradictory, drawing on observations in \citealt{GSV:1996}, \citealt{Aloni:2000}, \citealt{Yalcin2007,YalcinDeRe}, and \citealt{Mandelkern:2018a}. Yalcin calls sentences of this form \emph{epistemic contradictions}. \citet{Mandelkern:2018a} calls epistemic contradictions and variants that reverse their order (that is, sentences of the form $\lozenge\neg p\wedge p$ or $\lozenge p\wedge \neg p$) \emph{Wittgenstein sentences}. Although this is somewhat controversial (and runs counter to the prominent dynamic approach mentioned above), we think that order does not matter to our central points,  so we intend the claims we make here to apply to all Wittgenstein sentences. However, for brevity, we often use epistemic contradictions as a stand-in for all Wittgenstein sentences. 

To start, note that Wittgenstein sentences are generally unassertable. It is hard to think of circumstances in which any of the following can be asserted: 

\ex. \a. \#It's  raining, but it might not be.
\b. \#It might be raining, but it isn't.
\c. \#The cat might be under the bed, but she is on the couch.
\d. \#I'll make lasagna, but I might not make lasagna.
\e. \#Sue isn't the winner but she might be. 

Unassertability, however, could plausibly be due to either pragmatic or semantic factors. Indeed, these sentences are superficially similar to Moore sentences (as \citet{Wittgenstein:1953} observed), and the latter are standardly explained on a pragmatic basis: you cannot assert, say, `It's raining, but I don't know that', because you cannot know this, on pain of contradiction, and you need to know what you assert.\footnote{Cf.~\citealt[\S~4.11]{Hintikka2005}. For a study of Moore sentences in the context of epistemic logic and dynamic epistemic logic, see \citealt{Holliday2010}.}

To test whether the unassertability of Wittgenstein sentences is pragmatic or semantic, we can look at how they embed in various environments, comparing their behavior with embedded Moore sentences on the one hand and embedded classical contradictions on the other, building on the structure of arguments of \citealt{Yalcin2007}. So consider the three-way comparisons below, with first the Moore sentence, then the Wittgenstein sentence, and finally the contradiction: 

\ex. \a. Suppose Sue is the winner but I don't know it. 
\b. \# Suppose Sue is the winner but she might not be.
\c. \# Suppose Sue is the winner and isn't the winner. 

\ex. \a. If Sue is the winner but I don't know it, I will chastise her for not telling me. 
\b. \# If Sue is the winner but she might not be, I will chastise her for not telling me. 
\c. \# If Sue is the winner and isn't the winner, I will chastise her for not telling me. 

\ex. \label{drekm}\a. It could be that Sue is the winner but I don't know it.
\b. \# It could be that Sue is the winner and might not be.
\c. \# It could be that Sue is the winner and isn't the winner.

\ex. \label{drek}\a. Either Sue is the winner but I don't know it, or she isn't the winner but I don't know it.
\b. \# Either Sue is the winner but might not be, or she isn't the winner but might be.
\c. \# Either Sue is the winner and isn't the winner, or she isn't the winner and is the winner.

\ex. \label{definiteec}\a. The winner is, for all I know, not the winner. 
\b. \# The winner might not be the winner.
\c. \# The winner isn't the winner.

\ex. \label{indefiniteec}\a. Someone is the winner and for all I know isn't the winner. 
\b. \# Someone is the winner and might not be the winner.
\c. \# Someone is the winner and isn't the winner. 

In all these cases, the first variant, embedding a Moore sentence, is felicitous, supporting the idea that the infelicity of Moore sentences is pragmatic. By contrast, both of the latter variants are infelicitous, suggesting that the infelicity of Wittgenstein sentences is not pragmatic, after all, but instead due to the fact that they are genuine contradictions.

This is not to say that Wittgenstein sentences and classical contradictions pattern in exactly the same ways. It is well known that classical contradictions are often judged by ordinary speakers to be interpretable (`Sue is the winner and isn't the winner' could be used to express that Sue is the winner in one sense but not in some other sense), and Wittgenstein sentences, embedded or not, can similarly be coerced (though perhaps with more difficulty) into slightly different communicative functions. Nonetheless, their behavior across the board is strikingly like that of contradictions. We conclude that they are indeed contradictions and that any residual differences between $p\wedge \neg p$ and $ p\wedge\lozenge\neg p$ should be accounted for on the basis of their underlying compositional semantics.

This presents us with our first desideratum: a logic on which Wittgenstein sentences are contradictions. That is, we want $ \varphi\wedge\lozenge \neg \varphi\vDash \bot$ (and likewise $ \neg \varphi\wedge\lozenge  \varphi\vDash \bot$,  $\lozenge \varphi\wedge \neg \varphi\vDash \bot$, and  $\lozenge \neg \varphi\wedge  \varphi\vDash \bot$). In fact, we need more than this: we need a system in which $\varphi\wedge\Diamond\neg \varphi$ can always be replaced with $\varphi\wedge\neg \varphi$ \emph{salva veritate}. 
Otherwise, we would not have an immediate account of the data above, even if  $ \varphi\wedge\lozenge\neg  \varphi\vDash \bot$. This is worth mentioning since some systems, like domain semantics (\citealt{Yalcin2007}), predict that $ \varphi\wedge\lozenge \neg \varphi\vDash \bot$ but not that $\varphi\wedge\Diamond\neg \varphi$ can always be replaced with $\varphi\wedge \neg \varphi$ \emph{salva veritate} and hence miss some of the data above (see \S~\ref{Comparisons} for further discussion). Call a logic an \emph{\textsf{E}-logic} (suggesting \textit{epistemic}) if it satisfies these first two desiderata. 

We have already seen why an \textsf{E}-logic is hard to obtain: given classical assumptions, treating Wittgenstein sentences as contradictions would make $\lozenge\neg p$ entail $\neg p$. 
So, in particular, we want a logic that is an \textsf{E}-logic but that does not yield the absurd conclusion that $\lozenge\neg p\vDash\neg p$.
 Lest this problem be treated as having to do essentially with the peculiarities of `and', we should note that the problem can equally be formulated without involving conjunction at all. If we take the evidence above to show that $p$ and $\lozenge\neg p$ are jointly inconsistent, then what we need is a logic on which $\{\lozenge\neg p,p\}\vDash\bot$. Classical reasoning would  yield $\lozenge \neg p\vDash\neg p$. So we need a logic that blocks this reasoning. 
Of course, evidence that $p$ and $\lozenge \neg p$ are jointly inconsistent is somewhat less direct than the evidence that $p\wedge\lozenge \neg p$ is itself inconsistent. After all, we cannot  embed a pair of propositions under a unary sentential operator, as we did with the corresponding conjunction. But the sentences above already yield indirect evidence that it is not just the conjunction that is inconsistent, but also its conjuncts are jointly inconsistent, since the infelicity persists when the conjuncts are distributed across the restrictor and scope of quantifiers, as in \ref{definiteec} and \ref{indefiniteec}. Similar evidence comes from pairs of attitude ascriptions, where again the infelicity persists when the conjuncts in question are distributed under distinct attitude predicates:

 \ex. \a. \# Liam believes that Sue is the winner. Liam also believes that Sue might not be the winner.
 \b. \# Suppose that Sue is the winner. Suppose, further, that Sue might not be the winner.
 \c. \# I hope Susie wins. I also hope she might not win.
 
Finally, another way to see the puzzle of epistemic contradictions, also emphasized by \citet{Yalcin2007}, is to approach it from intuitions about meaning and synonymy rather than intuitions about logic. It is very natural to think that $\lozenge p$ means something like `For all we know, $p$ is true'. This is roughly the meaning it has in the influential approach of \citealt{Kratzer1977,Kratzer:1981}, for instance. But if that is right, then $ p\wedge \lozenge\neg p$ should have the meaning of a Moore sentence and should be embeddable just like Moore sentences are. But the examples above clearly show  that this is not the case. So apparently $\lozenge p$ does not mean even roughly the same thing as `For all we know, $p$ is true'. But then what does it mean? 

\subsection{Distributivity}\label{dissubsection}

We have seen one reason to think that an adequate treatment of epistemic modals calls for revision to classical logic. A second reason, brought out in \citealt{Mandelkern:2018a}, has to do with the distributive law, according to which $\varphi\wedge (\psi\vee \chi)$ is logically equivalent to $ (\varphi\wedge \psi)\vee (\varphi\wedge \chi)$ for  any sentences $\varphi,\psi,\chi$. This is, of course, a law of classical logic. But while it is intuitively valid for modal-free fragments, it is not intuitively valid once epistemic modals are on the scene; for compare:

\ex.\a. \label{suea} Sue might be the winner and she might not be, and either she is the winner or she isn't.
\b. \label{suec} \#Sue might not be the winner and she is the winner, or else Sue might be the winner and she isn't the winner.

While \ref{suea} sounds simply like a long-winded avowal of ignorance, \ref{suec} is very strange. But if distributivity is valid, then \ref{suea} entails \ref{suec}! (In fact, given distributivity, \ref{suea} and \ref{suec} are logically \emph{equivalent} provided that $p$ entails $\Diamond p$, on which more shortly.) Such examples motivate us to look for a logic that invalidates distributivity for the modal fragment. 

The ramifications of the failure of distributivity are plausibly very wide. For instance, these failures might also help explain a puzzling observation, which is based on discussion in \citealt{Ninan:2017} and related discussion in \citealt{Aloni:2000}. Consider a fair lottery where at least one ticket will win, but not all will win. The winning ticket(s) have been drawn, but we do not know which tickets won. 
Then the following seems true:

\ex. Every ticket might not be a winning ticket.\label{everymight}

But we cannot conclude:

\ex. Some winning ticket might not be a winning ticket.\label{winmight}

The winning tickets, however, are among the tickets, and so if every ticket might not be a winning ticket, it seems that we should be able to conclude that some winning ticket might not be a winning ticket.

We can reconstruct this inference as follows. Suppose we have rigid designators for all the tickets: $t_1, \dots  , t_n$. Then \ref{everymight} can be taken to be the conjunction  $\lozenge \neg W(t_1)\wedge \dots \wedge \lozenge \neg W(t_n)$, where $W$ stands for `is a winner'. We also know the disjunction $W(t_1)\vee\dots \vee W(t_n)$. Putting these two together, we have \[\big(\lozenge \neg W(t_1)\wedge \dots \wedge\lozenge \neg W(t_n)\big)\wedge \big(W(t_1)\vee\dots\vee W(t_n)\big);\] in short, every ticket might not be a winner, but some ticket is a winner. Distributivity would allow us to infer \[\big(W(t_1)\wedge \lozenge \neg W(t_1)\big) \vee\dots \vee \big(W(t_n)\wedge \lozenge \neg W(t_n)\big),\] that is, `Some winning ticket might not be a winning ticket', as in \ref{winmight}. However, if distributivity is not valid, we cannot conclude \ref{winmight} from \ref{everymight}. 
We will set aside quantification in this paper, but this example illustrates the importance of distributive reasoning---and identifying its failures---across a variety of examples involving epistemic modals.

\subsection{Disjunctive syllogism}

A closely related point is that disjunctive syllogism intuitively fails for epistemic modals (\citealt[citing Yalcin]{KlinedinstRothschildConnectives}). Disjunctive syllogism says that $\{p\vee q, \neg q\}\vDash p$. Hence the following inference, varying an example from Klinedinst and Rothschild, is valid if disjunctive syllogism is:

\ex. \a. Either the dog is inside or it must be outside.\label{doga}
\b. It's not the case that the dog must be outside.\label{dogb} 
\c. Therefore, the dog is inside. \label{dogc}

This inference is intuitively invalid. For \ref{doga} feels a lot like the corresponding tautology:

\ex. The dog is inside or outside.

And indeed, in any \textsf{E}-logic that validates De Morgan's laws (plus double negation elimination and duality for $\lozenge $ and $\Box$), \ref{doga} is a logical truth: $p\vee\Box \neg p$ is equivalent to $\neg (\neg p\wedge\lozenge p)$, which, if the logic is an \textsf{E}-logic, is equivalent to $\neg \bot$. Again, this seems intuitive: that is, sentences of the form $p\vee\Box\neg p$ do feel a lot like the corresponding tautologies $p\vee \neg p$, as in \ref{fatima} (an intuition that our system will capture by predicting that $p\vee\Box\neg p$ is indeed a tautology):

\ex.\label{fatima}
\a. Either Fatima is home or else she must be somewhere else.
 \b. Either John must be the culprit or else it isn't him. 
 
Thus, we know \ref{doga} just on the basis of the logic of epistemic modals. And \ref{dogb} (which is equivalent, by duality, to `the dog might be inside') can be true without  the dog in fact being inside. (Given that \ref{doga} has the status of a logical truth, and given duality, this is just the point, again, that  $\lozenge p$ does not entail $p$.) Hence disjunctive syllogism is not valid for a fragment including epistemic modals.

Given very weak assumptions, disjunctive syllogism follows from distributivity. So the intuitive failure of disjunctive syllogism gives us another reason to invalidate distributivity. 

\subsection{Orthomodularity}\label{OrthomodSec}

Another law of classical logic---and indeed of the weaker system of quantum logic (see, e.g., \citealt{Chiara2002})---that we have reason to invalidate is orthomodularity, according to which if $\varphi\vDash \psi$, then $ \psi\vDash  \varphi\vee(\neg \varphi\wedge \psi) $.\footnote{That $\varphi\vDash \psi$ implies $\varphi\vee(\neg \varphi\wedge \psi)\vDash \psi$ holds by disjunction and conjunction elimination, in which case orthomodularity can be stated as: if $\varphi\vDash \psi$, then $\psi \;\rotatebox[origin=c]{180}{$\vDash$}\vDash \varphi\vee(\neg \varphi\wedge \psi)$.}  It is easy to see that this is classically valid (indeed, we have $ \psi\vDash  \varphi\vee(\neg \varphi\wedge \psi) $ in classical logic, though the assumption of $\varphi\vDash\psi$ is needed in quantum logic). But, building on observations in \citealt{Dorr:2013}, we argue that it is invalid for epistemic modals.\footnote{Orthomodularity is a weakening of  \emph{modularity}, which says that if $\varphi\vDash\psi$ then $ (\varphi\vee\chi)\wedge \psi\Tautequiv \varphi\vee (\chi\wedge\psi)$. 
Hence related counterexamples  also show that modularity fails; for instance, as we argue presently, `It's raining' entails `It might be raining'; but `It's raining or it's not raining, and it might  be raining' is not plausibly equivalent to   `It's raining, or it's not raining and it might be raining'. See Remark \ref{remarkfitch} where related issues arise.}

We assume, again, that $p\vDash \lozenge p$. By contraposition of $\vDash$ and duality, this follows from the assumption that `must' is \textit{factive}, i.e., that $\Box p\vDash p$, which is widely (though not universally) accepted; see \citealt{MustStayStrong,Fintel:2016} for a variety of arguments for factivity, and see Footnote \ref{ptomight} for direct arguments that $p\vDash\lozenge p$.\footnote{\label{ptomight}Consider first proof by cases. The following seems valid:

\ex. Either John is home or he is at work. So he might be at home or he might be at work.

Likewise, Moorean constructions like \ref{mooremight} are infelicitous, which is naturally explained if $p\vDash \lozenge p$:

\ex. \label{mooremight} \#It's raining, but I don't know whether it might be raining.

Retraction data display a similar pattern:

\ex.The butler did it$\dots$.\a. \# I'm not sure whether the butler might have done it, but what I said earlier is true.
\b. \#I only said that the butler \emph{did it}, not that he \emph{might have} done it.

The inference in question also seems to preserve truth under upward-entailing attitude predicates:

\ex. John knows that Sue is at the party. So, John knows that Sue might be at the party.

Finally, the inference seems to be supported by Modus Ponens (which is generally accepted to be valid for conditionals with epistemic antecedents, even if, as \citet{McGee:1985} and followers maintain, it is not valid for non-Boolean consequents). Suppose Mary says to Mark:

\ex. If you might be sick, please let me know.\label{mary}

Now suppose Mark is in fact sick. In light of \ref{mary}, it seems that he is therefore obligated to let Mary know that he is sick.}  Given that $p\vDash \lozenge p$, orthomodularity says that  $\Diamond p\vDash p\vee (\neg p\wedge\Diamond p)$. But, if our logic is an \textsf{E}-logic,  the right disjunct of $p\vee (\neg p\wedge\Diamond p)$ is contradictory, so the disjunction is equivalent to $p$; likewise, $(\neg p\wedge\Diamond p)\vee p$ is predicted to be equivalent to $p$. For example, consider \ref{dorrpair}:

\ex. \label{dorrpair}\a. Either it isn't the butler but it might be, or else it's the butler.\label{dorr}
\b. It's the butler. \label{dorrb}

Intuitions here are somewhat unclear, because \ref{dorr} sounds  infelicitous---as we would expect a disjunction with a contradictory disjunct to sound. Nonetheless, as \citet{Dorr:2013} observe, \ref{dorr} feels intuitively equivalent to \ref{dorrb}. However, orthomodularity predicts that `It might be the butler' entails \ref{dorr}. Hence, given the contradictoriness of $\neg p\wedge\lozenge p$, orthomodularity would let us infer $p$ from $\lozenge p$, which again is unacceptable. 

Distributivity entails orthomodularity, so the intuitive failure of the latter gives us yet another reason to reject distributivity.

\subsection{Conservativity}\label{ConsSec}

Our last desideratum is a kind of methodological conservatism. Where there is not a specific argument for the failure of a classically valid inference pattern, we should validate it, on the presumption that it is indeed valid. 
It is of course harder to argue for the validity of a schema than against it; arguing that a principle is invalid only requires a single convincing counterexample, whereas we cannot ever look at every instance of a schema by way of arguing for its validity. Still, we think that as a methodological principle it is sensible to proceed by minimally altering classical logic in light of counterexamples and examining the result. Thus, we will look for a minimal variation on classical (modal) logic that can meet the desiderata above---that is, which is an \textsf{E}-logic and invalidates distributivity and even orthomodularity for the modal fragment, but still validates these, along with all other classical laws, for the non-modal fragment. Across the modal fragment, it will retain many of the central classical principles, including non-contradiction, excluded middle, the introduction and elimination rules for conjunction and disjunction, and De Morgan's laws, all of which have been invalidated by various theories of epistemic modals.  Moreover, our logic will be fully classical over parts of the modal fragment restricted to the same ``epistemic level''---essentially, fragments built up out of sentences with uniform levels of nesting of modal operators---limiting non-classicality to the part of the modal fragment that involves combinations of  sentences across different epistemic levels. 

To be clear, we do not have an argument that the logic we present is \emph{the} minimal variation on classical logic that is an \textsf{E}-logic and invalidates the patterns that  intuitively fail for epistemic modals. There are \textsf{E}-logics which validate strictly more classical patterns than ours; and there may turn out to be a case for adopting some such extension of our logic (as we will briefly discuss in \S~\ref{EpExtSection}).\footnote{In terms defined in Section \ref{algebraic}, there are orthologics stronger than the minimal orthologic in which orthomodularity is still not derivable (see \citealt{Harding1988}), and such logics can be extended to epistemic orthologics in the sense of Definition \ref{EODef}. Though we have arguments against some of these stronger logics, we have no sweeping argument against all stronger logics. In \S~\ref{EpExtSection}, we raise the possibility of strengthening our logic with modal principles.} But as we will discuss in \S~\ref{Comparisons}, our approach preserves a lot more of classical logic than any other \textsf{E}-logic we know of in the literature. Among other things, we hope that our discussion will spur exploration of other \textsf{E}-logics extending ours.

\section{Algebraic semantics\label{algebraic}}

In this section, we begin our development of formal semantics for epistemic modals. We start with a rather abstract \textit{algebraic semantics}. A model in such a semantics associates with each formula of the formal language an element in an \textit{algebra of propositions}. This is familiar from \textit{possible world semantics} for classical propositional logic, wherein a model associates with each formula a set of possible worlds, i.e., an element of the powerset algebra $\mathcal{P}(W)$ arising from the set $W$ of worlds. The meanings of `and', `or', and `not' are given by the operations of intersection, union, and complementation, respectively, in the powerset algebra $\mathcal{P}(W)$. The  only difference between possible world semantics and algebraic semantics for classical propositional logic is that in the latter, we allow Boolean algebras other than powerset algebras. Instead of associating each formula with an element of $\mathcal{P}(W)$ for some set $W$, we associate with each formula an element of an arbitrary Boolean algebra $B$, which comes equipped with operations $\neg$, $\wedge$, and $\vee$ used to interpret `not', `and', and `or' (see \S~\ref{AlgOrtho}). The powerset algebra $\mathcal{P}(W)$ is just one concrete example of a Boolean algebra. But for the purposes of classical propositional logic, there is no loss of generality in working only with powerset algebras.  

Possible world semantics for normal modal logic can also be viewed as a concrete version of an algebraic semantics. A set $W$ together with a binary accessibility relation $R$ on $W$ gives us not only the powerset algebra $\mathcal{P}(W)$ but also a modal operator $\Diamond$ on $\mathcal{P}(W)$ defined for $A\in\mathcal{P}(W)$ by 
\[\Diamond A=\{w\in W\mid \exists v: wRv\mbox{ and }v\in A\}.\]
A more abstract algebraic semantics uses an arbitrary Boolean algebra equipped with a modal operator, called a \textit{Boolean algebra with operator} (BAO). In the context of modal logic, there \textit{is} a loss of generality in working only with powerset algebras, as not all normal modal logics can be given possible world semantics as above (see \citealt{Litak2019} and references therein). But the normal modal logics discussed in connection with natural language semantics typically can be handled by possible world semantics.

The logic we are after in this paper is \textit{non-classical}. Thus, we cannot use powerset algebras, as in possible world semantics, or  Boolean algebras more generally, as in algebraic semantics for classical logic. Instead, we will use a more general class of algebras, known as \textit{ortholattices}, in which it is possible to invalidate classical laws such as distributivity. We will add modal operators to ortholattices to give an algebraic semantics for an \textit{epistemic orthologic}. Then in \S~\ref{PossSem}, we will give a more concrete semantics, known as \textit{possibility semantics}, that stands to ortholattice semantics as possible world semantics stands to Boolean algebraic semantics. 

\subsection{Review of algebraic semantics for orthologic}\label{AlgOrtho}

In this section, we review ortholattices and their associated \textit{orthologic}. Ortholattices have long been important objects of study in lattice theory (see \citealt[\S~14]{Birkhoff1967}), examples of which arise as algebras of events in quantum mechanics (\citealt{Birkhoff1936}). We will argue that ortholattices also arise as algebras of propositions in a language with epistemic modals. Those familiar with basic lattice theory and ortholattices may skip straight to \S~\ref{AddEpistemics1}, where we add epistemic modals to the picture.

\subsubsection{Ortholattices}

First, we recall the definition of a lattice, where we think of $A$ as a set of propositions and $\vee$ and $\wedge$ as the operations of disjunction and conjunction, respectively. 

\begin{definition}\label{BoundLat} A \textit{lattice} is a tuple $L=\langle A,\vee,\wedge\rangle$ where $A$ is a nonempty set and $\vee$ and $\wedge$ are binary operations on $A$ such that the following equations hold for all $a,b,c\in A$ and $\circ\in \{\vee,\wedge\}$:
\begin{center}
\begin{minipage}{2.75in}
\begin{itemize}
\item idempotence: $a\circ a= a$;
\item commutativity: $a\circ b = b\circ a$;
\end{itemize}
\end{minipage}\begin{minipage}{2.75in}
\begin{itemize}
\item associativity: $a\circ (b\circ c)= (a\circ b)\circ c$;
\item absorption: $a\wedge (a\vee b)=a\vee (a\wedge b)=a$.
\end{itemize}
\end{minipage}
\end{center}
We define a binary relation $\leq$ on $A$, called the \textit{lattice order of $L$}, by: $a\leq b$ if and only if $a=a \wedge b$. \end{definition}

It is easy to check that $a\leq b$ is also equivalent to $a\vee b = b$ (using absorption and commutativity). Moreover, $\leq$ is a partial order (i.e., reflexive, transitive, and antisymmetric). Indeed, let us recall the order-theoretic definition of a lattice as a partially ordered set. Given a partial order  $\leqslant$  on a set $A$,
\begin{itemize}
\item an \textit{upper bound} of a subset $X\subseteq A$ is a $y\in A$ such that $x\leqslant y$ for every $x\in X$;
\item a \textit{least upper bound} of $X$ is an upper bound $y$ of $X$ such that $y\leqslant z$ for every upper bound $z$ of $X$.
\end{itemize} 
If there is a least upper bound of $X$, it is unique by the antisymmetry of $\leqslant$. The notion of a \textit{greatest lower bound} is defined dually. Then a partially ordered set is a lattice if every nonempty finite subset has both a least upper bound and a greatest lower bound. From such a partially ordered set, we obtain a lattice $\langle A,\vee,\wedge\rangle$ in the sense of Definition \ref{BoundLat} by defining $a\vee b$ to be the least upper bound of $\{a,b\}$ and $a\wedge b$ to be the greatest lower bound of $\{a,b\}$. Conversely, given a lattice in the sense of Definition \ref{BoundLat}, the partially ordered set $\langle A,\leq\rangle$ with $\leq$ defined as in Definition \ref{BoundLat} is a lattice in the order-theoretic sense. Hence we may think of lattices in terms of either the equational or order-theoretic definition. Finally, a lattice is \textit{complete} if every subset has both a least upper bound and a greatest lower bound. Every finite lattice is complete but there are infinite lattices that are not complete.

We will assume that our lattices have bounds, corresponding to the contradictory proposition $\bot$ and the trivial proposition $\top$.

\begin{definition}A \textit{bounded lattice} is a tuple $L=\langle A,\vee,0,\wedge,1\rangle$ where $\langle A,\vee,\wedge\rangle$ is a lattice and $0$ and $1$ are elements of $A$ such that for all $a\in A$, we have
\begin{itemize}
 \item boundedness: $a\vee 0 =a$ and $a\wedge 1 =a$.
\end{itemize}
\end{definition}
\noindent Order-theoretically, a lattice is bounded if its order has a minimum element and a maximum element.

Adding an operation $\neg$ for negation finally brings us to the definition of an ortholatticce.

\begin{definition}\label{OrtholatticeDef} An \textit{ortholattice} is a tuple $\langle A,\vee,0,\wedge,1,\neg\rangle $ where $\langle A,\vee,0,\wedge,1\rangle$ is a bounded lattice and $\neg$ is a unary operation on $A$, called an  \textit{orthocomplementation}, that satisfies:
\begin{enumerate}
\item\label{OrthoDef1} complementation: for all $a\in A$, $a\vee\neg a=1$ and $a\wedge\neg a=0$;
\item\label{OrthoDef2} involution: for all $a\in A$, $\neg\neg a=a$;
\item\label{OrthoDef3} order-reversal: for all $a,b\in A$, if $a\leq b$, then $\neg b\leq \neg a$.
\end{enumerate}
An equivalent definition replaces \ref{OrthoDef3} with (either one of) \textit{De Morgan's laws}:
\begin{itemize}
\item for all $a,b\in A$, $\neg (a\vee b)=\neg a\wedge\neg b$;
\item for all $a,b\in A$, $\neg (a\wedge b) = \neg a\vee\neg b$.
\end{itemize}

\end{definition}

The difference between arbitrary ortholattices and the Boolean algebras of classical logic is that ortholattices need not obey the distributive law, as we will see in Examples \ref{OrthoExample} and \ref{KeyEx}.

\begin{definition}
A \textit{Boolean algebra} is a tuple $\langle A,\vee,0,\wedge,1,\neg\rangle $ where $\langle A,\vee,0,\wedge,1\rangle$ is a bounded lattice, $\neg$ is a unary operation on $A$ satisfying complementation as in Definition \ref{OrtholatticeDef}.\ref{OrthoDef1}, and the \textit{distributive laws} hold:
\begin{itemize}
\item for all $a,b,c\in A$, $a\wedge (b\vee c)= (a\wedge b)\vee (a\wedge c)$;
\item  for all $a,b,c\in A$, $a\vee (b\wedge c)= (a\vee b)\wedge (a\vee c)$.\footnote{In fact, a lattice satisfies the first bullet point if and only if it satisfies the second, so including both is redundant.}
\end{itemize}
\end{definition}
\noindent It is straightforward to prove that every Boolean algebra is an ortholattice.

The following weakening of distributivity is important in the study of ortholattices arising in quantum mechanics (see, e.g., \citealt{Chiara2002}).

\begin{definition}An \textit{orthomodular lattice} is an ortholattice satisfying the orthomodular law:
\begin{itemize}
\item  for all $a,b\in A$, $a\vee (\neg a\wedge (a\vee b)) = a\vee b$.
\end{itemize}
Or equivalently, for all $a,c\in A$, if $a\leq c$, then  $c\leq a\vee (\neg a\wedge c)$ (equivalently, $c=a\vee (\neg a\wedge c)$).
\end{definition}

Associated with the difference between Boolean algebras and arbitrary ortholattices concerning distributivity is a difference concerning the interaction of conjunction, contradiction, and negation, given by the following standard fact.

\begin{lemma}\label{PseudoLem} In a Boolean algebra, $\neg$ is \textit{pseudocomplementation}: for all $a,b\in A$, $a\wedge b=0$ implies $b\leq\neg a$, so $\neg a$ is the greatest element with respect to $\leq$ of the set $\{b\in A\mid a\wedge b=0\}$.\footnote{This fact also holds for Heyting algebras, which provide algebraic semantics for \textit{intuitionistic logic}, where $\neg a := a\to 0$.}
\end{lemma}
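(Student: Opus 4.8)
The plan is to derive the implication $a\wedge b=0 \Rightarrow b\leq \neg a$ directly from the complementation law and distributivity, and then read off the ``greatest element'' characterization as an immediate consequence. The whole argument is short; the only substantive ingredient is distributivity, which is exactly the law that separates Boolean algebras from general ortholattices.

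First I would prove the inequality. The key idea is to rewrite $b$ by inserting the unit $1=a\vee\neg a$ (complementation) and then distributing the conjunction across it. That is, I would compute $b = b\wedge 1 = b\wedge(a\vee\neg a) = (b\wedge a)\vee(b\wedge\neg a)$, where the last step uses distributivity. By commutativity and the hypothesis $a\wedge b=0$, the first disjunct is $0$, and boundedness gives $0\vee(b\wedge\neg a)=b\wedge\neg a$. Hence $b=b\wedge\neg a$, which is precisely the statement $b\leq\neg a$ by the definition of the lattice order in Definition \ref{BoundLat}.

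For the second claim, I would observe that $\neg a$ itself lies in the set $S=\{b\in A\mid a\wedge b=0\}$, since $a\wedge\neg a=0$ by complementation, while the inequality just established shows that $b\leq\neg a$ for every $b\in S$, i.e.\ that $\neg a$ is an upper bound of $S$. An element that both belongs to a set and is an upper bound of it is its greatest element, so $\neg a$ is the greatest element of $S$ with respect to $\leq$.

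The main (indeed only) obstacle is the appeal to distributivity in the first step. In an arbitrary ortholattice the identity $b\wedge(a\vee\neg a)=(b\wedge a)\vee(b\wedge\neg a)$ need not hold, and correspondingly the inequality $b\leq\neg a$ can fail even when $a\wedge b=0$; this is why pseudocomplementation is a feature of Boolean (and Heyting) algebras rather than of ortholattices in general, and it is precisely the gap that the paper's non-classical semantics exploits. Every other step --- complementation, commutativity, boundedness, and the equivalence between $b=b\wedge\neg a$ and $b\leq\neg a$ --- is available in any ortholattice, so no care beyond bookkeeping is needed there.
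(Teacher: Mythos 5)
Your proof is correct, and it is the standard argument: the paper itself states this lemma as a known fact without proof, so there is nothing to diverge from. Your use of distributivity to expand $b = b\wedge(a\vee\neg a) = (b\wedge a)\vee(b\wedge\neg a)$, together with the observation that this is exactly the step unavailable in a general ortholattice, is precisely the right way to see both why the lemma holds for Boolean algebras and why the paper's non-classical setting can escape it.
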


Crucially, the orthocomplementation in an ortholattice is not necessarily  pseudocomplementation. In fact, the orthocomplementation being pseudocomplementation implies that the ortholattice is Boolean.

\begin{proposition}\label{PseudoToBoole} For any ortholattice $L$, the following are equivalent:
\begin{enumerate}
\item\label{PseudoToBoole1}  $L$ is a Boolean algebra;
\item\label{PseudoToBoole1.5}  $L$ is distributive;
\item\label{PseudoToBoole2}  the orthocomplementation operation in $L$ is pseudocomplementation.
\end{enumerate}
\end{proposition}
\begin{proof} The equivalence of \ref{PseudoToBoole1} and \ref{PseudoToBoole1.5} is straightforward, as noted above.

From \ref{PseudoToBoole1.5} to \ref{PseudoToBoole2}, we show that distributivity implies pseudocomplementation over ortholattices. Suppose $a\wedge b =0$, so $a\wedge b\leq \neg b$. Then since $a\wedge \neg b\leq\neg b$, we have $(a\wedge b)\vee (a\wedge\neg b)\leq \neg b$. Finally, by distributivity, $a\wedge (b\vee\neg b)\leq (a\wedge b)\vee (a\wedge\neg b)$, so $a\leq\neg b$.

From \ref{PseudoToBoole2} to \ref{PseudoToBoole1.5}, we show that pseudocomplementation implies distributivity over ortholattices. First note that pseudocomplementation implies disjunctive syllogism: $(x\vee y)\wedge \neg x\leq y$. For ${(x\vee y)\wedge \neg x\wedge \neg y\leq 0}$ by De Morgan's laws 
and complementation, so $(x\vee y)\wedge\neg x\leq y$ by pseudocomplementation and involution. Now for distributivity, we have
\begin{eqnarray*}
&& a \wedge( b\vee c)\wedge (\neg a \vee \neg b)\leq a\wedge(b\vee c)\wedge \neg b\leq a\wedge c \quad\mbox{using disjunctive syllogism twice}\\
&\Rightarrow & a\wedge(b\vee c)\wedge(\neg a\vee\neg b)\wedge\neg (a\wedge c)\leq  \neg(a\wedge c)\wedge (a\wedge c) \leq 0 \quad\mbox{} \\
&\Rightarrow& a\wedge (b\vee c) \leq \neg ((\neg a\vee\neg b)\wedge\neg (a\wedge c))\quad\mbox{by pseudocomplementation} \\
&\Rightarrow& a\wedge (b\vee c) \leq (a\wedge b)\vee (a \wedge c) \quad\mbox{by De Morgan's and involution}.\qedhere
\end{eqnarray*} 
\end{proof}

\noindent Thus, we obtain a three-way equivalence for any ortholattice $L$ between being a Boolean algebra, being distributive, and having its orthocomplementation be pseudocomplementation.

\begin{example}\label{OrthoExample} Figure \ref{OrthoEx} shows Hasse diagrams of the ortholattices $\mathbf{O}_6$ and $\mathbf{MO}_2$. Recall that in a Hasse diagram of a lattice with lattice order $\leq$, a line segment going \textit{upward} from $x$ to $y$ means that $x\leq y$ and there is no third element $z$ with $x\leq z\leq y$. Observe that $\mathbf{O}_6$ is not orthomodular and hence not distributive. For $a\leq b$ and yet $a\vee (\neg a\wedge b)= a\vee 0=a\neq b$. Also note that the orthocomplementation  is not pseudocomplementation: $\neg a\wedge b = 0$ and yet $ b\not\leq \neg\neg a=a$. $\mathbf{MO}_2$ is orthomodular\footnote{In fact, it is a \textit{modular} lattice, meaning, again, that for all $a,b,c\in A$,  if $a\leq c$, then $a\vee (b\wedge c)=(a\vee b)\wedge c$.} but it is not distributive: $a\wedge (\neg a\vee b)=a\wedge 1= a\neq  0 = 0\vee 0 = (a\wedge \neg a)\vee (a\wedge b)$. Also note that orthocomplementation is not pseudocomplementation: $a\wedge b=0$ and yet $b\not\leq \neg a$.

\begin{figure}[h]
\begin{center}
\begin{tikzpicture}[->,>=stealth',shorten >=1pt,shorten <=1pt, auto,node
distance=2cm,semithick,every loop/.style={<-,shorten <=1pt}]
\tikzstyle{every state}=[fill=gray!20,draw=none,text=black]
\node[circle,draw=black!100,fill=black!100, label=below:$\textcolor{red}{0}$,inner sep=0pt,minimum size=.175cm] (0) at (0,0) {{}};
\node[circle,draw=black!100,fill=black!100, label=left:$\textcolor{red}{a}$,inner sep=0pt,minimum size=.175cm] (a) at (-.75,.75) {{}};
\node[circle,draw=black!100,fill=black!100, label=right:$\textcolor{red}{\neg b}$,inner sep=0pt,minimum size=.175cm] (b) at (.75,.75) {{}};
\node[circle,draw=black!100,fill=black!100, label=left:$\textcolor{red}{b}$,inner sep=0pt,minimum size=.175cm] (1l) at (-.75,1.5) {{}};
\node[circle,draw=black!100,fill=black!100, label=right:$\textcolor{red}{\neg a}$,inner sep=0pt,minimum size=.175cm] (1r) at (.75,1.5) {{}};
\node[circle,draw=black!100,fill=black!100, label=above:$\textcolor{red}{1}$,inner sep=0pt,minimum size=.175cm] (new1) at (0,2.25) {{}};

\path (new1) edge[-] node {{}} (1l);
\path (new1) edge[-] node {{}} (1r);
\path (1l) edge[-] node {{}} (a);
\path (1r) edge[-] node {{}} (b);
\path (a) edge[-] node {{}} (0);
\path (b) edge[-] node {{}} (0);

\node[circle,draw=black!100,fill=black!100, label=above:$\textcolor{red}{1}$,inner sep=0pt,minimum size=.175cm] (1) at (6,2) {{}};
\node[circle,draw=black!100,fill=black!100, label=left:$\textcolor{red}{a}$,inner sep=0pt,minimum size=.175cm] (x) at (4,1) {{}};
\node[circle,draw=black!100,fill=black!100, label=right:$\textcolor{red}{\neg a}$,inner sep=0pt,minimum size=.175cm] (y) at (5,1) {{}};
\node[circle,draw=black!100,fill=black!100, label=left:$\textcolor{red}{b}$,inner sep=0pt,minimum size=.175cm] (y') at (7,1) {{}};
\node[circle,draw=black!100,fill=black!100, label=right:$\textcolor{red}{\neg b}$,inner sep=0pt,minimum size=.175cm] (z) at (8,1) {{}};
\node[circle,draw=black!100,fill=black!100, label=below:$\textcolor{red}{0}$,inner sep=0pt,minimum size=.175cm] (0) at (6,0) {{}};
\path (1) edge[-] node {{}} (y);
\path (1) edge[-] node {{}} (y');
\path (1) edge[-] node {{}} (x);
\path (1) edge[-] node {{}} (z);
\path (x) edge[-] node {{}} (0);
\path (y) edge[-] node {{}} (0);
\path (y') edge[-] node {{}} (0);
\path (z) edge[-] node {{}} (0);

\end{tikzpicture}
\end{center}
\caption{Hasse diagrams of the ortholattices $\mathbf{O}_6$ (left) and $\mathbf{MO}_2$ (right).}\label{OrthoEx}
\end{figure}
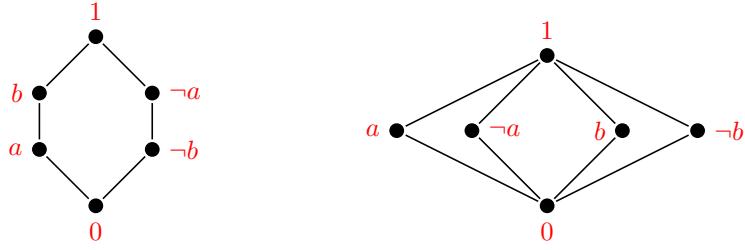
\end{example}

\subsubsection{Language and consequence}\label{LangCons1}

We can use ortholattices to interpret a basic propositional logical language.

\begin{definition}\label{LDef} Let $\mathcal{L}$ be the language generated by the grammar
\[\varphi::= \top\mid p\mid \neg\varphi\mid (\varphi\wedge\varphi)\]
where $p$ belongs to a countably infinite set $\mathsf{Prop}$ of propositional variables. 
\end{definition}

We define  $\varphi\vee\psi:=\neg(\neg\varphi\wedge\neg\psi)$, a definition justified by the fact that ortholattices satisfy De Morgan's laws and involution. Note that we use the same symbols for the connectives of $\mathcal{L}$ and the operations in ortholattices, trusting that no confusion will arise.

As usual in algebraic semantics, we interpret propositional variables as elements of an algebra and then extend the interpretation to all formulas of the language recursively.

\begin{definition} A \textit{valuation} on an ortholattice $\langle A,\vee,0,\wedge,1,\neg\rangle$ is a map $\theta: \mathsf{Prop}\to A$. Such a $\theta$ extends to $\tilde{\theta}:\mathcal{L}\to A$  by: $\tilde{\theta}(\top)=1$, $\tilde{\theta}(\neg\varphi)=\neg\tilde{\theta}(\varphi)$, and $\tilde{\theta}(\varphi\wedge\psi)=\tilde{\theta}(\varphi)\wedge \tilde{\theta}(\psi)$.
\end{definition}

Also as usual, we say that $\psi$ is a semantic consequence of $\varphi$ if the semantic value of $\varphi$ is always below the semantic value of $\psi$ in the lattice order $\leq$ (which one may now view as an entailment relation, just as the subset relation is viewed as an entailment relation between propositions in possible world semantics).

\begin{definition}\label{AlgCon0} Given a class $\mathbf{C}$ of ortholattices, we define the semantic consequence relation $\vDash_\mathbf{C}$, a binary relation on $\mathcal{L}$, as follows: $\varphi\vDash_\mathbf{C}\psi$ if for all $L\in\mathbf{C}$ and valuations $\theta$ on $L$, we have $\tilde{\theta}(\varphi)\leq \tilde{\theta}(\psi)$, where $\leq$ is the lattice order of $L$.
\end{definition}
We can similarly define a consequence relation between a set of premises on the left and a single conclusion on the right: $\Gamma\vDash_\mathbf{C}\psi$ if for all $L\in\mathbf{C}$, valuations $\theta$ on $L$, and $a\in L$, if $a$ is a lower bound of $\{\tilde{\theta}(\varphi) \mid \varphi\in\Gamma\}$, then $a\leq \tilde{\theta}(\psi)$. But for simplicity we will only consider finite sets of premises here, in which case a single premise on the left suffices given that $\mathcal{L}$ contains a conjunction interpreted as meet.

\subsubsection{Logic, soundness and completeness}\label{Log1}

Axiomatizing the semantic consequence relation of Definition \ref{AlgCon0} is straightforward.

\begin{definition}[\citealt{Goldblatt1974}]\label{OrthoLogicDef} An \textit{orthologic} is a binary relation $\vdash$ on the set $\mathcal{L}$ of formulas such that for all $\varphi,\psi,\chi\in \mathcal{L}$:
\begin{center}
\begin{tabular}{ll}
1. $\varphi\vdash\top$; & 6.  $\neg\neg \varphi\vdash\varphi$; \\
2. $\varphi\vdash\varphi$; & 7. $\varphi\wedge\neg\varphi\vdash\psi$;\\
3. $\varphi\wedge\psi\vdash\varphi$; \qquad\qquad &8. if $\varphi\vdash\psi$ and $\psi\vdash\chi$, then $\varphi\vdash\chi$;\\
4. $\varphi\wedge\psi\vdash\psi$; &9. if $\varphi\vdash\psi$ and $\varphi\vdash\chi$, then $\varphi\vdash \psi\wedge\chi$; \\
5.  $\varphi\vdash\neg\neg\varphi$; & 10. if $\varphi\vdash\psi$, then $\neg\psi\vdash\neg\varphi$.\\
\end{tabular}
\end{center}
A \textit{theorem} of the orthologic $\vdash$ is a formula $\varphi$ such that $\top\vdash\varphi$. 
As the intersection of orthologics is clearly an orthologic, there is a smallest orthologic, denoted \textsf{O} or $\vdash_\mathsf{O}$. 
\end{definition}

Note that with $\varphi\vee\psi$ defined as $\neg(\neg\varphi\wedge\neg\psi)$, we get the introduction and elimination rules for disjunction:
\begin{itemize}
\item $\varphi\vdash \varphi\vee\psi$.
\item if $\varphi\vdash\chi$ and $\psi\vdash\chi$, then $\varphi\vee\psi\vdash\chi$.
\end{itemize}
 For the first, by rule 3, we have $\neg\varphi\wedge\neg\psi\vdash \neg\varphi$, so by rule 10, we have $\neg\neg\varphi\vdash \varphi\vee\psi$, which with rules 5 and 8 yields $\varphi\vdash \varphi\vee\psi$. For the second, if $\varphi\vdash\chi$ and $\psi\vdash\chi$, then $\neg\chi\vdash \neg\varphi$ and $\neg\chi\vdash\neg\psi$ by rule 10, which implies $\neg\chi\vdash \neg\varphi\wedge\neg\psi$ by rule 9, which implies $\varphi\vee\psi \vdash \neg\neg\chi$ by rule 10 and hence $\varphi\vee\psi \vdash \chi$ by rules 6 and 8.
 
 However, we do \textit{not} have 
 pseudocomplementation or distributivity, either of which would collapse \textsf{O} to classical logic (recall Proposition \ref{PseudoToBoole}). Nor do we have
 what might be called \textit{proof by cases with side assumptions}:
 \begin{itemize}
 \item if $\xi\wedge \varphi\vdash \chi$ and $\xi\wedge\psi\vdash\chi$, then $\xi\wedge (\varphi\vee\psi)\vdash \chi$.
 \end{itemize}
This would allow the derivation of distributivity, since disjunction introduction and proof by cases with side assumptions yield

\[\xi\wedge \varphi \vdash (\xi\wedge \varphi)\vee (\xi\wedge \psi)\mbox{ and } \xi\wedge \psi\vdash (\xi\wedge \varphi)\vee (\xi\wedge \psi),\mbox{ so } \xi\wedge (\varphi\vee \psi) \vdash (\xi\wedge \varphi)\vee (\xi\wedge \psi).\] 

A completeness theorem can now be proved using standard techniques of algebraic logic.

\begin{theorem}\label{AlgComp1} The logic $\mathsf{O}$ is sound and complete with respect to the class $\mathbf{O}$ of all ortholattices according to the consequence relation of Definition \ref{AlgCon0}: for all $\varphi,\psi\in\mathcal{L}$, we have $\varphi\vdash_\mathsf{O}\psi$ if and only if $\varphi\vDash_\mathbf{O}\psi$.
\end{theorem}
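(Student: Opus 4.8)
The plan is to prove the two directions separately, with soundness being an immediate verification and completeness going through the Lindenbaum--Tarski construction.

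For soundness, I would show that $\vDash_\mathbf{O}$ is itself an orthologic in the sense of Definition \ref{OrthoLogicDef}. Since $\vdash_\mathsf{O}$ is by definition the smallest orthologic, this at once yields $\varphi\vdash_\mathsf{O}\psi \Rightarrow \varphi\vDash_\mathbf{O}\psi$. Checking that $\vDash_\mathbf{O}$ satisfies rules 1--9 is a routine translation of the lattice axioms applied uniformly across all $L\in\mathbf{O}$ and all valuations: rule 1 is reflexivity of $\leq$; rules 2 and 3 hold because $a\wedge b$ is a lower bound of $\{a,b\}$; rules 4 and 5 together encode involution $\neg\neg a=a$; rule 6 holds because $a\wedge\neg a=0\leq b$ for every $b$ by complementation; rule 7 is transitivity of $\leq$; rule 8 holds because $a\wedge b$ is the \emph{greatest} lower bound; and rule 9 is exactly order-reversal of the orthocomplementation. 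Hence $\vDash_\mathbf{O}$ is an orthologic.

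For completeness, I would construct the Lindenbaum--Tarski algebra. Define $\varphi\equiv\psi$ iff $\varphi\vdash_\mathsf{O}\psi$ and $\psi\vdash_\mathsf{O}\varphi$; by rules 1 and 7 this is an equivalence relation. Writing $[\varphi]$ for the class of $\varphi$, set $[\varphi]\wedge[\psi]:=[\varphi\wedge\psi]$ and $\neg[\varphi]:=[\neg\varphi]$, with $0:=[\varphi_0\wedge\neg\varphi_0]$ and $1:=\neg 0$. I would then verify that $\equiv$ is a congruence for $\wedge$ and $\neg$---so that these operations are well defined on classes---and that the resulting structure $L_\mathsf{O}$ is an ortholattice whose lattice order satisfies $[\varphi]\leq[\psi]$ iff $\varphi\vdash_\mathsf{O}\psi$. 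Taking the canonical valuation $\theta(p)=[p]$, an easy induction on formulas gives $\tilde{\theta}(\varphi)=[\varphi]$. Then $\varphi\vDash_\mathbf{O}\psi$, applied to $L_\mathsf{O}$ under $\theta$, forces $[\varphi]=\tilde{\theta}(\varphi)\leq\tilde{\theta}(\psi)=[\psi]$, which unpacks to $\varphi\vdash_\mathsf{O}\psi$.

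The bulk of the work---and where I would be most careful---is confirming that $L_\mathsf{O}$ genuinely is an ortholattice. Congruence for $\wedge$ follows from rules 2, 3, 7, 8, and congruence for $\neg$ from rule 9; idempotence, commutativity, associativity, and absorption of $\wedge$ follow from rules 1, 2, 3, 7, 8 by deriving the relevant entailments in both directions. Boundedness uses rule 6 to show that $[\varphi_0\wedge\neg\varphi_0]$ is a minimum---whence all classes of the form $[\varphi\wedge\neg\varphi]$ coincide and equal $0$---together with order-reversal to show $\neg 0$ is a maximum. Complementation (that $a\wedge\neg a=0$ and $a\vee\neg a=1$), involution (rules 4, 5), and order-reversal (rule 9) then transfer directly. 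None of these steps is deep, but getting the congruence right and verifying that the syntactic relation $\vdash_\mathsf{O}$ coincides with the algebraic order $[\varphi]=[\varphi]\wedge[\psi]$ is the crux; everything else is a mechanical check against Definition \ref{OrthoLogicDef}.
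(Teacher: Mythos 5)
Your proposal is correct and takes essentially the same route as the paper: soundness by checking that $\vDash_\mathbf{O}$ satisfies the defining rules of an orthologic (hence contains the smallest orthologic $\vdash_\mathsf{O}$), and completeness via the Lindenbaum--Tarski algebra with the canonical valuation $\theta(p)=[p]$ and the order-characterization $[\varphi]\leq[\psi]$ iff $\varphi\vdash_\mathsf{O}\psi$. One small citation slip, not a gap: since $\vee$ is the \emph{defined} connective $\neg(\neg\varphi\wedge\neg\psi)$, verifying the absorption laws in the quotient also requires the derived disjunction introduction/elimination rules, which rest on rules 4, 5, and 9 in addition to rules 1, 2, 3, 7, 8.
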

\begin{proof} For soundness, it is easy to check that $\varphi\vdash_\mathsf{O}\psi$ implies $\varphi\vDash_\mathbf{O}\psi$. For completeness, recall the construction of the \textit{Lindenbaum-Tarski algebra $L$ of $\mathsf{O}$}: the underlying set of $L$ is the set of all equivalence classes of formulas of $\mathcal{L}$, where $\varphi$ and $\psi$ are equivalent if $\varphi\vdash\psi$ and $\psi\vdash\varphi$; then let $0=[\neg \top]$ and $1=[\top]$, and given equivalence classes $[\varphi]$ and $[\psi]$,  define $[\varphi]\vee[\psi]=[\varphi\vee\psi]$,  $[\varphi]\wedge[\psi]=[\varphi\wedge\psi]$, $\neg[\varphi]=[\neg\varphi]$ (that the choice of representatives does not matter is easily shown using the principles of \textsf{O}). Then where $\leq$ is the lattice order of $L$, we have $[\varphi]\leq[\psi]$ iff $\varphi\vdash_\mathsf{O}\psi$. It is easy to check that $L$ is an ortholattice. Moreover, where $\theta$ is the valuation with $\theta(p)=[p]$ for each $p\in\mathsf{Prop}$, an obvious induction shows $\tilde{\theta}(\varphi)=[\varphi]$ for each $\varphi\in\mathcal{L}$. Now if $\varphi\not\vdash_\mathsf{O}\psi$, then  $[\varphi]\not\leq [\psi]$ and hence $\tilde{\theta}(\varphi)\not\leq \tilde{\theta}(\psi)$, so $\varphi\nvDash_\mathbf{O}\psi$.\end{proof}

\begin{remark}\label{remarkfitch} A Fitch-style natural deduction system for the minimal orthologic in the signature $\top,\neg,\wedge,\vee$ can be obtained from a Fitch-style natural deduction system for classical logic in the same signature (cf.~\citealt{Fitch1952,Fitch1966}) by (i) restricting the rule of \textit{Reiteration} so that the formula occurrence to be reiterated and its reiterates must belong to the same column and the same subproofs,\footnote{In fact, Reiteration can be dropped entirely provided we understand a (sub)proof with a single formula $\varphi$ as a (sub)proof from the assumption $\varphi$ to conclusion $\varphi$. See \citealt{Holliday2023} for a rigorous formulation.}   (ii) keeping the introduction and elimination rules for $\wedge$ and $\vee$ the same,\footnote{We have in mind formulations of the introduction and elimination rules for $\wedge$ and $\vee$, as in \citealt{Fitch1952,Fitch1966} (but unlike some introductory logic texts), that do not themselves build in the power to draw previous formulas into subproofs; the Reiteration rule must always be used for this purpose.} and (iii) allowing $\neg$ Introduction  and Reductio ad Absurdum to apply when there is a contradiction between a formula derived in a subproof and a formula previously derived in the column in which the subproof immediately occurs. Such a proof system is shown in Figure \ref{FitchRules}. The motivation for restricting Reiteration can easily be seen when we consider epistemic modal propositions,  to be added in the next section, as in the following example: 
\[\begin{nd}
\hypo [1] {1} {\Diamond p} 
\hypo [2] {2} {(p\vee\neg p)} 
\open
\hypo [3] {3} {p}
\have [4] {4} {p\vee (\neg p\wedge\Diamond p)} \oi{3}
\close
\open
\hypo [5] {5} {\neg p}
\have [6] {6} {\Diamond p} \r{1}
\have [7] {7} {\neg p \wedge \Diamond p} \ai{5,6}
\have [8] {8} {p\vee (\neg p \wedge \Diamond p)} \oi{7}
\close
\have [9] {9} {p\vee (\neg p\wedge\Diamond p)} \oe{2,3-4, 5-8}
\end{nd}\]
\noindent The application of Reiteration on line 6 is obviously suspect: we should not be allowed to reiterate the assumption that \textit{might $p$} into a subproof where we have just supposed \textit{not} $p$! Moreover, if $q$ is a genuine propositional variable, standing in for any proposition (cf.~\S~\ref{DistinguishSec}), including an epistemic proposition such as \textit{might $p$}, then we cannot accept the analogous proof with $q$ in place of $\Diamond p$. However, if we restrict the Reiteration rule as suggested above, then all is well: where $\varphi\vdash_{\mathsf{FitchO}}\psi$ means that there is a Fitch-style proof of the conclusion $\psi$ from the assumption $\varphi$, Theorem \ref{AlgComp1} holds with $\vdash_{\mathsf{FitchO}}$ in place of $\vdash_\mathsf{O}$. For soundness, an easy induction shows that if there is a subproof or proof beginning with $\varphi$ and ending with $\psi$, then---thanks to the restriction on Reiteration---$\psi$ is in fact a semantic consequence of $\varphi$, i.e., $\varphi\vDash_\mathbf{O}\psi$. For completeness, it is easy to check that the principles of Definition \ref{OrthoLogicDef}, as well as the equivalence of $\varphi\vee\psi$ and $\neg(\neg\varphi\wedge\neg\psi)$, hold for $\vdash_{\mathsf{FitchO}}$, in which case the same style of proof as for Theorem \ref{AlgComp1} applies.\end{remark}

\begin{figure}
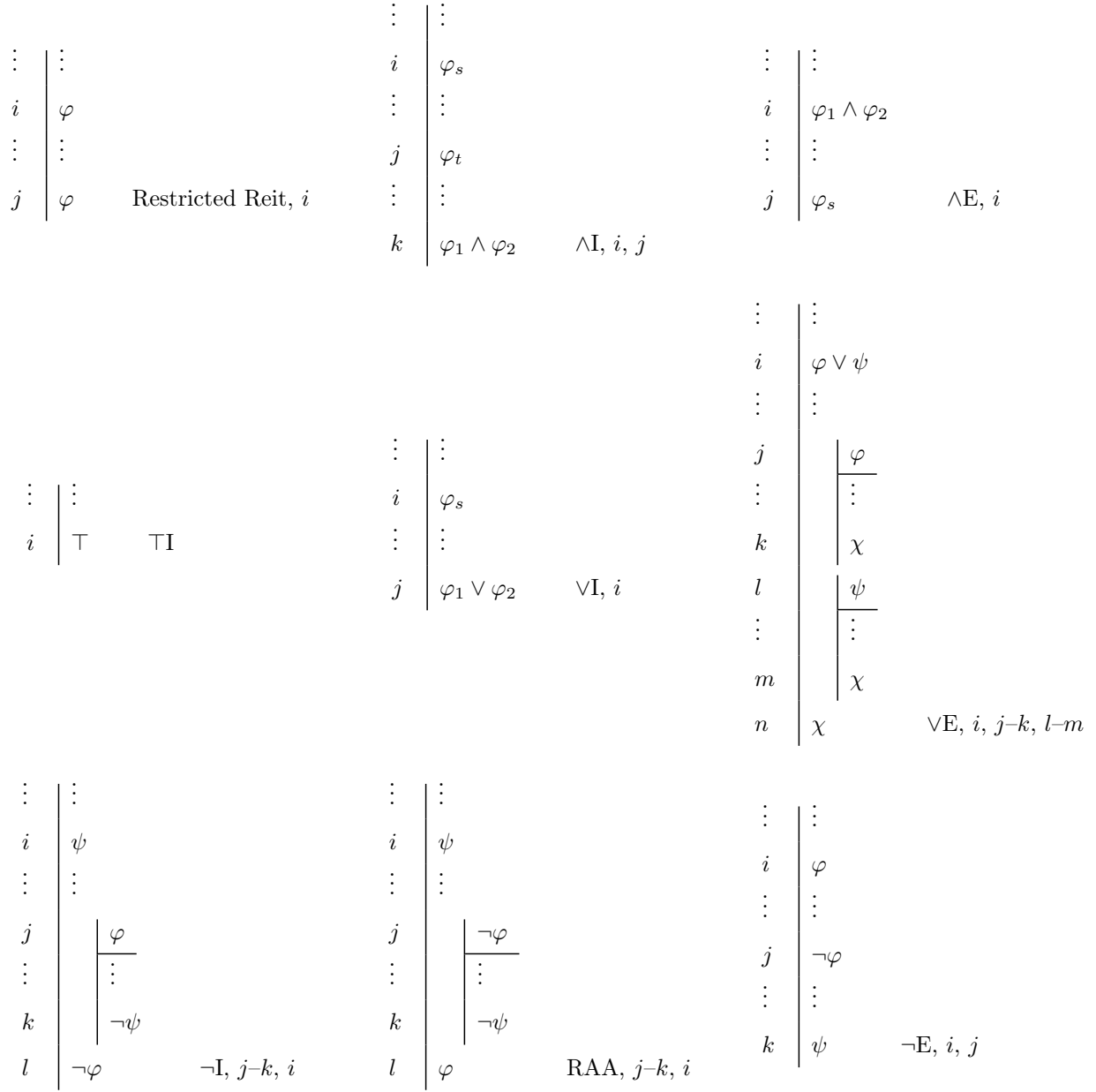

\begin{center}
\begin{minipage}{2.25in}
\[\begin{nd}
\have [\vdots] {1} {\vdots} 
\have [i] {2} {\varphi}
\have [\vdots] {3} {\vdots}
\have [j] {5} {\varphi}  \lr{2}
\end{nd}\]\end{minipage}\begin{minipage}{2.25in}
\[\begin{nd}
\have [\vdots] {4} {\vdots}
\have [i] {5}  {\varphi_s}
\have [\vdots] {8}   {\vdots}
\have [j] {9} {\varphi_t}
\have [\vdots] {10}   {\vdots}
\have [k] {11}   {\varphi_1\wedge\varphi_2}\ai{5,9}
\end{nd}\quad\;\;\;\]
\end{minipage}\begin{minipage}{2.25in}
\[\begin{nd}
\have [\vdots] {4} {\vdots}
\have [i] {5}  {\varphi_1\wedge\varphi_2}
\have [\vdots] {8}   {\vdots}
\have [j] {9} {\varphi_s} \ae{5}
\end{nd}\qquad\quad\,\]
\end{minipage}

\begin{minipage}{2.25in}
\[\begin{nd}
\have [\vdots] {1} {\vdots} 
\have [i] {2} {\top} \topi{}
\end{nd}\qquad\qquad\quad\,\]
\end{minipage}\begin{minipage}{2.25in}
\[\begin{nd}
\have [\vdots] {1} {\vdots} 
\have [i] {2} {\varphi_s}
\have [\vdots] {3} {\vdots} 
\have [j] {4} {\varphi_1\vee\varphi_2}\oi{2}
\end{nd}\qquad\;\;\;\]
\end{minipage}\begin{minipage}{2.25in}
\[\begin{nd}
\have [\vdots] {} {\vdots} 
\have [i] {0} {\varphi\vee\psi}
\have [\vdots] {1} {\vdots} 
\open
\hypo [j] {2} {\varphi}
\have [\vdots] {3} {\vdots}
\have [k] {5} {\chi}
\close
\open
\hypo [l] {7}  {\psi}
\have [\vdots] {8} {\vdots}
\have [m] {9} {\chi}
\close
\have [n] {n} {\chi} \oe{0,2-5,7-9}
\end{nd}\]
\end{minipage}

\begin{minipage}{2.25in}
\[\begin{nd}
\have [\vdots] {0} {\vdots}
\have [i] {3}   {\psi}
\have [\vdots] {}  {\vdots}
\open
\hypo [j] {1} {\varphi}
\have [\vdots] {4}   {\vdots}
\have [k] {5}   {\neg\psi}
\close
\have [l] {6} {\neg\varphi} \ni{1-5,3}
\end{nd}\,\]
\end{minipage}\begin{minipage}{2.25in}
\[\begin{nd}
\have [\vdots] {0} {\vdots}
\have [i] {3}   {\psi}
\have [\vdots] {4}   {\vdots}
\open
\hypo[j] {1} {\neg \varphi}
\have [\vdots] {}  {\vdots}
\have [k] {5}   {\neg\psi}
\close
\have [l]{6} {\varphi} \RAA{1-5,3}
\end{nd}
\]
\end{minipage}\begin{minipage}{2.25in}
\[\begin{nd}
\have [\vdots] {4} {\vdots}
\have [i] {5}  {\varphi}
\have [\vdots] {6} {\vdots}
\have [j] {7}  {\neg\varphi}
\have [\vdots] {8} {\vdots}
\have [k] {9}{\psi} \ne{5, 7}
\end{nd}\qquad\qquad\,\]
\end{minipage}
\end{center}
\caption{Rules of a Fitch-style proof system for the minimal orthologic, where $s,t\in \{1,2\}$.}\label{FitchRules}
\end{figure}

\subsection{Adding epistemic modality}\label{AddEpistemics1}

In this section, we extend the algebraic semantics of \S~\ref{AlgOrtho} to interpret the modals `must' and `might'.

\subsubsection{Modal and epistemic ortholattices}

We begin by extending ortholattices with a unary operation $\Box$. First, we define a baseline notion of a \textit{modal} ortholattice and then add additional conditions for epistemic ortholattices.

\begin{definition} A \textit{modal ortholattice} is a tuple $\langle A,\vee,0,\wedge,1,\neg,\Box\rangle$ where $\langle A,\vee,0,\wedge,1,\neg\rangle $ is an ortholattice and $\Box$ is a unary operation on $A$ satisfying:
\begin{itemize}
\item $\Box (a\wedge b)=\Box a\wedge\Box b$ for all $a,b\in A$, and $\Box 1=1$.
\end{itemize}
For $a\in A$, we define $\Diamond a=\neg\Box\neg a$.  \end{definition}

The following is easy to check.
\begin{lemma}\label{DiamondLem} In any modal ortholattice, we have:
\begin{itemize}
\item $\Diamond (a\vee b)=\Diamond a\vee\Diamond b$ for all $a,b\in A$, and $\Diamond 0=0$.
\end{itemize}
\end{lemma}

\noindent Indeed, we could have defined modal ortholattices as algebras $\langle A,\vee,0,\wedge,1,\neg,\Diamond\rangle$ where $\langle A,\vee,0,\wedge,1,\neg\rangle $ is an ortholattice and $\Diamond$ is a unary operation on $A$ satisfying the conditions of Lemma \ref{DiamondLem}. But later it will turn out to be more convenient to have $\Box$ as our primitive.

Now we can consider additional constraints on the $\Box$ operation, just as modal logic considers additional axioms on a $\Box$ modality (see, e.g., \citealt[p.~116]{Chagrov1997}). 
In particular, in order to view $\Box$ and $\Diamond$  as `must' and `might', we adopt two further constraints. First, corresponding to the factivity of `must', we have the following constraint.

\begin{definition} A \textit{$\mathsf{T}$ modal ortholattice} is a modal ortholattice also satisfying
 \begin{itemize}
 \item $\Box a\leq a$ for all $a\in A$.
 \end{itemize}
 \end{definition}
 Next comes the crucial constraint corresponding to Wittgenstein sentences being contradictions.
 \begin{definition}
 
An \textit{epistemic ortholattice} is a $\mathsf{T}$ modal ortholattice also satisfying 
 \begin{itemize}
 \item \textsf{Wittgenstein's Law}: $\neg a\wedge\Diamond a =0$ for all $a\in A$.
 \end{itemize}
\end{definition}
\noindent By the involution property of $\neg$ in an ortholattice, \textsf{Wittgenstein's Law} is equivalent to $a\wedge\Diamond \neg a =0$; and by the commutativity of $\wedge$ in a lattice, it is also equivalent to $\Diamond a\wedge \neg a =0$ (and $\Diamond \neg a\wedge  a =0$), in contrast to the consistency of $\Diamond a\wedge\neg a$ (and $\Diamond \neg a\wedge a$) in some dynamic systems (e.g.,~in \citealt{GSV:1996}; see \citealt{Benthem:1996}).

Some philosophers of language have argued for \textit{iteration principles} for `must' and `might', leading to the following additional constraints.

\begin{definition} An \textit{$\mathsf{S5}$ modal ortholattice} is a $\mathsf{T}$ modal ortholattice also satisfying:
\begin{itemize}
\item $\Box a\leq \Box\Box a$ for all $a\in A$;
\item $\Diamond a\leq \Box\Diamond a$ for all $a\in A$.
\end{itemize}
A \textit{$\mathsf{S5}$ epistemic ortholattice} is an $\mathsf{S5}$ modal ortholattice that is also an epistemic ortholattice.\end{definition}

\begin{figure}[h]
\begin{center}
\tikzset{every loop/.style={min distance=10mm,looseness=10}}
\begin{tikzpicture}[->,>=stealth',shorten >=1pt,shorten <=1pt, auto,node
distance=2.5cm,semithick]
\tikzstyle{every state}=[fill=gray!20,draw=none,text=black]
\node[circle,draw=black!100, fill=black!100, label=below:$\textcolor{red}{0}$,inner sep=0pt,minimum size=.175cm] (0) at (0,0) {{}};
\node[circle,draw=black!100, fill=black!100, label=right:$\textcolor{red}{\neg d}$,inner sep=0pt,minimum size=.175cm] (d) at (0,2) {{}};

\node[circle,draw=black!100, fill=black!100, label=below:$\textcolor{red}{a}$,inner sep=0pt,minimum size=.175cm] (a) at (-2.25,2) {{}};
\node[circle,draw=black!100, fill=black!100, label=below:$\textcolor{red}{\neg c}$,inner sep=0pt,minimum size=.175cm] (Nc) at (2.25,2) {{}};

\node[circle,draw=black!100, fill=black!100, label=left:$\textcolor{red}{b}$,inner sep=0pt,minimum size=.175cm] (b) at (-2.25,4) {{}};
\node[circle,draw=black!100, fill=black!100, label=right:$\textcolor{red}{\neg b}$,inner sep=0pt,minimum size=.175cm] (Nb) at (2.25,4) {{}};
\node[circle,draw=black!100, fill=black!100,label=above:$\textcolor{red}{c}$,inner sep=0pt,minimum size=.175cm] (c) at (-2.25,6) {{}};

\node[circle,draw=black!100, fill=black!100, label=above:$\textcolor{red}{\neg a}$,inner sep=0pt,minimum size=.175cm] (Na) at (2.25,6) {{}};

\node[circle,draw=black!100, fill=black!100, label=right:$\textcolor{red}{d}\;\;$,inner sep=0pt,minimum size=.175cm] (Nd) at (0,6) {{}};
\node[circle,draw=black!100, fill=black!100,label=above:$\textcolor{red}{1}$,inner sep=0pt,minimum size=.175cm] (1) at (0,8) {{}};

\path (d) edge[-] node {{}} (c);
\path (d) edge[-] node {{}} (Na);
\path (d) edge[-] node {{}} (0);

\path (1) edge[-] node {{}} (Nd);
\path (Nd) edge[-] node {{}} (a);
\path (Nd) edge[-] node {{}} (Nc);
\path (1) edge[-] node {{}} (c);
\path (1) edge[-] node {{}} (Na);
\path (a) edge[-] node {{}} (b);
\path (b) edge[-] node {{}} (c);
\path (Nc) edge[-] node {{}} (Nb);
\path (Nb) edge[-] node {{}} (Na);
\path (a) edge[-] node {{}} (0);
\path (Nc) edge[-] node {{}} (0);

\path (Na) edge[loop right,blue] node {{}} (Na);
\path (d) edge[loop left,blue] node {{}} (d);
\path (Nd) edge[loop left,blue] node {{}} (Nd);
\path (1) edge[loop left,blue] node {{}} (1);
\path (c) edge[loop left,blue] node {{}} (c);
\path (a) edge[loop left,blue] node {{}} (a);
\path (Nc) edge[loop right,blue] node {{}} (Nc);
\path (0) edge[loop left,blue] node {{}} (0);

\path (b) edge[->,blue,bend right] node {{}} (a);
\path (Nb) edge[->,blue,bend left] node {{}} (Nc);
\end{tikzpicture}\qquad\quad\begin{tikzpicture}[->,>=stealth',shorten >=1pt,shorten <=1pt, auto,node
distance=2.5cm,semithick]
\tikzstyle{every state}=[fill=gray!20,draw=none,text=black]
\node[circle,draw=black!100, fill=black!100, label=below:$\textcolor{red}{\bot}$,inner sep=0pt,minimum size=.175cm] (0) at (0,0) {{}};
\node[circle,draw=black!100, fill=black!100, label=right:$\textcolor{red}{\Diamond p \wedge \Diamond \neg p}$,inner sep=0pt,minimum size=.175cm] (d) at (0,2) {{}};

\node[circle,draw=black!100, fill=black!100, label=below:$\textcolor{red}{\Box p}$,inner sep=0pt,minimum size=.175cm] (a) at (-2.25,2) {{}};
\node[circle,draw=black!100, fill=black!100, label=below:$\;\;\;\;\textcolor{red}{\Box\neg p}$,inner sep=0pt,minimum size=.175cm] (Nc) at (2.25,2) {{}};

\node[circle,draw=black!100, fill=black!100, label=left:$\textcolor{red}{p}$,inner sep=0pt,minimum size=.175cm] (b) at (-2.25,4) {{}};
\node[circle,draw=black!100, fill=black!100, label=right:$\textcolor{red}{\neg p}$,inner sep=0pt,minimum size=.175cm] (Nb) at (2.25,4) {{}};
\node[circle,draw=black!100, fill=black!100,label=above:$\textcolor{red}{\Diamond p}$,inner sep=0pt,minimum size=.175cm] (c) at (-2.25,6) {{}};

\node[circle,draw=black!100, fill=black!100, label=above:$\;\;\;\textcolor{red}{\Diamond\neg p}$,inner sep=0pt,minimum size=.175cm] (Na) at (2.25,6) {{}};

\node[circle,draw=black!100, fill=black!100, label=right:$\textcolor{red}{\Box p \vee\Box\neg p}\;\;$,inner sep=0pt,minimum size=.175cm] (Nd) at (0,6) {{}};
\node[circle,draw=black!100, fill=black!100,label=above:$\textcolor{red}{\top}$,inner sep=0pt,minimum size=.175cm] (1) at (0,8) {{}};

\path (d) edge[-] node {{}} (c);
\path (d) edge[-] node {{}} (Na);
\path (d) edge[-] node {{}} (0);

\path (1) edge[-] node {{}} (Nd);
\path (Nd) edge[-] node {{}} (a);
\path (Nd) edge[-] node {{}} (Nc);
\path (1) edge[-] node {{}} (c);
\path (1) edge[-] node {{}} (Na);
\path (a) edge[-] node {{}} (b);
\path (b) edge[-] node {{}} (c);
\path (Nc) edge[-] node {{}} (Nb);
\path (Nb) edge[-] node {{}} (Na);
\path (a) edge[-] node {{}} (0);
\path (Nc) edge[-] node {{}} (0);

\path (Na) edge[loop right,blue] node {{}} (Na);
\path (d) edge[loop left,blue] node {{}} (d);
\path (Nd) edge[loop left,blue] node {{}} (Nd);
\path (1) edge[loop left,blue] node {{}} (1);
\path (c) edge[loop left,blue] node {{}} (c);
\path (a) edge[loop left,blue] node {{}} (a);
\path (Nc) edge[loop right,blue] node {{}} (Nc);
\path (0) edge[loop left,blue] node {{}} (0);

\path (b) edge[->,blue,bend right] node {{}} (a);
\path (Nb) edge[->,blue,bend left] node {{}} (Nc);
\end{tikzpicture}
\end{center}\caption{Hasse diagram of an $\mathsf{S5}$ epistemic ortholattice with two labelings}\label{Fig1}
\end{figure}
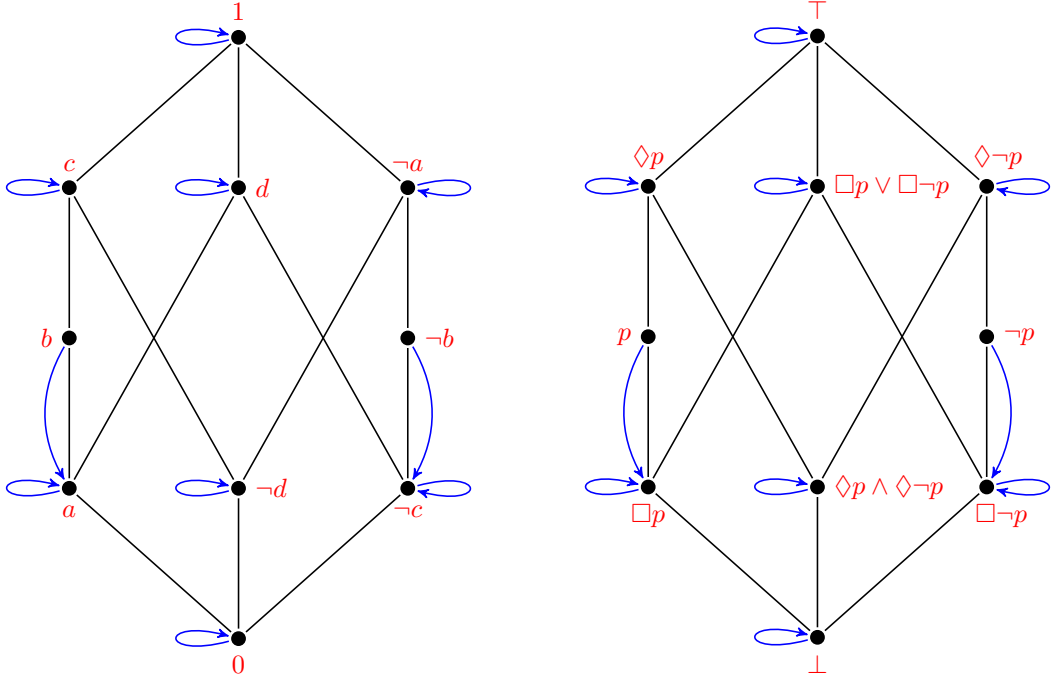

\begin{example}\label{KeyEx} Figure \ref{Fig1} displays the Hasse diagram of an $\mathsf{S5}$ epistemic ortholattice $L$, labeled in two ways. Recall that a line segment going \textit{upward} from $x$ to $y$ means that $x\leq y$ and there is no third element $z$ with $x\leq z\leq y$. The $\Box$ operation is depicted by the blue arrows. Note the failure of distributivity:
\[(p\vee\neg p)\wedge (\Diamond p\wedge\Diamond \neg p) = 1\wedge (\Diamond p\wedge\Diamond\neg p)= \Diamond p\wedge\Diamond \neg p\neq 0\]
and yet \[(p\wedge\Diamond \neg p) \vee (\neg p\wedge\Diamond p)= 0\vee 0 = 0.\]
That distributivity fails follows from the failure of orthomodularity. Recall this is the condition that $a\leq b$ implies $a\vee (\neg a\wedge b) =b$. Yet in Figure \ref{Fig1} we have $p\leq\Diamond p$ and  $p\vee (\neg p\wedge\Diamond p)=p\vee 0=p\neq\Diamond p$. 

Next, observe that \[ p\wedge\Diamond \neg p = 0\mbox{ and yet }\Diamond \neg p\not\leq \neg p.\] This shows that the orthocomplementation $\neg$ is not \textit{pseudocomplementation} (recall Lemma \ref{PseudoLem}). 

Also observe that \[ (p\vee\Box\neg p) \wedge \lozenge p  = \top\wedge\lozenge p=\lozenge p\not\leq  p.\] This shows that disjunctive syllogism fails.

Finally, note that the \textit{subortholattice} of $L$ generated by $p$ is the four-element \textit{Boolean} algebra:
\begin{center}
\begin{tikzpicture}[->,>=stealth',shorten >=1pt,shorten <=1pt, auto,node
distance=2.5cm,semithick]
\tikzstyle{every state}=[fill=gray!20,draw=none,text=black]

\node[circle,draw=black!100, fill=black!100,label=below:$\textcolor{red}{\bot}$,inner sep=0pt,minimum size=.175cm] (botL) at (0,0) {{}};
\node[circle,draw=black!100,fill=black!100, label=left:$\textcolor{red}{p}$,inner sep=0pt,minimum size=.175cm] (AL) at (-1,1) {{}};
\node[circle,draw=black!100,fill=black!100, label=right:$\textcolor{red}{\neg p}$,inner sep=0pt,minimum size=.175cm] (CL) at (1,1) {{}};
\node[circle,draw=black!100,fill=black!100, label=above:$\textcolor{red}{\top}$,inner sep=0pt,minimum size=.175cm] (B'L) at (0,2) {{}};

\path (CL) edge[-] node {{}} (botL);
\path (AL) edge[-] node {{}} (botL);
\path (B'L) edge[-] node {{}} (AL);
\path (B'L) edge[-] node {{}} (CL);
\end{tikzpicture}
\end{center}
Thus, if we think of $p$ as a non-modal proposition such as `it is raining' and generate further propositions using disjunction, conjunction, and negation, the result is Boolean. This corresponds to the fact that the failures of distributivity, orthomodularity, pseudocomplementation, and disjunctive syllogism discussed above essentially involve epistemic modals. In \S~\ref{DistinguishSec}, we return to this observation and show how to recover full classical reasoning for a Boolean fragment of our language interpreted in Boolean subalgebras of our lattices.\end{example}

\subsubsection{Language and consequence}

We can use epistemic ortholattices to interpret the following propositional modal language.

\begin{definition}\label{ELDef} Let $\mathcal{EL}$ be the language generated by the grammar
\[\varphi::= \top\mid p\mid \neg\varphi\mid (\varphi\wedge\varphi)\mid\Box \varphi\]
where $p$ belongs to a countably infinite set $\mathsf{Prop}$ of propositional variables. 
\end{definition}

\noindent We define  $\bot:=\neg \top $, $\varphi\vee\psi:=\neg(\neg\varphi\wedge\neg\psi)$, and $\Diamond\varphi:=\neg\Box\neg\varphi$. Again we use the same symbols for the connectives of $\mathcal{EL}$ and the operations in epistemic ortholattices, trusting that no confusion will arise.

The algebraic semantics follows the same approach as in \S~\ref{LangCons1}.

\begin{definition} A \textit{valuation} on a modal ortholattice $\langle A,\vee,0,\wedge,1,\neg,\Box\rangle$ is a map $\theta: \mathsf{Prop}\to A$. Such a $\theta$ extends to $\tilde{\theta}:\mathcal{EL}\to A$  by: $\tilde{\theta}(\top)=1$, $\tilde{\theta}(\neg\varphi)=\neg\tilde{\theta}(\varphi)$, $\tilde{\theta}(\varphi\wedge\psi)=\tilde{\theta}(\varphi)\wedge \tilde{\theta}(\psi)$ , and $\tilde{\theta}(\Box\varphi)=\Box\tilde{\theta}(\varphi)$.
\end{definition}

\begin{definition}\label{AlgCon} Given a class $\mathbf{C}$ of modal ortholattices, define the semantic consequence relation $\vDash_\mathbf{C}$, a binary relation on $\mathcal{EL}$, as follows: $\varphi\vDash_\mathbf{C}\psi$ if for every $L\in\mathbf{C}$ and valuation $\theta$ on $L$, we have $\tilde{\theta}(\varphi)\leq \tilde{\theta}(\psi)$, where $\leq$ is the lattice order of $L$.
\end{definition}

\subsubsection{Logic, soundness and completeness}

Building on \S~\ref{Log1}, axiomatizing the semantic consequence relation of Definition \ref{AlgCon} is straightforward.

\begin{definition}\label{EODef} An \textit{epistemic orthologic} is a binary relation $\vdash$ on the set $\mathcal{EL}$ of formulas satisfying for all $\varphi,\psi\in\mathcal{EL}$ conditions 1-10 of Definition \ref{OrthoLogicDef} plus:
\begin{center}
\begin{tabular}{ll}
11. if $\varphi\vdash\psi$, then $\Box\varphi\vdash\Box\psi$;\qquad\qquad  & 14. $\Box \varphi\vdash\varphi$; \\
12. $\Box\varphi\wedge\Box\psi\vdash \Box(\varphi\wedge\psi)$; & 15. $\neg \varphi\wedge\Diamond \varphi\vdash \bot$ (Wittgenstein's Law). \\
13. $\varphi\vdash\Box\top$;
\end{tabular}
\end{center}
As the intersection of epistemic orthologics is clearly an epistemic orthologic, there is a smallest epistemic orthologic, denoted \textsf{EO} or $\vdash_{\mathsf{EO}}$.\footnote{\label{EOFitch}A Fitch-style proof system for $\mathsf{EO}$ can be obtained from the Fitch-style proof system for $\mathsf{O}$ in Figure \ref{FitchRules} by adding rules of $\Box$ introduction, $\Box$ elimination, and Strict Reiteration as in \citealt{Fitch1966}, plus an additional rule of Epistemic Contradiction that is like the $\neg$E rule in Figure \ref{FitchRules} but with $\neg\varphi$ replaced by $\neg\Box\varphi$.}
\end{definition}

Wittgenstein's Law yields the following noteworthy property of epistemic orthologics.

\begin{lemma}\label{BoxBotLem} For any epistemic orthologic $\vdash$ and $\varphi\in\mathcal{EL}$, if $\Box\varphi\vdash\bot$, then $\varphi\vdash\bot$.
\end{lemma}
\begin{proof} If $\Box\varphi\vdash\bot$, then $\top\vdash\neg\Box\varphi$, in which case $\varphi\vdash\varphi\wedge\neg\Box\varphi\vdash\neg\neg\varphi\wedge\Diamond\neg\varphi\vdash\bot$, so $\varphi\vdash\bot$. 
\end{proof}
\noindent Conversely, if one assumes the principle in Lemma \ref{BoxBotLem} together with principles 11 and 14 in Definition \ref{EODef} on top of orthologic, then Wittgenstein's Law is derivable (see the proof of Fact \ref{GeneralizedWittLaw}).

As in \S~\ref{Log1}, a completeness theorem can be proved using standard techniques of algebraic logic.

\begin{theorem}\label{AlgComp2} The logic $\textsf{EO}$ is sound and complete with respect to the class $\mathbf{EO}$ of epistemic ortholattices according to the consequence relation of Definition \ref{AlgCon}: for all $\varphi,\psi\in\mathcal{EL}$, we have $\varphi\vdash_\mathsf{EO}\psi$ if and only if $\varphi\vDash_\mathbf{EO}\psi$.
\end{theorem}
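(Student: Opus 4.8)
The plan is to mirror the soundness-and-completeness argument for Theorem~\ref{AlgComp1}, now incorporating the modal operator. For soundness, I would verify that each of conditions~10--14 corresponds to a property that holds in every epistemic ortholattice under the semantics of Definition~\ref{AlgCon}. This is a routine check: condition~10 follows from monotonicity of $\Box$ (itself derivable from $\Box(a\wedge b)=\Box a\wedge\Box b$ and the definition of $\leq$ via $\wedge$), condition~11 is immediate from the defining equation $\Box(a\wedge b)=\Box a\wedge\Box b$, condition~12 from $\Box 1=1$ together with $\tilde\theta(\top)=1$, condition~13 from the $\mathsf{T}$ property $\Box a\leq a$, and condition~14 directly from \textsf{Wittgenstein's Law} $\neg a\wedge\Diamond a=0$. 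Since conditions~1--9 were already shown sound over ortholattices, and the lattice/orthocomplement structure of an epistemic ortholattice extends that setting conservatively, soundness of $\vdash_\mathsf{EO}$ follows.

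For completeness, I would again build the Lindenbaum--Tarski algebra $L$ of $\mathsf{EO}$: take equivalence classes of $\mathcal{EL}$-formulas under mutual derivability in $\vdash_\mathsf{EO}$, define $0=[p\wedge\neg p]$, $1=[p\vee\neg p]$, and the lattice and orthocomplement operations exactly as before, but now additionally set $\Box[\varphi]=[\Box\varphi]$. First I would check that $\Box$ is well-defined on equivalence classes, which is precisely where condition~10 is needed: if $\varphi\vdash\psi$ and $\psi\vdash\varphi$, then $\Box\varphi\vdash\Box\psi$ and $\Box\psi\vdash\Box\varphi$, so $[\Box\varphi]=[\Box\psi]$. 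Then I would verify that $L$ is genuinely an epistemic ortholattice: the ortholattice axioms hold as in Theorem~\ref{AlgComp1}, the modal equation $\Box(a\wedge b)=\Box a\wedge\Box b$ follows from condition~11 together with the already-derivable monotonicity (the $\leq$ direction coming from $\Box(\varphi\wedge\psi)\vdash\Box\varphi$ and $\Box(\varphi\wedge\psi)\vdash\Box\psi$ via condition~10 on $\varphi\wedge\psi\vdash\varphi$), the normalization $\Box 1=1$ from condition~12, the $\mathsf{T}$ condition $\Box a\leq a$ from condition~13, and \textsf{Wittgenstein's Law} from condition~14. As before, the canonical valuation $\theta(p)=[p]$ extends to $\tilde\theta(\varphi)=[\varphi]$ by an induction on $\varphi$; the only new inductive case is $\Box\varphi$, handled by $\tilde\theta(\Box\varphi)=\Box\tilde\theta(\varphi)=\Box[\varphi]=[\Box\varphi]$, which is exactly the defining clause for $\Box$ on $L$.

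With these pieces in place, completeness is immediate in the same way as before: if $\varphi\not\vdash_\mathsf{EO}\psi$, then $[\varphi]\not\leq[\psi]$ in $L$, so under the canonical valuation $\tilde\theta(\varphi)\not\leq\tilde\theta(\psi)$, witnessing $\varphi\nvDash_\mathbf{EO}\psi$ since $L\in\mathbf{EO}$.

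The main obstacle, such as it is, is verifying that $L$ actually lies in $\mathbf{EO}$ --- in particular confirming that condition~11 plus monotonicity yields the \emph{equation} $\Box(a\wedge b)=\Box a\wedge\Box b$ rather than merely one inequality, and that well-definedness of $\Box$ on classes goes through. Both hinge on condition~10, which is the one genuinely new rule driving the modal case; everything else is a direct transcription of the proof of Theorem~\ref{AlgComp1} with the extra modal clauses appended. I would therefore present the proof compactly, stating that soundness is a routine check of conditions~10--14 and that completeness proceeds exactly as in Theorem~\ref{AlgComp1} with $\Box$ added to the Lindenbaum--Tarski construction, flagging only the well-definedness of $\Box$ and the verification that $L$ satisfies the defining conditions of an epistemic ortholattice.
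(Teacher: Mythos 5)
Your proposal is correct and follows exactly the paper's route: the paper's proof of this theorem is a one-line remark that one argues "by the same strategy as in the proof of Theorem \ref{AlgComp1}, checking that the Lindenbaum-Tarski algebra of $\mathsf{EO}$ is an epistemic ortholattice," which is precisely what you carry out (with the well-definedness of $\Box$ via condition~10 and the verification of conditions~11--14 correctly spelled out).
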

\begin{proof} By the same strategy as in the proof of Theorem \ref{AlgComp1}, checking that the Lindenbaum-Tarski algebra of $\textsf{EO}$ is an epistemic ortholattice.
\end{proof}

\begin{remark}Soundness and completeness with respect to \textsf{S5} epistemic ortholattices is also straightforward by adding the rules $\Box\varphi\vdash\Box\Box\varphi$ (known as $\mathsf{4}$) and $\Diamond \varphi\vdash\Box\Diamond\varphi$ (known as $\mathsf{5}$) to Definition \ref{EODef}. However, neither of these rules, nor the rule $\varphi\vdash\Box\Diamond\varphi$ (known as $\mathsf{B}$), is valid with respect to epistemic ortholattices in general. Moreover, the evidence about the status of these inferences for epistemic modality is mixed. 

On the one hand, \citet{Moss2014}  argues that patterns from nested epistemic modals tell against the collapse principles that these axioms together entail. For instance, the sentences in \ref{stackedmight} do not sound obviously equivalent but instead are intuitively increasingly strong:\footnote{A potential confound in these cases comes from the phenomenon of modal concord, where different modals do not contribute independent modal meanings; see \citealt{Wijnbergen:2020} for an overview. Moss's claim is that these cases are not cases of modal concord; judgments are, to be sure, somewhat tenuous.}

\ex.\label{stackedmight}\a. John might possibly win.
 \b.  John might win. 
\c. John certainly might win. 

Judgments here are somewhat difficult to ascertain, however, given that it is very difficult to directly stack epistemic modals in English. 

Pulling in the other direction, conjunctions that instantiate violations of \textsf{4}, \textsf{5}, and \textsf{B}---that is, sentences with the form $\Box p\wedge\neg \Box\Box p$, $\lozenge p\wedge\neg\Box\lozenge p$, and $ p\wedge\neg \Box\lozenge p$, respectively---do sound inconsistent, as in \ref{antis5} (using duality to improve readability):

\ex.\label{antis5} 
\a. \# John must be the winner, but maybe he might not be  the winner.
\b.\# John might be the winner, but it might be that he must not be the winner. 
\c. \# John is the winner, but it might be that he must not be the winner. 

This extends to embedded environments, suggesting this is not simply a Moorean phenomenon but should be accounted for logically.

In fact, the account so far has a nice way to make sense of both of these kinds of evidence. On the one hand, \textsf{4}, \textsf{5}, and \textsf{B} are all invalid according to our consequence relation for $\mathbf{EO}$. On the other hand, the conjunctions above that instantiate their violations are all inconsistent. Reasoning from that inconsistency to the validity of \textsf{4}, \textsf{5}, and \textsf{B}, although classically valid, is blocked in our system since negation is not pseudocomplementation. This lets us account both for the fact that nested modals appear not to collapse and for the fact that conjunctions that witness the failures of  \textsf{4}, \textsf{5}, and \textsf{B} appear inconsistent. 

One might think that without $\textsf{4}$, which is equivalent to $\Diamond\Diamond \varphi\vdash\Diamond\varphi$, the other principles of our logic could not account for the fact that $p\wedge\Diamond\Diamond\neg p$ should be inconsistent. In fact, we do not need $\textsf{4}$ to account for this. For the following, let $\Diamond^0\varphi=\varphi$ and $\Diamond^{n+1}\varphi=\Diamond\Diamond^n\varphi$.
\end{remark}

\begin{fact}\label{GeneralizedWittLaw} For any epistemic orthologic $\vdash$, $n\in\mathbb{N}$, and $\varphi\in\mathcal{EL}$, we have $\varphi\wedge\Diamond^n \neg\varphi\vdash\bot$.
\end{fact}
\begin{proof} By induction on $n$. Assume $\psi\wedge\Diamond^n\neg\psi\vdash\bot$ for all $\psi\in\mathcal{EL}$. Then for any $\varphi\in\mathcal{EL}$, \[\Box (\varphi\wedge\Diamond^{n+1}\neg\varphi)\vdash \Box\varphi\wedge \Box\Diamond^{n+1}\neg\varphi  \vdash \Box\varphi\wedge \Diamond^{n+1}\neg\varphi  \vdash \Box\varphi\wedge \Diamond^{n}\neg\Box\varphi \vdash \bot,\]
using the inductive hypothesis for the last step. Thus, $\varphi\wedge\Diamond^{n+1}\neg\varphi\vdash\bot$ by Lemma \ref{BoxBotLem}.\end{proof}

\subsubsection{Distinguishing Boolean propositions}\label{DistinguishSec}

We now make precise the idea from \S~\ref{ConsSec} that classical principles should hold for the ``non-modal fragment.'' If we understand the basic elements $p,q,r \in \mathsf{Prop}$ of our inductively defined formal language as \textit{propositional variables}, standing in for arbitrary propositions including epistemic modal propositions, then we do not want classical principles like $p\wedge (q\vee r)\vdash  (p\wedge q)\vee (p\wedge r)$, since accepting such a principle for $p,q,r$ means accepting it for all propositions (cf.~\citealt[pp.~147-8]{Burgess2003}). However, we can add to the basic elements of our formal language a set $\mathsf{Bool}$ whose elements we think of as variables only for \textit{non-epistemic}, or \emph{Boolean}, propositions.\footnote{The distinction between viewing atomic formulas of the propositional modal language as genuine propositional variables vs.~non-epistemic variables has a precedent in the literature on dynamic epistemic logic (see, e.g., \citealt{HHI2012,HHI2013}).} Typographically,  $p,q,r,\dots$ are elements of $\mathsf{Prop}$, whereas $\mathtt{p},\mathtt{q},\mathtt{r},\dots$ are elements of $\mathsf{Bool}$.

\begin{definition}\label{ELPlus} Let $\mathcal{EL}^+$ be the language generated by the grammar
\[\varphi::= \top\mid p\mid \mathtt{p}\mid  \neg\varphi\mid (\varphi\wedge\varphi)\mid\Box \varphi\]
where $p$ belongs to a countably infinite set $\mathsf{Prop}$ and $\mathtt{p}$ belongs to a countably infinite set $\mathsf{Bool}$. A formula of $\mathcal{EL}^+$ is said to be \textit{Boolean} if all its propositional variables are from $\mathsf{Bool}$ and it does not contain $\Box$.\end{definition}

In line with our idea that the only failures of classicality come from epistemic modals, we interpret the variables of $\mathsf{Bool}$ in a Boolean subalgebra of our ambient ortholattice of propositions.

\begin{definition} A \textit{modal ortho-Boolean lattice} is a tuple $\langle A,B,\vee,0,\wedge,1,\neg,\Box\rangle$ where $\langle A,\vee,0,\wedge,1,\neg,\Box\rangle$ is a modal ortholattice and $\langle B,\vee_{\mid B},0,\wedge_{\mid B},1,\neg_{\mid B}\rangle $ is a Boolean algebra where $B\subseteq A$ and $\vee_{\mid B}$, $\wedge_{\mid B}$, and $\neg_{\mid B}$ are the restrictions of $\vee$, $\wedge$, and $\neg$, respectively, to $B$. \end{definition}

\noindent Note that every modal ortholattice can be expanded to a modal ortho-Boolean lattice by taking $B=\{0,1\}$, so modal ortho-Boolean lattices may be viewed as a generalization of modal ortholattices.

So far there is no required connection between the distinguished Boolean subalgebra of a modal ortho-Boolean lattice and the ambient modal ortholattice: any Boolean subalgebra can be distinguished. However, there appear to be intuitively valid principles relating Boolean---but not arbitrary---propositions and epistemic modals, and capturing such principles requires a certain coherence between the Boolean subalgebra $B$ and the ambient epistemic ortholattice. Consider, for example, the following.

\begin{definition}\label{LevelwiseBooldeanDef} Given a modal ortho-Boolean lattice $L=\langle A,B,\vee,0,\wedge,1,\neg,\Box\rangle$, define:
\begin{itemize}
\item  $B_0=B$;
\item  $B_{n+1} $ is the subortholattice of $\langle A,\vee,0,\wedge,1,\neg\rangle$ generated by $\{\Box b\mid b\in B_n\}$.
\end{itemize}
Then $L$ is \textit{level-wise Boolean} if $B_n$ is Boolean for each $n\in\mathbb{N}$.
\end{definition}

\noindent The motivation for this condition is straightforward: no natural language counterexample to a classical inference that we have found is such that all propositions come from the same level $B_n$. For example, the counterexample to pseudocomplementation, going from $p\wedge\Diamond\neg p=0$ to $\Diamond \neg p\leq\neg p$, involves $p,\neg p\in B_n$ and $\Diamond\neg p\in B_{n+1}$. Similar points apply to our counterexamples involving distributivity, disjunctive syllogism, and orthomodularity. This observation is, to our knowledge, novel, and it suggests a picture on which classical reasoning \emph{across different epistemic levels} can be dangerous, but classical reasoning within a given epistemic level is safe. We can model that picture with level-wise Boolean epistemic ortholattices:

\begin{definition}\label{OrthoBoole}An \textit{epistemic ortho-Boolean lattice} is a modal ortho-Boolean lattice $\langle A,B,\vee,0,\wedge,1,\neg,\Box\rangle$ that is level-wise Boolean and such that $\langle A,\vee,0,\wedge,1,\neg,\Box\rangle$ is an epistemic ortholattice.\end{definition}

\noindent For checking the level-wise Boolean condition, note that if $B_{n+1}=B_n$, then $B_k=B_n$ for all $k\geq n$; and if $A$ is finite, then we are bound to reach such a fixed point $B_n$.

\begin{example}\label{LevelwiseBooleEx} Consider the epistemic ortholattice in Figure \ref{Fig1} with $B=\{\bot,p,\neg p,\top\}$, highlighted on the left of Figure \ref{StratFig}. Then $B$ forms a subortholattice and a four-element Boolean algebra. Moreover, $B_1=\{\bot,\Box p, \Diamond\neg p, \Diamond p, \Box \neg p, \Box p\vee\Box\neg p,\Diamond p\wedge\Diamond\neg p, \top\}$ forms an eight-element Boolean algebra, highlighted on the right of Figure \ref{StratFig}; and $B_2=B_1$, so  $B_n=B_1$ for all $n\geq 1$. Thus, by equipping the epistemic ortholattice in Figure \ref{Fig1} with $B$, we obtain an epistemic ortho-Boolean lattice.
\end{example}

\begin{figure}[h]
\begin{center}
\begin{tikzpicture}[scale=.9,->,>=stealth',shorten >=1pt,shorten <=1pt, auto,node
distance=2.5cm,semithick]
\tikzstyle{every state}=[fill=gray!20,draw=none,text=black]
\node[circle,draw=black!100, fill=yellow!100, label=right:$\textcolor{red}{\bot}$,inner sep=0pt,minimum size=.175cm] (0) at (0,0) {{}};
\node[circle,draw=black!100, fill=black!100, label=right:{\small$\textcolor{red}{\Diamond p \wedge \Diamond \neg p}$},inner sep=0pt,minimum size=.175cm] (d) at (0,2) {{}};

\node[circle,draw=black!100, fill=black!100, label=below:$\textcolor{red}{\Box p}$,inner sep=0pt,minimum size=.175cm] (a) at (-2.25,2) {{}};
\node[circle,draw=black!100, fill=black!100, label=below:$\;\;\;\;\textcolor{red}{\Box\neg p}$,inner sep=0pt,minimum size=.175cm] (Nc) at (2.25,2) {{}};

\node[circle,draw=black!100, fill=yellow!100, label=left:$\textcolor{red}{p}$,inner sep=0pt,minimum size=.175cm] (b) at (-2.25,4) {{}};
\node[circle,draw=black!100, fill=yellow!100, label=right:$\textcolor{red}{\neg p}$,inner sep=0pt,minimum size=.175cm] (Nb) at (2.25,4) {{}};
\node[circle,draw=black!100, fill=black!100,label=above:$\textcolor{red}{\Diamond p}$,inner sep=0pt,minimum size=.175cm] (c) at (-2.25,6) {{}};

\node[circle,draw=black!100, fill=black!100, label=above:$\;\;\;\textcolor{red}{\Diamond\neg p}$,inner sep=0pt,minimum size=.175cm] (Na) at (2.25,6) {{}};

\node[circle,draw=black!100, fill=black!100, label=right:{\small$\textcolor{red}{\Box p \vee\Box\neg p}\;\;$},inner sep=0pt,minimum size=.175cm] (Nd) at (0,6) {{}};
\node[circle,draw=black!100, fill=yellow!100,label=right:$\textcolor{red}{\top}$,inner sep=0pt,minimum size=.175cm] (1) at (0,8) {{}};

\path (d) edge[-] node {{}} (c);
\path (d) edge[-] node {{}} (Na);
\path (d) edge[-] node {{}} (0);

\path (1) edge[-] node {{}} (Nd);
\path (Nd) edge[-] node {{}} (a);
\path (Nd) edge[-] node {{}} (Nc);
\path (1) edge[-] node {{}} (c);
\path (1) edge[-] node {{}} (Na);
\path (a) edge[-] node {{}} (b);
\path (b) edge[-] node {{}} (c);
\path (Nc) edge[-] node {{}} (Nb);
\path (Nb) edge[-] node {{}} (Na);
\path (a) edge[-] node {{}} (0);
\path (Nc) edge[-] node {{}} (0);

\path (Na) edge[loop right,blue] node {{}} (Na);
\path (d) edge[loop left,blue] node {{}} (d);
\path (Nd) edge[loop left,blue] node {{}} (Nd);
\path (1) edge[loop left,blue] node {{}} (1);
\path (c) edge[loop left,blue] node {{}} (c);
\path (a) edge[loop left,blue] node {{}} (a);
\path (Nc) edge[loop right,blue] node {{}} (Nc);
\path (0) edge[loop left,blue] node {{}} (0);

\path (b) edge[->,blue,bend right] node {{}} (a);
\path (Nb) edge[->,blue,bend left] node {{}} (Nc);
\end{tikzpicture}\qquad\qquad\qquad\begin{tikzpicture}[scale=.9,->,>=stealth',shorten >=1pt,shorten <=1pt, auto,node
distance=2.5cm,semithick]
\tikzstyle{every state}=[fill=gray!20,draw=none,text=black]
\node[circle,draw=black!100, fill=yellow!100, label=right:$\textcolor{red}{\bot}$,inner sep=0pt,minimum size=.175cm] (0) at (0,0) {{}};
\node[circle,draw=black!100, fill=yellow!100, label=right:{\small$\textcolor{red}{\Diamond p \wedge \Diamond \neg p}$},inner sep=0pt,minimum size=.175cm] (d) at (0,2) {{}};

\node[circle,draw=black!100, fill=yellow!100, label=below:$\textcolor{red}{\Box p}$,inner sep=0pt,minimum size=.175cm] (a) at (-2.25,2) {{}};
\node[circle,draw=black!100, fill=yellow!100, label=below:$\;\;\;\;\textcolor{red}{\Box\neg p}$,inner sep=0pt,minimum size=.175cm] (Nc) at (2.25,2) {{}};

\node[circle,draw=black!100, fill=black!100, label=left:$\textcolor{red}{p}$,inner sep=0pt,minimum size=.175cm] (b) at (-2.25,4) {{}};
\node[circle,draw=black!100, fill=black!100, label=right:$\textcolor{red}{\neg p}$,inner sep=0pt,minimum size=.175cm] (Nb) at (2.25,4) {{}};
\node[circle,draw=black!100, fill=yellow!100,label=above:$\textcolor{red}{\Diamond p}$,inner sep=0pt,minimum size=.175cm] (c) at (-2.25,6) {{}};

\node[circle,draw=black!100, fill=yellow!100, label=above:$\;\;\;\textcolor{red}{\Diamond\neg p}$,inner sep=0pt,minimum size=.175cm] (Na) at (2.25,6) {{}};

\node[circle,draw=black!100, fill=yellow!100, label=right:{\small$\textcolor{red}{\Box p \vee\Box\neg p}\;\;$},inner sep=0pt,minimum size=.175cm] (Nd) at (0,6) {{}};
\node[circle,draw=black!100, fill=yellow!100,label=right:$\textcolor{red}{\top}$,inner sep=0pt,minimum size=.175cm] (1) at (0,8) {{}};

\path (d) edge[-] node {{}} (c);
\path (d) edge[-] node {{}} (Na);
\path (d) edge[-] node {{}} (0);

\path (1) edge[-] node {{}} (Nd);
\path (Nd) edge[-] node {{}} (a);
\path (Nd) edge[-] node {{}} (Nc);
\path (1) edge[-] node {{}} (c);
\path (1) edge[-] node {{}} (Na);
\path (a) edge[-] node {{}} (b);
\path (b) edge[-] node {{}} (c);
\path (Nc) edge[-] node {{}} (Nb);
\path (Nb) edge[-] node {{}} (Na);
\path (a) edge[-] node {{}} (0);
\path (Nc) edge[-] node {{}} (0);

\path (Na) edge[loop right,blue] node {{}} (Na);
\path (d) edge[loop left,blue] node {{}} (d);
\path (Nd) edge[loop left,blue] node {{}} (Nd);
\path (1) edge[loop left,blue] node {{}} (1);
\path (c) edge[loop left,blue] node {{}} (c);
\path (a) edge[loop left,blue] node {{}} (a);
\path (Nc) edge[loop right,blue] node {{}} (Nc);
\path (0) edge[loop left,blue] node {{}} (0);

\path (b) edge[->,blue,bend right] node {{}} (a);
\path (Nb) edge[->,blue,bend left] node {{}} (Nc);
\end{tikzpicture}
\end{center}
\caption{Highlighting in yellow the Boolean subalgebras $B_0$ (left) and $B_1$ (right) of the epistemic ortholattice from Figure \ref{Fig1}.} \label{StratFig}
\end{figure}
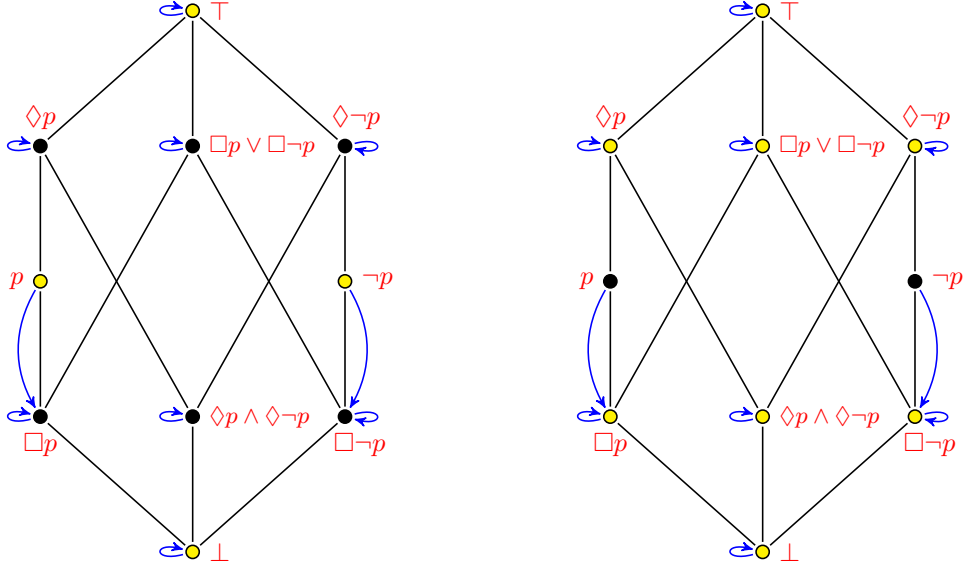

Let us now turn to the semantics of $\mathcal{EL}^+$.

\begin{definition} A valuation $\theta$ on a modal ortho-Boolean lattice $\langle A,B,\vee,0,\wedge,1,\neg,\Box\rangle$ is a map $\theta:\mathsf{Prop}\cup\mathsf{Bool}\to A$ such that for all $\mathtt{p}\in\mathsf{Bool}$, $\theta(\mathtt{p})\in B$. Such a valuation extends to $\tilde{\theta}: \mathcal{EL}^+\to A$ by: $\tilde{\theta}(\top)=1$, $\tilde{\theta}(\neg\varphi)=\neg\tilde{\theta}(\varphi)$, $\tilde{\theta}(\varphi\wedge\psi)=\tilde{\theta}(\varphi)\wedge \tilde{\theta}(\psi)$, and $\tilde{\theta}(\Box\varphi)=\Box\tilde{\theta}(\varphi)$.
\end{definition}

Corresponding to the different subortholattices $B_n$, we have a hierarchy of sublanguages $\mathcal{B}_n$.

\begin{definition}$\,$
\begin{itemize}
\item Let $\mathcal{B}_0$ be the set of Boolean formulas as in Definition \ref{ELPlus}.
\item Let $\mathcal{B}_{n+1}$ be the smallest set of formulas that includes $\{\Box\varphi \mid \varphi\in \mathcal{B}_{n}\}$ and is closed under $\neg$ and $\wedge$.
\end{itemize}
\end{definition}
An obvious induction on the structure of formulas shows that formulas at each level are interpreted in the corresponding $B_n$.

\begin{lemma}\label{InterpretInBn} Let $\theta$ be a valuation on a modal ortho-Boolean lattice $\langle A,B,\vee,0,\wedge,1,\neg,\Box\rangle$. Then for any $n\in\mathbb{N}$ and $\beta\in \mathcal{B}_n$, we have $\tilde{\theta}(\beta)\in B_n$.
\end{lemma}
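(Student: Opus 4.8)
The plan is to proceed by induction on the structure of the Boolean formula $\beta$, using the fact that $B$ is closed under the ortholattice operations $\neg$ and $\wedge$. This closure is exactly what is encoded in the hypothesis that $\langle B,\vee_{\mid B},0,\wedge_{\mid B},1,\neg_{\mid B}\rangle$ is a Boolean algebra whose operations are the \emph{restrictions} of $\vee$, $\wedge$, and $\neg$ to $B$: for the restriction $\wedge_{\mid B}$ to be a well-defined binary operation on $B$, we need $\tilde{\theta}(\gamma)\wedge\tilde{\theta}(\delta)\in B$ whenever $\tilde{\theta}(\gamma),\tilde{\theta}(\delta)\in B$, and similarly for $\neg$.

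First, for the base case I would take $\beta$ to be atomic. Since $\beta$ is Boolean, its only propositional variable comes from $\mathsf{Bool}$, say $\beta=\mathtt{p}$, so $\tilde{\theta}(\beta)=\theta(\mathtt{p})\in B$ directly by the definition of a valuation on a modal ortho-Boolean lattice. Note that the atomic cases $\beta=p$ with $p\in\mathsf{Prop}$ and $\beta=\Box\gamma$ cannot arise here, since neither is a Boolean formula (the first uses a variable outside $\mathsf{Bool}$, the second contains $\Box$).

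For the inductive step I would handle the two remaining formula-building cases. If $\beta=\neg\gamma$, then $\gamma$ is itself Boolean, so the inductive hypothesis gives $\tilde{\theta}(\gamma)\in B$; since $\tilde{\theta}(\neg\gamma)=\neg\tilde{\theta}(\gamma)$ and $B$ is closed under $\neg$, we get $\tilde{\theta}(\beta)\in B$. If $\beta=\gamma\wedge\delta$, then both $\gamma$ and $\delta$ are Boolean, so the inductive hypothesis gives $\tilde{\theta}(\gamma),\tilde{\theta}(\delta)\in B$; since $\tilde{\theta}(\gamma\wedge\delta)=\tilde{\theta}(\gamma)\wedge\tilde{\theta}(\delta)$ and $B$ is closed under $\wedge$, we conclude $\tilde{\theta}(\beta)\in B$.

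There is no real obstacle; the induction is routine. The only point worth flagging is that one must observe that the property of being Boolean is inherited by immediate subformulas—deleting the outermost $\neg$ or $\wedge$ neither introduces a $\Box$ nor introduces a variable from $\mathsf{Prop}$—so that the inductive hypothesis genuinely applies to $\gamma$ and $\delta$. No separate disjunction case is needed, because $\vee$ is a defined connective: any Boolean formula written with $\vee$ is literally a formula built from $\neg$ and $\wedge$, already covered above, and closure of $B$ under $\vee_{\mid B}$ follows from closure under $\neg$ and $\wedge$ via the definition $\varphi\vee\psi:=\neg(\neg\varphi\wedge\neg\psi)$.
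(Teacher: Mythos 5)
Your proof is correct and is exactly the argument the paper intends: the paper dismisses this lemma with ``an easy induction on the structure of formulas,'' and your induction—base case from the valuation condition $\theta(\mathtt{p})\in B$ for $\mathtt{p}\in\mathsf{Bool}$, inductive steps from the closure of $B$ under the restricted operations $\neg_{\mid B}$ and $\wedge_{\mid B}$—is precisely that induction spelled out. Your side remarks (that Boolean-ness passes to immediate subformulas, and that $\vee$ needs no separate case since it is defined from $\neg$ and $\wedge$) are also the right points to flag.
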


Our consequence relation for $\mathcal{EL}^+$ is defined just like our consequence relation for $\mathcal{EL}$ but quantifying over modal ortho-Boolean lattices instead of modal ortholattices.

\begin{definition}\label{AlgCon+} Given a class $\mathbf{C}$ of modal ortho-Boolean lattices, define the semantic consequence relation $\vDash_\mathbf{C}^+$, a binary relation on $\mathcal{EL}^+$, as follows: $\varphi\vDash_\mathbf{C}^+\psi$ if for every $L\in\mathbf{C}$ and valuation $\theta$ on $L$, we have $\tilde{\theta}(\varphi)\leq \tilde{\theta}(\psi)$, where $\leq$ is the lattice order of $L$. 
\end{definition}

Turning to the logic, we now explicitly include the principle of distributivity for formulas at a given epistemic level.\footnote{A Fitch-style proof system for $\mathsf{EO}^+$ can be obtained from the Fitch-style proof system for $\mathsf{EO}$ sketched in Footnote \ref{EOFitch} by adding a rule of Level-wise Reiteration, which is like Fitch's rule of Reiteration except that it only allows reiteration of formulas from $\mathcal{B}_n$ into subproofs whose opening assumptions are from $\mathcal{B}_n$.} 

\begin{definition}\label{EOplus} Let $\mathsf{EO}^+$ be defined just like $\mathsf{EO}$ in Definition \ref{EODef} for arbitrary formulas $\varphi,\psi,\chi\in\mathcal{EL}^+$ but in addition we have for all $n\in \mathbb{N}$ and $\alpha,\beta,\gamma\in\mathcal{B}_n$:
\begin{itemize}
\item[16.] $\alpha\wedge (\beta\vee\gamma)\vdash (\alpha\wedge \beta)\vee (\alpha\wedge\gamma)$.
\end{itemize}
\end{definition}

Adding distributivity for formulas at a given epistemic level allows us to derive all principles of classical logic for formulas at that level. For example, we can derive the psueodcomplementation principle that $\alpha\wedge\beta\vdash\bot$ implies $\alpha\vdash\neg\beta$ for $\alpha,\beta\in\mathcal{B}_n$ as in the proof of Proposition \ref{PseudoToBoole}.

We can also derive principles of classical normal modal logic for formulas at a given epistemic level.

\begin{example}\label{DiamondBoxDiamond} We can derive that $\Diamond \delta\wedge \Box\gamma\vdash \Diamond (\delta\wedge\gamma)$ for $\delta,\gamma\in \mathcal{B}_n$ as follows:  \[\Diamond \delta\wedge\Box \gamma\wedge \Box\neg (\delta\wedge \gamma)\vdash\Diamond \delta\wedge \Box(\gamma\wedge \neg (\delta\wedge \gamma))\vdash \Diamond \delta\wedge \Box \neg\delta\vdash \bot,\] so by pseudocomplementation for formulas in $\mathcal{B}_{n+1}$, we have  $\Diamond \delta\wedge \Box\gamma\vdash \Diamond (\delta\wedge\gamma)$. Note that we would not want this principle for formulas of different epistemic levels, since $\Diamond p\wedge \Box\Diamond\neg p$ should not entail $\Diamond (p\wedge\Diamond\neg p)$.
\end{example}

More generally, we have the following completeness theorem.

\begin{theorem}\label{EO+comp1}  The logic $\textsf{EO}^+$ (summarized in Figure \ref{EO+Fig}) is sound and complete with respect to the class $\mathbf{EO}^+$ of all epistemic ortho-Boolean lattices according to the consequence relation of Definition \ref{AlgCon+}: for all $\varphi,\psi\in\mathcal{EL}^+$, we have $\varphi\vdash_{\mathsf{EO}^+}\psi$ if and only if $\varphi\vDash_\mathbf{EO}^+\psi$.
\end{theorem}

\begin{proof} Soundness is again straightforward. For completeness, as in the proofs of Theorems \ref{AlgComp1} and \ref{AlgComp2}, we consider the Lindenbaum-Tarski algebra $L$ of $\textsf{EO}^+$. Let $B=\{[\beta]\mid \beta\mbox{ Boolean}\}$, and note that $B$ forms a subortholattice of $L$. As a consequence of the distributivity rule of $\mathsf{EO}^+$, each $B_n$ generated from $B$ is a Boolean algebra under the restricted operations of $L$. Hence we have an epistemic ortho-Boolean lattice. Finally, let $\theta$ be the valuation on  $L$ with $\theta(p)=[p]$ for all $p\in\mathsf{Prop}$ and $\theta(\mathtt{p})=[\mathtt{p}]$ for all $\mathtt{p}\in\mathsf{Bool}$. The rest of the proof is the same as for Theorems \ref{AlgComp1} and \ref{AlgComp2}.\end{proof}

\noindent As before, soundness and completeness with respect to \textsf{S5} epistemic ortho-Boolean lattices is also straightforward by adding the rules $\Box\varphi\vdash\Box\Box\varphi$ and $\Diamond \varphi\vdash\Box\Diamond\varphi$ to Definition \ref{EOplus}.

 $\mathsf{EO}^+$ is thus our proposed base logic for epistemic modals. $\mathsf{EO}^+$ clearly satisfies all the desiderata of \S~\ref{des} except perhaps for conservativity. When it comes to conservativity,  $\mathsf{EO}^+$ preserves much of classical modal logic, but we will see in \S  ~\ref{EpExtSection} that there are plausible \textsf{E}-logics that preserve even more of classical modal logic. We will raise the question whether $\mathsf{EO}^+$ is ``conservative enough'' or whether we want to also commit to those further principles or to other principles of classical modal logic.

\begin{figure}[h]
\begin{center}
\begin{tabular}{|ll|}
\hline &\\
1. $\varphi\vdash\top$; & 6.  $\neg\neg \varphi\vdash\varphi$; \\
2. $\varphi\vdash\varphi$; & 7. $\varphi\wedge\neg\varphi\vdash\psi$;\\
3. $\varphi\wedge\psi\vdash\varphi$; \qquad\qquad &8. if $\varphi\vdash\psi$ and $\psi\vdash\chi$, then $\varphi\vdash\chi$;\\
4. $\varphi\wedge\psi\vdash\psi$; &9. if $\varphi\vdash\psi$ and $\varphi\vdash\chi$, then $\varphi\vdash \psi\wedge\chi$; \\
5.  $\varphi\vdash\neg\neg\varphi$; & 10. if $\varphi\vdash\psi$, then $\neg\psi\vdash\neg\varphi$.\\ & \\

11. if $\varphi\vdash\psi$, then $\Box\varphi\vdash\Box\psi$; & 14. $\Box \varphi\vdash\varphi$; \\
12. $\Box\varphi\wedge\Box\psi\vdash \Box(\varphi\wedge\psi)$; & 15. $\neg \varphi\wedge\Diamond \varphi\vdash \bot$;\\
13. $\varphi\vdash\Box\top $; & \\ & \\

16. $\alpha\wedge (\beta\vee\gamma)\vdash (\alpha\wedge \beta)\vee (\alpha\wedge\gamma)$ & \\
\quad\;\, for $\alpha,\beta,\gamma\in\mathcal{B}_n$. & \\  & \\
\hline
\end{tabular}
\end{center}
\caption{Principles of the epistemic orthologic $\mathsf{EO}^+$}\label{EO+Fig}
\end{figure}

\section{Possibility semantics}\label{PossSem}

While the algebraic semantics of Section~\ref{algebraic} shows perspicuously exactly how we are departing from the Boolean algebraic semantics underlying classical logic,
 some may be unsatisfied with the way in which the algebraic approach simply builds into our semantics---in the form of equations or inequalities on lattices---precisely the logical principles we want to get out. Thus, in this section, we turn to a more concrete and perhaps more intuitive and illuminating \textit{possibility semantics} for epistemic orthologic. We begin with possibility semantics for non-modal orthologic, reviewing the approach of \citealt{Goldblatt1974}.\footnote{A similar approach appears in \citealt{Dishkant1972}; for comparison of the two approaches, see \citealt[Remark 2.12]{Holliday2021}.} Although Goldblatt does not classify his semantics under the banner of ``possibility semantics,'' this classification is justified by the close connection, explained in \citealt{Holliday2021}, between his semantics for orthologic using a relation of \textit{compatibility} (or what he calls \textit{proximity}) and possibility semantics for classical logic using a relation of \textit{refinement}.\footnote{In fact, Goldblatt focuses on an equivalent semantics for orthologic using the complement of compatibility, namely \textit{incompatibility} or  \textit{orthogonality}, but whether one works with compatibility or incompatibility is just a matter of preference.} Indeed, classical possibility semantics as developed in \citealt{Humberstone1981} and \citealt{Holliday2014,HollidayForthA,HollidayForthB} can be recast in terms of compatibility instead of refinement (see Remark \ref{BooleanCase} and \citealt{Holliday2022}). Here we assume no previous exposure to possibility semantics of any kind. All we assume is familiarity with possible world semantics, of which possibility semantics is a generalization.

\subsection{Review of possibility semantics for orthologic}\label{ReviewPoss}

Possibility semantics for orthologic replaces the set $W$ of worlds from classical possible world semantics with a set $S$ of possibilities, endowed with a \textit{compatibility relation} $\between$ between possibilities. 

\begin{definition}\label{CompFrame} A (\textit{symmetric}) \textit{compatibility frame} is a pair $\mathcal{F}=\langle S,\between\rangle$ where $S$ is a nonempty set and $\between$ is a reflexive and symmetric binary relation on $S$. 
\end{definition}

A set $W$ of possible worlds may be regarded as a compatibility frame in which each $w\in W$ is compatible only with itself: $v\between w$ implies $v=w$. Think of $x\between y$ as meaning that \textit{$x$ does not settle as true anything that $y$ settles as false}. If $w$ and $v$ are complete possible worlds, then---assuming distinct worlds must differ in some respect---$w$ must settle something as true that $y$ settles as false. However, unlike possible worlds, distinct \textit{possibilities} may be compatible with each other---a sign of their \textit{partiality}. For example, the possibility that it is raining in Beijing is compatible with the possibility that it is sunny in Malibu. 

The above notion of compatibility is weaker than what might be called \textit{compossibility}: $x$ and $y$ are compossible if there is some possibility $z$ that settles as true everything that $x$ settles as true and everything that $y$ settles as true. As we shall see below, in the above sense of compatibility, a possibility $x$ settling as true that \textit{it might be raining in Beijing} can be \textit{compatible} with a possibility $y$ settling as true that \textit{it is not raining in Beijing}, since settling as true \textit{that it might be raining in Beijing} does not entail settling as false \textit{that it is not raining in Beijing}; however, such an $x$ and $y$ cannot be \textit{compossible}, since there can be no possibility $z$ that settles as true that \textit{it might be raining in Beijing but it isn't}.

\begin{remark} \citealt{Holliday2021,Holliday2022} considers more general compatibility frames in which $\between$ is only assumed to be reflexive,\footnote{In this case, Definition \ref{RegProp} below is modified to use `$\forall z\between^{-1}y$' instead of `$\forall z\between y$'.} as such frames can be used to represent arbitrary complete lattices; but here we use the term `compatibility frame' for only the symmetric frames, which will give rise to complete ortholattices.\end{remark}

We now turn to the question of what counts as a \textit{proposition}. In basic possible world semantics for classical logic, every set of worlds is (or corresponds to) a proposition; as a result, the collection of propositions ordered by inclusion forms a Boolean algebra. In semantics for \textit{non-classical} logics, not every set of worlds or possibilities can count as a proposition---for then we would still be stuck with a Boolean algebra of propositions. Thus, for example, in possible world semantics for intuitionistic logic (\citealt{Dummett1959}, \citealt{Grzegorczyk1964}, \citealt{Kripke1965}), only sets of worlds that are upward closed with respect to an information order count as legitimate propositions.  In possibility semantics for orthologic, only sets of possibilities that behave sensibly with respect to the compatibility relation count as legitimate propositions.

\begin{definition}\label{RegProp} Given a compatibility frame $\langle S,\between\rangle$, a set $A\subseteq S$ is said to be \textit{$\between$-regular} if for all $x\in S$,
\[x\not\in A \Rightarrow \exists y\between x\;\forall z\between y\;\, z\not\in A.\]
\end{definition}
\noindent The idea is that if $x$ does \textit{not} make a proposition $A$ true, then there should be a possibility $y$ compatible with $x$ that makes $A$ \textit{false}, so that all possibilities $z$ compatible with $y$ do not make $A$ true.\footnote{\label{ClosureOperator}Another way to view the definition of $\between$-regular sets is via the function $c_\between:\mathcal{P}(S)\to \mathcal{P}(S)$ defined as follows: for $A\subseteq S$, we have $c_\between(A)=\{x\in S\mid \forall y\between x\;\exists z\between y:z\in A\}$. This $c_\between$ is a \textit{closure operator} (i.e., inflationary, idempotent, and monotone operation) on $\mathcal{P}(S)$. The \textit{fixpoints} of $c_\between$, i.e., those $A$ such that $A=c_\between(A)$, are exactly the $\between$-regular sets as defined above.} See Figure~\ref{RegDefFig}. If $x$ already makes $A$ false, so no possibility compatible with $x$ makes $A$ true, then we can simply take $y=x$. The interesting case is where $x$ neither makes $A$ true nor makes $A$ false. Thus, regularity can be expressed in the slogan: \[\mbox{Indeterminacy Implies Compatibility with Falsity.}\]
Thus, if $A$ is indeterminate at $x$, then $x$ is compatible with a $y$ that makes $A$ false.

Let us emphasize that regularity is a constraint on what counts as a \emph{proposition} in possibility semantics, not a constraint on what counts as a \emph{possibility}. Any pair of a nonempty set $S$ and compatibility relation on that set is a frame for possibility semantics, just as any pair of a set and accessibility relation is a frame for classical possible world semantics. But unlike in classical possible world semantics, not every subset of $S$ counts as a proposition. 
Restricting what subsets of a state space count as propositions has several precedents: again, from possible world semantics for intuitionistic logic; from standard approaches to probability, where a  $\sigma$-algebra over the state space---which need not be the full powerset algebra---is specified; as well as possible-world approaches to conditionals, where a set of admissible conditional antecedents is specified (as in \citealt{Stalnaker:1970}, where the conditional operator is defined only on sentences in the language, not on arbitrary propositions). Even in classical possible world semantics, what counts as a propositions is implicitly limited to those classes of possible worlds that are also sets (\citealt{Kaplan:1995}).

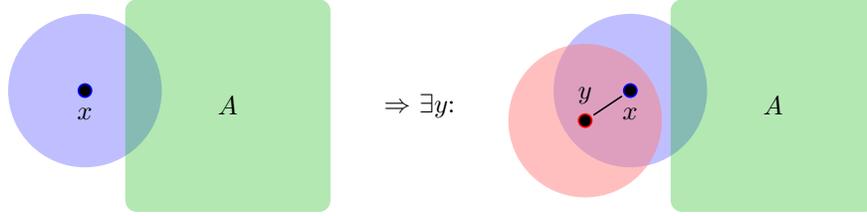
\begin{figure}[H]
\begin{center}
\tikzset{every loop/.style={min distance=10mm,looseness=10}}
\begin{tikzpicture}[->,>=stealth,shorten >=1pt,shorten <=1pt, auto,node
distance=2.5cm,semithick]
\tikzstyle{every state}=[fill=gray!20,draw=none,text=black]

\filldraw[color=blue!50, opacity = 0.5, very thick](-1.25,-.6) circle (1);

\node[circle,draw=blue!100, fill=black!100, label=below:$x$,inner sep=0pt,minimum size=.175cm] at (-1.25,-.6) {{}};

\path[-, draw=darkgreen, opacity=0.3, fill = darkgreen, thick, rounded corners] (2, .5) -- (2, -2.2) --  (.7, -2.2) -- (-.7, -2.2) -- (-.7, .6)  -- (2, .6) -- (2, .5) ; 

\node at (.65,-.8) {{$A$}};

\node at (3.2,-.8) {{$\Rightarrow $ $\exists y$:}};

\filldraw[color=blue!50, opacity = 0.5, very thick](6,-.6) circle (1);

\path[-, draw=darkgreen, opacity=0.3, fill = darkgreen, thick, rounded corners] (9.25, .5) -- (9.25, -2.2) --  (7.95, -2.2) -- (6.55, -2.2) -- (6.55, .6)  -- (9.25, .6) -- (9.25, .5) ; 

\node at (7.9,-.8) {{$A$}};

\filldraw[color=red!50, opacity = 0.5, very thick](5.4,-1) circle (1);

\node[circle,draw=blue!100, fill=black!100, label=below:$x$,inner sep=0pt,minimum size=.175cm] (a) at (6,-.6) {{}};

\node[circle,draw=red!100, fill=black!100, label=above:$y$,inner sep=0pt,minimum size=.175cm] (b) at (5.4,-1) {{}};
\path (a) edge[-] node {{}} (b);

\end{tikzpicture}
\end{center}
\caption{A depiction of the $\between$-regularity of $A$: if $x$ is not in $A$, then $x$ is compatible with a $y$ that is not compatible with any possibility in $A$. The blue disk represents the set of possibilities compatible with $x$; the red disk represents the set of possibilities compatible with $y$; and the green region represents $A$.}\label{RegDefFig}
\end{figure}

From the compatibility relation, it is useful to define a relation of \textit{refinement}. This can be done in two equivalent ways: $y$ refines $x$ if every proposition settled true by $x$ is also settled true by $y$, or equivalently, if every possibility compatible with $y$ is compatible with $x$.

\begin{lemma}\label{RefinementDef} For any compatibility frame $\langle S,\between\rangle$, the following are equivalent for any $x,y\in S$:
\begin{enumerate}
\item\label{RefinementDef1} for all $\between$-regular sets $A\subseteq S$, if $x\in A$, then $y\in A$;
\item\label{RefinementDef2} for all $z\in S$, if $z\between y$ then $z\between x$.
\end{enumerate}
When these conditions hold, we write $y\sqsubseteq x$.
\end{lemma}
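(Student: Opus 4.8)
The plan is to prove the two implications $(\ref{RefinementDef2})\Rightarrow(\ref{RefinementDef1})$ and $(\ref{RefinementDef1})\Rightarrow(\ref{RefinementDef2})$ separately, leaning throughout on the symmetry of $\between$. The only genuinely creative ingredient is an explicit family of $\between$-regular sets that is rich enough to separate points; everything else is bookkeeping with the definition of regularity. Concretely, for each $z_0\in S$ I would consider its ``orthogonal'' $z_0^\perp:=\{u\in S\mid \lnot(u\between z_0)\}$, and the key standalone fact I expect to need is that every such $z_0^\perp$ is $\between$-regular.

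First I would handle $(\ref{RefinementDef2})\Rightarrow(\ref{RefinementDef1})$ by contradiction. Assume that $z\between y$ implies $z\between x$ for all $z$, let $A$ be any $\between$-regular set with $x\in A$, and suppose toward a contradiction that $y\notin A$. Regularity applied to $y$ yields some $w\between y$ such that $z\notin A$ for every $z\between w$. Since $w\between y$, the hypothesis $(\ref{RefinementDef2})$ gives $w\between x$, and symmetry gives $x\between w$; but then $x$ is itself one of the $z$ with $z\between w$, so $x\notin A$, contradicting $x\in A$. Hence $y\in A$, which is $(\ref{RefinementDef1})$.

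For the converse $(\ref{RefinementDef1})\Rightarrow(\ref{RefinementDef2})$ I would argue contrapositively, and this is where the separating regular sets do the work. Suppose $(\ref{RefinementDef2})$ fails, so there is some $z_0$ with $z_0\between y$ but $\lnot(z_0\between x)$. Taking $A:=z_0^\perp$, symmetry gives $x\in A$ (from $\lnot(z_0\between x)$) while $y\notin A$ (from $z_0\between y$). Thus, granting that $z_0^\perp$ is $\between$-regular, $A$ witnesses the failure of $(\ref{RefinementDef1})$, as desired.

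The step I expect to require the most care---though it is still routine---is verifying that each $z_0^\perp$ is $\between$-regular, since this is the one place where a witness must be produced rather than merely unpacked. Given $u\notin z_0^\perp$, i.e.\ $u\between z_0$, the natural choice is $w:=z_0$: symmetry gives $w\between u$, and for every $v\between w=z_0$ we have $v\between z_0$ by symmetry, i.e.\ $v\notin z_0^\perp$, which is exactly the regularity clause. I would isolate this as a small lemma, since the same fact underlies the ortholattice structure of the regular sets later. Finally, I would note that the whole argument uses only the symmetry of $\between$; reflexivity is not needed here, which matches the remark that the general (merely reflexive) setting instead requires $\between^{-1}y$ in the definition of regularity.
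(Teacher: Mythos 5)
Your proposal is correct and follows essentially the same route as the paper's proof: for $(\ref{RefinementDef1})\Rightarrow(\ref{RefinementDef2})$ the paper likewise uses the separating set $\{v\in S\mid \lnot(z_0\between v)\}$ (your $z_0^\perp$, up to symmetry) and verifies its regularity by taking $z_0$ itself as the witness, and for $(\ref{RefinementDef2})\Rightarrow(\ref{RefinementDef1})$ the paper runs the same argument you give, just phrased via the contrapositive of regularity (taking $y''=x$ as the compatible element) rather than by contradiction. The two presentations are logically identical, and your closing observation that only symmetry of $\between$ is used also matches the paper's remark on merely reflexive frames.
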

\begin{proof} From \ref{RefinementDef1} to \ref{RefinementDef2}, suppose $z\between y$ but not $z\between x$. Let $A=\{v\in S\mid \mbox{not }z\between v\}$. We claim that $A$ is $\between$-regular. We must show that if $u\not\in A$, then $\exists u'\between u$ $\forall u''\between u'$  $u''\not\in A$, i.e., that if $z\between u$, then $\exists u'\between u$ $\forall u''\between u'$  $z\between u''$. Indeed, let $u'=z$ and use the symmetry of $\between$. Thus, $A$ is $\between$-regular, and $x\in A$ but $y\not\in A$.

From \ref{RefinementDef2} to \ref{RefinementDef1}, let $A$ be a $\between$-regular set. Suppose $x\in A$ and condition \ref{RefinementDef2} holds. We claim that for all $y'\between y$ there is a $y''\between y'$ with $y''\in A$, so $y\in A$ by the $\between$-regularity of $A$. Suppose $y'\between y$. Then since condition \ref{RefinementDef2} holds, we have $y'\between x$ and hence $x\between y'$ by symmetry of $\between$. Thus, we may take $y''=x$.
\end{proof}

For any possibility $x$, the set of refinements of $x$ is a $\between$-regular set and hence may be regarded as a proposition---the proposition \textit{that possibility $x$ obtains}.
\begin{lemma}\label{Downx} Given a compatibility frame $\mathcal{F}=\langle S,\between\rangle$ and $x\in S$, the set $\mathord{\downarrow}x=\{y\in S\mid y\sqsubseteq x\}$ is $\between$-regular.
\end{lemma}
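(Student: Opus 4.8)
The plan is to unfold the definition of $\between$-regularity (Definition \ref{RegProp}) for the set $\mathord{\downarrow}x$ and reduce everything to the purely relational characterization of refinement given by condition \ref{RefinementDef2} of Lemma \ref{RefinementDef}. Concretely, I must show that for every $w\in S$, if $w\notin\mathord{\downarrow}x$, then there is some $y\between w$ such that every $z\between y$ satisfies $z\notin\mathord{\downarrow}x$. (I deliberately rename the universally quantified element of the regularity condition to $w$, to keep it distinct from the fixed $x$ defining the downset.)

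First I would translate the hypothesis $w\notin\mathord{\downarrow}x$, i.e.\ $w\not\sqsubseteq x$. By condition \ref{RefinementDef2} of Lemma \ref{RefinementDef}, $w\sqsubseteq x$ holds exactly when $z\between w$ implies $z\between x$ for all $z$; so its failure yields a witness $z_0\in S$ with $z_0\between w$ but not $z_0\between x$. The natural candidate for the required $y$ is this witness itself: set $y=z_0$. Then $y\between w$ holds by the choice of $z_0$, which discharges the existential in the regularity condition. It remains to verify the universal part: for every $z\between z_0$ we must have $z\notin\mathord{\downarrow}x$.

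To establish this, I would argue by contradiction. Suppose $z\between z_0$ and yet $z\sqsubseteq x$. Applying condition \ref{RefinementDef2} to $z\sqsubseteq x$ says that every possibility compatible with $z$ is also compatible with $x$. Since $\between$ is symmetric, $z\between z_0$ gives $z_0\between z$, so $z_0$ is compatible with $z$ and hence with $x$—contradicting the defining property of $z_0$. Thus no such $z$ exists, and $\mathord{\downarrow}x$ is $\between$-regular.

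I do not anticipate a genuine obstacle here: the argument is a short chase through the two defining conditions, invoking symmetry of $\between$ twice. The one point requiring care is to work with the relational characterization \ref{RefinementDef2} of refinement rather than the propositional one \ref{RefinementDef1}; the latter quantifies over all $\between$-regular sets, so using it would risk circularity, since $\between$-regularity of $\mathord{\downarrow}x$ is precisely what is being proved. Keeping that bookkeeping straight is the main thing to watch.
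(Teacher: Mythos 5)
Your proof is correct and is essentially the paper's own argument: both take the witness $z_0\between w$ with not $z_0\between x$ (the paper's $y'$) as the required possibility, and both use the relational characterization \ref{RefinementDef2} of $\sqsubseteq$ together with symmetry of $\between$ to rule out any refinement of $x$ compatible with it. The only cosmetic difference is that you phrase the final step as a contradiction where the paper argues directly by contraposition; your remark about avoiding the propositional characterization \ref{RefinementDef1} to prevent circularity is a sound observation.
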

\begin{proof} Suppose $y\not\in \mathord{\downarrow}x$. Hence there is a $y'\between y$ such that \textit{not} $y'\between x$. Now consider any $y''\between y'$. Since $y'\between y''$ but \textit{not} $y'\between x$, we have that $y''\not\sqsubseteq x$. Thus, we have shown that if $y\not\in \mathord{\downarrow}x$, then $\exists y'\between y$ $\forall y''\between y'$ $y''\not\in  \mathord{\downarrow}x$. Hence  $\mathord{\downarrow}x$ is $\between$-regular.\end{proof}

We can also use the notion of refinement to define \textit{possible worlds}: a world is a possibility that refines every possibility with which it is compatible.

\begin{definition}\label{WorldDef} Given a compatibility frame $\mathcal{F}=\langle S,\between\rangle$, a \textit{world in $\mathcal{F}$} is a $w\in S$ such that for all $x\in S$, $w\between x$ implies $w\sqsubseteq x$.
\end{definition}

Now in line with the idea that only $\between$-regular sets of possibilities are propositions, a \textit{model} should interpret the propositional variables of our formal language as $\between$-regular sets.

\begin{definition}\label{CompModDef}
A \textit{compatibility model} is a pair $\mathcal{M}=\langle \mathcal{F},V\rangle$ where $\mathcal{F}=\langle S,\between\rangle$ is a compatibility frame and $V$ is a function assigning to each $p\in\mathsf{Prop}$ a $\between$-regular set $V(p)\subseteq S$.  We say that $\mathcal{M}$ is \textit{based on} $\mathcal{F}$.
\end{definition}

We now observe how any compatibility frame gives rise to an ortholattice. Compare the following result, due to Birkhoff \citeyearpar[\S\S~32-4]{Birkhoff1940}, to the way in which for any set $W$, the family of all subsets ordered by the subset relation $\subseteq$ forms a complete Boolean algebra. 

\begin{proposition}\label{FrameToOrtho} Given any compatibility frame $\mathcal{F}=\langle S,\between\rangle$, the $\between$-regular sets ordered by the subset relation $\subseteq$ form a complete lattice, which becomes an ortholattice with the operation $\neg$ defined by 
\begin{equation}\neg A=\{x\in S\mid \forall y\between x\;\, y\not\in A\}.\label{NegEq}\end{equation}
For the lattice operations, we have $A\wedge B=A\cap B$, $A\vee B= \neg(\neg A\cap \neg B)$, $1=S$, and $0=\varnothing$.  We denote the ortholattice of $\between$-regular subsets of $\mathcal{F}$ by~$O(\mathcal{F})$.
\end{proposition}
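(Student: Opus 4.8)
The plan is to recognize the operation $\neg$ defined on all of $\mathcal{P}(S)$ by \eqref{NegEq} as (one half of) the polarity induced by $\between$, so that $c:=\neg\neg$ is a \emph{closure operator} whose closed sets are exactly the $\between$-regular sets. The proposition then reduces to the general fact that the closed sets of a closure operator form a complete lattice under $\subseteq$, with meet given by intersection and join by closure of the union, together with a verification of the three orthocomplementation axioms.

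First I would establish two elementary facts about $\neg$ as an operation on arbitrary subsets: (i) it is \emph{antitone}, i.e.\ $A\subseteq B$ implies $\neg B\subseteq \neg A$, which is immediate from the definition; and (ii) $A\subseteq \neg\neg A$, which is the one place where \emph{symmetry} of $\between$ is used (if $x\in A$ and $y\between x$, then $x\between y$, so $x$ itself witnesses $y\notin \neg A$, whence $x\in \neg\neg A$). From (i) and (ii) the usual argument gives $\neg\neg\neg=\neg$, so $c=\neg\neg$ is monotone, inflationary, and idempotent, i.e.\ a closure operator. I would then identify the $\between$-regular sets with the fixed points of $c$: unpacking Definition \ref{RegProp}, the clause ``$\exists y\between x\,\forall z\between y\;z\notin A$'' says exactly that $x\notin \neg\neg A$, so $A$ is $\between$-regular iff $\neg\neg A\subseteq A$, which with (ii) means $A=\neg\neg A$. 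In particular each $\neg A$ is regular, since $\neg A=\neg\neg\neg A$.

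Next I would invoke the standard theory of closure operators. The closed sets are closed under arbitrary intersection (if each $A_i$ is closed then $c(\bigcap_i A_i)\subseteq \bigcap_i c(A_i)=\bigcap_i A_i$), so they form a complete lattice under $\subseteq$ with $A\wedge B=A\cap B$ and with join equal to the closure of the union. Using the pointwise De Morgan identity $\neg A\cap \neg B=\neg(A\cup B)$ (a direct unpacking of the definitions), the join of regular $A,B$ is $c(A\cup B)=\neg\neg(A\cup B)=\neg(\neg A\cap\neg B)$, matching the claimed $A\vee B$. The bounds are $1=S$ (vacuously regular) and $0=\varnothing$; that $\varnothing$ is regular, equivalently $\neg S=\varnothing$, is where \emph{reflexivity} of $\between$ enters.

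Finally I would check the orthocomplementation axioms on this complete lattice. Involution $\neg\neg A=A$ is just closedness of $A$; order-reversal is antitonicity (i); and for complementation I would note that $B\cap\neg B=\varnothing$ for \emph{every} subset $B$ by reflexivity (any $x$ in both gives $x\between x$ with $x\in B$, contradicting $x\in\neg B$), so $A\wedge\neg A=A\cap\neg A=\varnothing=0$ and $A\vee\neg A=\neg(\neg A\cap\neg\neg A)=\neg\varnothing=S=1$. The only genuinely non-routine steps are (ii), where symmetry is essential, and the observation that the regularity clause is literally the inclusion $\neg\neg A\subseteq A$; once these are in place, the rest is the formal machinery of closure operators and the frame conditions (reflexivity for the bounds and for $A\wedge\neg A=0$) applied mechanically.
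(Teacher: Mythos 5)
Your proof is correct and follows essentially the same route as the paper's: the paper gives no proof of this proposition but cites Birkhoff's polarity theory, and your argument—recognizing $\neg$ as the polar of the relation $\between$, so that $\neg\neg$ is a closure operator whose fixed points are exactly the $\between$-regular sets, then invoking the complete lattice of closed sets and checking the orthocomplementation axioms—is precisely that classical argument. Your bookkeeping of where the frame conditions enter (symmetry only for $A\subseteq\neg\neg A$, reflexivity for $\neg S=\varnothing$ and $A\cap\neg A=\varnothing$) is also accurate.
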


\noindent Thus, $\wedge$ is intersection as in possible world semantics, but note how $\neg$ and $\vee$ are no longer interpreted as in possible world semantics. In possible world semantics, $\neg A=\{x\in W\mid x\not\in A\}$, in contrast to (\ref{NegEq}), and $\vee$ is union, in contrast to what we get when we unpack the definition $\vee$ as $A\vee B= \neg(\neg A\cap \neg B)$:
\begin{equation}A\vee B=\{x\in S\mid \forall y\between x\,\exists z\between y: z\in A\cup B\},\label{OrEq}\end{equation}
i.e., $x$ makes $A\vee B$ true just in case every possibility compatible with $x$ is in turn compatible with a possibility that makes one of $A$ or $B$ true.

\begin{remark}\label{BooleanCase} As observed in \citealt{Holliday2021} (Example 3.16), the ortholattice arising from a compatibility frame as in Proposition \ref{FrameToOrtho} is a Boolean algebra if for all $x,y\in S$, if $x\between y$, then  $x$ and $y$ have a common refinement, i.e., there is a $z$ such that $z\sqsubseteq x$ and $z\sqsubseteq y$. Indeed, this condition is not only sufficient but also necessary for the ortholattice to be Boolean: for if $x$ and $y$ have no common refinement, so $\mathord{\downarrow}x\cap\mathord{\downarrow}y=\varnothing$, then in a Boolean algebra we must have $\mathord{\downarrow}x \subseteq\neg \mathord{\downarrow}y$, contradicting $x\between y$. Thus, classicality corresponds to the condition (to use terminology introduced after Definition \ref{CompFrame}) that \textit{compatibility} implies \textit{compossibility}. 

Our departure from classicality can now be seen as follows: we want there to be distinct but \textit{compatible} possibilities that settle $A$ and $\Diamond\neg A$ as true, respectively, as neither $A$ nor $\Diamond\neg A$ should entail the negation of the other; yet such possibilities should not be \textit{compossible}---they should have no common refinement, since no single possibility should settle both $A$ and $\Diamond\neg A$ as true.\end{remark}

\begin{remark} It is noteworthy that two compatibility frames $(S,\between)$ and $(S,\between')$ can have the same derived refinement relation but give rise to non-isomorphic ortholattices; an example is provided in the second of the  notebooks cited in \S~\ref{Intro}. Thus, although the refinement relation has all the information one needs in a classical setting (wherein compatibility can be defined from a primitive partial order of refinement by: $x\between y$ if $x$ and $y$ have a common refinement), it does not in our non-classical setting here.
\end{remark}

\begin{example}\label{CompEx} Figure \ref{Fig2} shows a simple compatibility frame with five possibilities (above) and its refinement relation (below) defined from compatibility as in Lemma \ref{RefinementDef}.

\begin{figure}[h]
\begin{center}
\tikzset{every loop/.style={min distance=10mm,looseness=10}}
\begin{tikzpicture}[->,>=stealth',shorten >=1pt,shorten <=1pt, auto,node
distance=2.5cm,semithick]
\tikzstyle{every state}=[fill=gray!20,draw=none,text=black]

\node[circle,draw=black!100, fill=black!100, label=below:$x_1$,inner sep=0pt,minimum size=.175cm] (1) at (0,0) {{}};
\node[circle,draw=black!100, fill=black!100, label=below:$x_2$,inner sep=0pt,minimum size=.175cm] (2) at (1.5,0) {{}};
\node[circle,draw=black!100, fill=black!100, label=below:$x_3$,inner sep=0pt,minimum size=.175cm] (3) at (3,0) {{}};
\node[circle,draw=black!100, fill=black!100, label=below:$x_4$,inner sep=0pt,minimum size=.175cm] (4) at (4.5,0) {{}};
\node[circle,draw=black!100, fill=black!100, label=below:$x_5$,inner sep=0pt,minimum size=.175cm] (5) at (6,0) {{}};

\path (1) edge[-] node {{}} (2);
\path (2) edge[-] node {{}} (3);
\path (3) edge[-] node {{}} (4);
\path (4) edge[-] node {{}} (5);

\end{tikzpicture} \vspace{.5in}

\tikzset{every loop/.style={min distance=10mm,looseness=10}}
\begin{tikzpicture}[->,>=stealth',shorten >=1pt,shorten <=1pt, auto,node
distance=2.5cm,semithick]
\tikzstyle{every state}=[fill=gray!20,draw=none,text=black]

\node[circle,draw=black!100, fill=black!100, label=below:$x_1$,inner sep=0pt,minimum size=.175cm] (1) at (0,0) {{}};
\node[circle,draw=black!100, fill=black!100, label=below:$x_2$,inner sep=0pt,minimum size=.175cm] (2) at (1.5,0) {{}};
\node[circle,draw=black!100, fill=black!100, label=below:$x_3$,inner sep=0pt,minimum size=.175cm] (3) at (3,0) {{}};
\node[circle,draw=black!100, fill=black!100, label=below:$x_4$,inner sep=0pt,minimum size=.175cm] (4) at (4.5,0) {{}};
\node[circle,draw=black!100, fill=black!100, label=below:$x_5$,inner sep=0pt,minimum size=.175cm] (5) at (6,0) {{}};

\path (1) edge[<-,dashed] node {{}} (2);
\path (4) edge[->,dashed] node {{}} (5);

\end{tikzpicture} 

\end{center}\caption{A compatibility frame (above), where compatible possibilities are linked by an edge in the graph, and its refinement relation (below), where a dashed arrow from $y$ to $z$ indicates that $z$ is a refinement of $y$, i.e., $z\sqsubseteq y$. All reflexive compatibility loops and reflexive refinement loops are omitted.}\label{Fig2}
\end{figure}
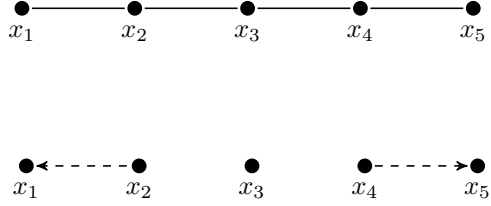

 Figure \ref{RegFig} shows the ten $\between$-regular subsets of the compatibility frame from Figure \ref{Fig2}, each highlighted in green. For example, $\{x_1,x_2,x_3\}$ is $\between$-regular according to Definition \ref{RegProp} because each possibility outside the set is compatible with $x_5$, which is not compatible with anything in the set; but $\{x_2,x_3,x_4\}$ is not $\between$-regular, because $x_1$ is outside the set and yet everything compatible with $x_1$ is compatible with something in the set. Note that ordering  the $\between$-regular subsets in Figure \ref{RegFig} by $\subseteq$ and defining $\neg$ as in Proposition \ref{FrameToOrtho} yields an ortholattice $O(\mathcal{F})$ isomorphic to the ortholattice in Figure \ref{Fig1} (reproduced on the right of Figure \ref{JIfig} below)!

\begin{figure}[h]
\begin{center}

\begin{tikzpicture}[->,>=stealth',shorten >=1pt,shorten <=1pt, auto,node
distance=2.5cm,semithick,scale=.5]
\tikzstyle{every state}=[fill=gray!20,draw=none,text=black]

\node[circle,draw=black!100, fill=green!100, label=below:$x_1$,inner sep=0pt,minimum size=.175cm] (1) at (0,0) {{}};
\node[circle,draw=black!100, fill=green!100, label=below:$x_2$,inner sep=0pt,minimum size=.175cm] (2) at (1.5,0) {{}};
\node[circle,draw=black!100, fill=green!100, label=below:$x_3$,inner sep=0pt,minimum size=.175cm] (3) at (3,0) {{}};
\node[circle,draw=black!100, fill=green!100, label=below:$x_4$,inner sep=0pt,minimum size=.175cm] (4) at (4.5,0) {{}};
\node[circle,draw=black!100, fill=green!100, label=below:$x_5$,inner sep=0pt,minimum size=.175cm] (5) at (6,0) {{}};

\path (1) edge[-] node {{}} (2);
\path (2) edge[-] node {{}} (3);
\path (3) edge[-] node {{}} (4);
\path (4) edge[-] node {{}} (5);

\end{tikzpicture}\vspace{.1in}

\begin{tikzpicture}[->,>=stealth',shorten >=1pt,shorten <=1pt, auto,node
distance=2.5cm,semithick,scale=.5]
\tikzstyle{every state}=[fill=gray!20,draw=none,text=black]

\node[circle,draw=black!100, fill=green!100, label=below:$x_1$,inner sep=0pt,minimum size=.175cm] (1) at (0,0) {{}};
\node[circle,draw=black!100, fill=green!100, label=below:$x_2$,inner sep=0pt,minimum size=.175cm] (2) at (1.5,0) {{}};
\node[circle,draw=black!100, fill=green!100, label=below:$x_3$,inner sep=0pt,minimum size=.175cm] (3) at (3,0) {{}};
\node[circle,draw=black!100, fill=black!100, label=below:$x_4$,inner sep=0pt,minimum size=.175cm] (4) at (4.5,0) {{}};
\node[circle,draw=black!100, fill=black!100, label=below:$x_5$,inner sep=0pt,minimum size=.175cm] (5) at (6,0) {{}};

\path (1) edge[-] node {{}} (2);
\path (2) edge[-] node {{}} (3);
\path (3) edge[-] node {{}} (4);
\path (4) edge[-] node {{}} (5);

\end{tikzpicture}\qquad\qquad\qquad\qquad\qquad \begin{tikzpicture}[->,>=stealth',shorten >=1pt,shorten <=1pt, auto,node
distance=2.5cm,semithick,scale=.5]
\tikzstyle{every state}=[fill=gray!20,draw=none,text=black]

\node[circle,draw=black!100, fill=black!100, label=below:$x_1$,inner sep=0pt,minimum size=.175cm] (1) at (0,0) {{}};
\node[circle,draw=black!100, fill=black!100, label=below:$x_2$,inner sep=0pt,minimum size=.175cm] (2) at (1.5,0) {{}};
\node[circle,draw=black!100, fill=green!100, label=below:$x_3$,inner sep=0pt,minimum size=.175cm] (3) at (3,0) {{}};
\node[circle,draw=black!100, fill=green!100, label=below:$x_4$,inner sep=0pt,minimum size=.175cm] (4) at (4.5,0) {{}};
\node[circle,draw=black!100, fill=green!100, label=below:$x_5$,inner sep=0pt,minimum size=.175cm] (5) at (6,0) {{}};

\path (1) edge[-] node {{}} (2);
\path (2) edge[-] node {{}} (3);
\path (3) edge[-] node {{}} (4);
\path (4) edge[-] node {{}} (5);

\end{tikzpicture} \vspace{.1in}

\begin{tikzpicture}[->,>=stealth',shorten >=1pt,shorten <=1pt, auto,node
distance=2.5cm,semithick,scale=.5]
\tikzstyle{every state}=[fill=gray!20,draw=none,text=black]

\node[circle,draw=black!100, fill=green!100, label=below:$x_1$,inner sep=0pt,minimum size=.175cm] (1) at (0,0) {{}};
\node[circle,draw=black!100, fill=black!100, label=below:$x_2$,inner sep=0pt,minimum size=.175cm] (2) at (1.5,0) {{}};
\node[circle,draw=black!100, fill=black!100, label=below:$x_3$,inner sep=0pt,minimum size=.175cm] (3) at (3,0) {{}};
\node[circle,draw=black!100, fill=black!100, label=below:$x_4$,inner sep=0pt,minimum size=.175cm] (4) at (4.5,0) {{}};
\node[circle,draw=black!100, fill=green!100, label=below:$x_5$,inner sep=0pt,minimum size=.175cm] (5) at (6,0) {{}};

\path (1) edge[-] node {{}} (2);
\path (2) edge[-] node {{}} (3);
\path (3) edge[-] node {{}} (4);
\path (4) edge[-] node {{}} (5);

\end{tikzpicture}

\vspace{.1in}

\begin{tikzpicture}[->,>=stealth',shorten >=1pt,shorten <=1pt, auto,node
distance=2.5cm,semithick,scale=.5]
\tikzstyle{every state}=[fill=gray!20,draw=none,text=black]

\node[circle,draw=black!100, fill=green!100, label=below:$x_1$,inner sep=0pt,minimum size=.175cm] (1) at (0,0) {{}};
\node[circle,draw=black!100, fill=green!100, label=below:$x_2$,inner sep=0pt,minimum size=.175cm] (2) at (1.5,0) {{}};
\node[circle,draw=black!100, fill=black!100, label=below:$x_3$,inner sep=0pt,minimum size=.175cm] (3) at (3,0) {{}};
\node[circle,draw=black!100, fill=black!100, label=below:$x_4$,inner sep=0pt,minimum size=.175cm] (4) at (4.5,0) {{}};
\node[circle,draw=black!100, fill=black!100, label=below:$x_5$,inner sep=0pt,minimum size=.175cm] (5) at (6,0) {{}};

\path (1) edge[-] node {{}} (2);
\path (2) edge[-] node {{}} (3);
\path (3) edge[-] node {{}} (4);
\path (4) edge[-] node {{}} (5);

\end{tikzpicture}\qquad\qquad\qquad\qquad\qquad  \begin{tikzpicture}[->,>=stealth',shorten >=1pt,shorten <=1pt, auto,node
distance=2.5cm,semithick,scale=.5]
\tikzstyle{every state}=[fill=gray!20,draw=none,text=black]

\node[circle,draw=black!100, fill=black!100, label=below:$x_1$,inner sep=0pt,minimum size=.175cm] (1) at (0,0) {{}};
\node[circle,draw=black!100, fill=black!100, label=below:$x_2$,inner sep=0pt,minimum size=.175cm] (2) at (1.5,0) {{}};
\node[circle,draw=black!100, fill=black!100, label=below:$x_3$,inner sep=0pt,minimum size=.175cm] (3) at (3,0) {{}};
\node[circle,draw=black!100, fill=green!100, label=below:$x_4$,inner sep=0pt,minimum size=.175cm] (4) at (4.5,0) {{}};
\node[circle,draw=black!100, fill=green!100, label=below:$x_5$,inner sep=0pt,minimum size=.175cm] (5) at (6,0) {{}};

\path (1) edge[-] node {{}} (2);
\path (2) edge[-] node {{}} (3);
\path (3) edge[-] node {{}} (4);
\path (4) edge[-] node {{}} (5);

\end{tikzpicture}\vspace{.1in}

 \begin{tikzpicture}[->,>=stealth',shorten >=1pt,shorten <=1pt, auto,node
distance=2.5cm,semithick,scale=.5]
\tikzstyle{every state}=[fill=gray!20,draw=none,text=black]

\node[circle,draw=black!100, fill=black!100, label=below:$x_1$,inner sep=0pt,minimum size=.175cm] (1) at (0,0) {{}};
\node[circle,draw=black!100, fill=black!100, label=below:$x_2$,inner sep=0pt,minimum size=.175cm] (2) at (1.5,0) {{}};
\node[circle,draw=black!100, fill=green!100, label=below:$x_3$,inner sep=0pt,minimum size=.175cm] (3) at (3,0) {{}};
\node[circle,draw=black!100, fill=black!100, label=below:$x_4$,inner sep=0pt,minimum size=.175cm] (4) at (4.5,0) {{}};
\node[circle,draw=black!100, fill=black!100, label=below:$x_5$,inner sep=0pt,minimum size=.175cm] (5) at (6,0) {{}};

\path (1) edge[-] node {{}} (2);
\path (2) edge[-] node {{}} (3);
\path (3) edge[-] node {{}} (4);
\path (4) edge[-] node {{}} (5);

\end{tikzpicture}

\vspace{.1in}

\begin{tikzpicture}[->,>=stealth',shorten >=1pt,shorten <=1pt, auto,node
distance=2.5cm,semithick,scale=.5]
\tikzstyle{every state}=[fill=gray!20,draw=none,text=black]

\node[circle,draw=black!100, fill=green!100, label=below:$x_1$,inner sep=0pt,minimum size=.175cm] (1) at (0,0) {{}};
\node[circle,draw=black!100, fill=black!100, label=below:$x_2$,inner sep=0pt,minimum size=.175cm] (2) at (1.5,0) {{}};
\node[circle,draw=black!100, fill=black!100, label=below:$x_3$,inner sep=0pt,minimum size=.175cm] (3) at (3,0) {{}};
\node[circle,draw=black!100, fill=black!100, label=below:$x_4$,inner sep=0pt,minimum size=.175cm] (4) at (4.5,0) {{}};
\node[circle,draw=black!100, fill=black!100, label=below:$x_5$,inner sep=0pt,minimum size=.175cm] (5) at (6,0) {{}};

\path (1) edge[-] node {{}} (2);
\path (2) edge[-] node {{}} (3);
\path (3) edge[-] node {{}} (4);
\path (4) edge[-] node {{}} (5);

\end{tikzpicture}\qquad\qquad\qquad\qquad\qquad \begin{tikzpicture}[->,>=stealth',shorten >=1pt,shorten <=1pt, auto,node
distance=2.5cm,semithick,scale=.5]
\tikzstyle{every state}=[fill=gray!20,draw=none,text=black]

\node[circle,draw=black!100, fill=black!100, label=below:$x_1$,inner sep=0pt,minimum size=.175cm] (1) at (0,0) {{}};
\node[circle,draw=black!100, fill=black!100, label=below:$x_2$,inner sep=0pt,minimum size=.175cm] (2) at (1.5,0) {{}};
\node[circle,draw=black!100, fill=black!100, label=below:$x_3$,inner sep=0pt,minimum size=.175cm] (3) at (3,0) {{}};
\node[circle,draw=black!100, fill=black!100, label=below:$x_4$,inner sep=0pt,minimum size=.175cm] (4) at (4.5,0) {{}};
\node[circle,draw=black!100, fill=green!100, label=below:$x_5$,inner sep=0pt,minimum size=.175cm] (5) at (6,0) {{}};

\path (1) edge[-] node {{}} (2);
\path (2) edge[-] node {{}} (3);
\path (3) edge[-] node {{}} (4);
\path (4) edge[-] node {{}} (5);

\end{tikzpicture}\vspace{.1in}

\begin{tikzpicture}[->,>=stealth',shorten >=1pt,shorten <=1pt, auto,node
distance=2.5cm,semithick,scale=.5]
\tikzstyle{every state}=[fill=gray!20,draw=none,text=black]

\node[circle,draw=black!100, fill=black!100, label=below:$x_1$,inner sep=0pt,minimum size=.175cm] (1) at (0,0) {{}};
\node[circle,draw=black!100, fill=black!100, label=below:$x_2$,inner sep=0pt,minimum size=.175cm] (2) at (1.5,0) {{}};
\node[circle,draw=black!100, fill=black!100, label=below:$x_3$,inner sep=0pt,minimum size=.175cm] (3) at (3,0) {{}};
\node[circle,draw=black!100, fill=black!100, label=below:$x_4$,inner sep=0pt,minimum size=.175cm] (4) at (4.5,0) {{}};
\node[circle,draw=black!100, fill=black!100, label=below:$x_5$,inner sep=0pt,minimum size=.175cm] (5) at (6,0) {{}};

\path (1) edge[-] node {{}} (2);
\path (2) edge[-] node {{}} (3);
\path (3) edge[-] node {{}} (4);
\path (4) edge[-] node {{}} (5);

\end{tikzpicture}

\end{center}
\caption{The ten $\between$-regular subsets of the compatibility frame from Figure \ref{Fig2}, each highlighted in green}\label{RegFig}
\end{figure}
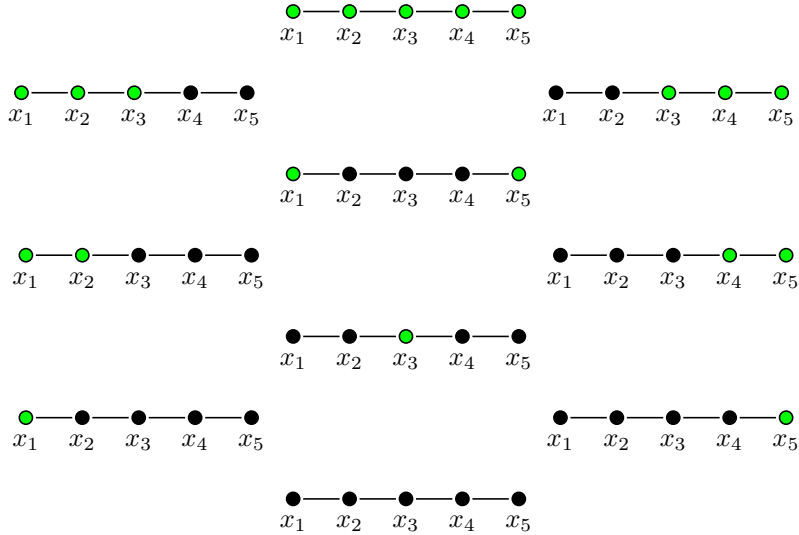

\end{example}

\begin{example}\label{TwoOrthoRep}
The two ortholattices in Figure \ref{OrthoEx} can be realized as the ortholattices of $\between$-regular subsets of the two compatibility frames in Figure \ref{CompFig}. Verifying this claim is a good check for understanding.
\end{example}

\begin{figure}[h]
\begin{center}
\tikzset{every loop/.style={min distance=10mm,looseness=10}}
\begin{tikzpicture}[->,>=stealth',shorten >=1pt,shorten <=1pt, auto,node
distance=2.5cm,semithick]
\tikzstyle{every state}=[fill=gray!20,draw=none,text=black]

\node[circle,draw=black!100, fill=black!100, label=below:$x_1$,inner sep=0pt,minimum size=.175cm] (1) at (0,0) {{}};
\node[circle,draw=black!100, fill=black!100, label=below:$x_2$,inner sep=0pt,minimum size=.175cm] (2) at (1.5,0) {{}};
\node[circle,draw=black!100, fill=black!100, label=below:$x_3$,inner sep=0pt,minimum size=.175cm] (3) at (3,0) {{}};
\node[circle,draw=black!100, fill=black!100, label=below:$x_4$,inner sep=0pt,minimum size=.175cm] (4) at (4.5,0) {{}};

\path (1) edge[-] node {{}} (2);
\path (2) edge[-] node {{}} (3);
\path (3) edge[-] node {{}} (4);

\end{tikzpicture} \qquad\qquad \begin{tikzpicture}[->,>=stealth',shorten >=1pt,shorten <=1pt, auto,node
distance=2.5cm,semithick]
\tikzstyle{every state}=[fill=gray!20,draw=none,text=black]

\node[circle,draw=black!100, fill=black!100, label=above:$x_1$,inner sep=0pt,minimum size=.175cm] (1) at (0,0) {{}};
\node[circle,draw=black!100, fill=black!100, label=above:$x_2$,inner sep=0pt,minimum size=.175cm] (2) at (1.5,0) {{}};
\node[circle,draw=black!100, fill=black!100, label=below:$x_3$,inner sep=0pt,minimum size=.175cm] (3) at (1.5,-1.5) {{}};
\node[circle,draw=black!100, fill=black!100, label=below:$x_4$,inner sep=0pt,minimum size=.175cm] (4) at (0,-1.5) {{}};

\path (1) edge[-] node {{}} (2);
\path (2) edge[-] node {{}} (3);
\path (3) edge[-] node {{}} (4);
\path (1) edge[-] node {{}} (4);
\end{tikzpicture} 

\end{center}\caption{Compatibility frames that give rise to the ortholattices in Figure \ref{OrthoEx} (reproduced on the left and middle of Figure \ref{JIfig})}\label{CompFig}
\end{figure}
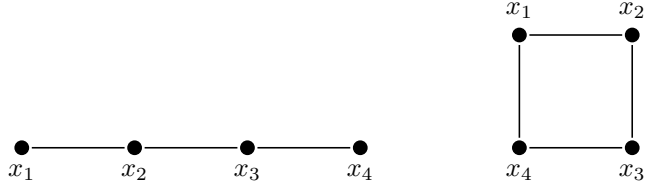

From Examples \ref{CompEx} and \ref{TwoOrthoRep}, one might notice the key to representing any finite ortholattice using a compatibility frame: the possibilities in the frame correspond to the  \textit{join-irreducible} elements of the ortholattices, i.e., those nonzero elements $a$ of the ortholattice that cannot be obtained as a join of elements distinct from $a$---intuitively, those noncontradictory propositions that cannot be expressed as a disjunction of distinct propositions; then two possibilities $a$ and $b$ are compatible if $a\not\leq \neg b$ in the ortholattice. Figure \ref{JIfig} highlights in orange the join-irreducible elements of the three ortholattices  considered so far, elements which one can match up with the possibilities in the corresponding compatibility frames in Figures \ref{CompFig} and \ref{Fig2}.

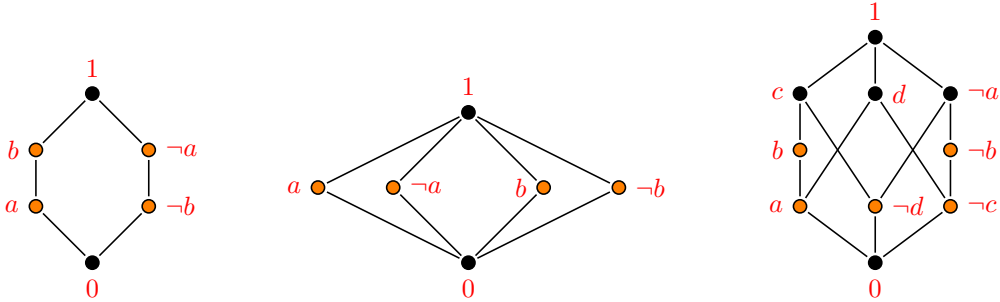
\begin{figure}[h]
\begin{center}
\tikzset{every loop/.style={min distance=10mm,looseness=10}}
\begin{tikzpicture}[->,>=stealth',shorten >=1pt,shorten <=1pt, auto,node
distance=2cm,semithick,every loop/.style={<-,shorten <=1pt}]
\tikzstyle{every state}=[fill=gray!20,draw=none,text=black]
\node[circle,draw=black!100,fill=black!100, label=below:$\textcolor{red}{0}$,inner sep=0pt,minimum size=.175cm] (0) at (0,0) {{}};
\node[circle,draw=black!100,fill=orange!100, label=left:$\textcolor{red}{a}$,inner sep=0pt,minimum size=.175cm] (a) at (-.75,.75) {{}};
\node[circle,draw=black!100,fill=orange!100, label=right:$\textcolor{red}{\neg b}$,inner sep=0pt,minimum size=.175cm] (b) at (.75,.75) {{}};
\node[circle,draw=black!100,fill=orange!100, label=left:$\textcolor{red}{b}$,inner sep=0pt,minimum size=.175cm] (1l) at (-.75,1.5) {{}};
\node[circle,draw=black!100,fill=orange!100, label=right:$\textcolor{red}{\neg a}$,inner sep=0pt,minimum size=.175cm] (1r) at (.75,1.5) {{}};
\node[circle,draw=black!100,fill=black!100, label=above:$\textcolor{red}{1}$,inner sep=0pt,minimum size=.175cm] (new1) at (0,2.25) {{}};

\path (new1) edge[-] node {{}} (1l);
\path (new1) edge[-] node {{}} (1r);
\path (1l) edge[-] node {{}} (a);
\path (1r) edge[-] node {{}} (b);
\path (a) edge[-] node {{}} (0);
\path (b) edge[-] node {{}} (0);

\node[circle,draw=black!100,fill=black!100, label=above:$\textcolor{red}{1}$,inner sep=0pt,minimum size=.175cm] (1) at (5,2) {{}};
\node[circle,draw=black!100,fill=orange!100, label=left:$\textcolor{red}{a}$,inner sep=0pt,minimum size=.175cm] (x) at (3,1) {{}};
\node[circle,draw=black!100,fill=orange!100, label=right:$\textcolor{red}{\neg a}$,inner sep=0pt,minimum size=.175cm] (y) at (4,1) {{}};
\node[circle,draw=black!100,fill=orange!100, label=left:$\textcolor{red}{b}$,inner sep=0pt,minimum size=.175cm] (y') at (6,1) {{}};
\node[circle,draw=black!100,fill=orange!100, label=right:$\textcolor{red}{\neg b}$,inner sep=0pt,minimum size=.175cm] (z) at (7,1) {{}};
\node[circle,draw=black!100,fill=black!100, label=below:$\textcolor{red}{0}$,inner sep=0pt,minimum size=.175cm] (0) at (5,0) {{}};
\path (1) edge[-] node {{}} (y);
\path (1) edge[-] node {{}} (y');
\path (1) edge[-] node {{}} (x);
\path (1) edge[-] node {{}} (z);
\path (x) edge[-] node {{}} (0);
\path (y) edge[-] node {{}} (0);
\path (y') edge[-] node {{}} (0);
\path (z) edge[-] node {{}} (0);

\end{tikzpicture}\qquad\quad\;\begin{tikzpicture}[->,>=stealth',shorten >=1pt,shorten <=1pt, auto,node
distance=2.5cm,semithick]
\tikzstyle{every state}=[fill=gray!20,draw=none,text=black]
\node[circle,draw=black!100, fill=black!100, label=below:$\textcolor{red}{0}$,inner sep=0pt,minimum size=.175cm] (0) at (0,0) {{}};
\node[circle,draw=black!100, fill=orange!100, label=right:$\textcolor{red}{\neg d}$,inner sep=0pt,minimum size=.175cm] (d) at (0,.75) {{}};

\node[circle,draw=black!100, fill=orange!100, label=left:$\textcolor{red}{a}$,inner sep=0pt,minimum size=.175cm] (a) at (-1,.75) {{}};
\node[circle,draw=black!100, fill=orange!100, label=right:$\textcolor{red}{\neg c}$,inner sep=0pt,minimum size=.175cm] (Nc) at (1,.75) {{}};

\node[circle,draw=black!100, fill=orange!100, label=left:$\textcolor{red}{b}$,inner sep=0pt,minimum size=.175cm] (b) at (-1,1.5) {{}};
\node[circle,draw=black!100, fill=orange!100, label=right:$\textcolor{red}{\neg b}$,inner sep=0pt,minimum size=.175cm] (Nb) at (1,1.5) {{}};
\node[circle,draw=black!100, fill=black!100,label=left:$\textcolor{red}{c}$,inner sep=0pt,minimum size=.175cm] (c) at (-1,2.25) {{}};

\node[circle,draw=black!100, fill=black!100, label=right:$\textcolor{red}{\neg a}$,inner sep=0pt,minimum size=.175cm] (Na) at (1,2.25) {{}};

\node[circle,draw=black!100, fill=black!100, label=right:$\textcolor{red}{d}\;\;$,inner sep=0pt,minimum size=.175cm] (Nd) at (0,2.25) {{}};
\node[circle,draw=black!100, fill=black!100,label=above:$\textcolor{red}{1}$,inner sep=0pt,minimum size=.175cm] (1) at (0,3) {{}};

\path (d) edge[-] node {{}} (c);
\path (d) edge[-] node {{}} (Na);
\path (d) edge[-] node {{}} (0);

\path (1) edge[-] node {{}} (Nd);
\path (Nd) edge[-] node {{}} (a);
\path (Nd) edge[-] node {{}} (Nc);
\path (1) edge[-] node {{}} (c);
\path (1) edge[-] node {{}} (Na);
\path (a) edge[-] node {{}} (b);
\path (b) edge[-] node {{}} (c);
\path (Nc) edge[-] node {{}} (Nb);
\path (Nb) edge[-] node {{}} (Na);
\path (a) edge[-] node {{}} (0);
\path (Nc) edge[-] node {{}} (0);

\end{tikzpicture}
\end{center}\caption{Hasse diagrams of ortholattices from Figures \ref{OrthoEx} and \ref{Fig1} with join-irreducible elements in orange}\label{JIfig}
\end{figure}

The fact that finite ortholattices can be represented using compatibility frames as described above is a consequence of the following more general representation theorem for all complete ortholattices.\footnote{For the representation of arbitrary (including incomplete) ortholattices using compatibility (or incompatibility) frames equipped with a topology, see \citealt{Goldblatt1975} and \citealt{McDonald2021}, and for associated categorical dualities, see \citealt{Bimbo2007}, \citealt{Dmitrieva2021}, and \citealt{McDonald2021}.} A set $V$ of elements of a lattice $L$ is said to be \textit{join-dense in $L$} if every element of $L$ can be obtained as the (possibly infinite) join of some elements of $V$ (e.g., the set of all elements of $L$ is trivially join-dense in $L$).
 
\begin{theorem}\label{MacLaren} Let $L=\langle A,\vee,0,\wedge,1,\neg\rangle $ be an ortholattice and $V$ any join-dense subset of $L$. Then where $\mathcal{F}=\langle V\setminus\{0\},\between\rangle$ is the compatibility frame based on $V\setminus\{0\}$ with $\between$ defined by $a\between b$ if $a\not\leq \neg b$, we have that $L$ embeds into $O(\mathcal{F})$ via the map $a\mapsto \{b\in V\setminus\{0\}\mid b\leq a\}$, which is an isomorphism if $L$ is complete.
\end{theorem}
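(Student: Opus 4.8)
The plan is to exhibit an explicit isomorphism $h \colon L \to O(\mathcal{F})$ given by
\[
h(a) = \{v \in V\setminus\{0\} \mid v \leq a\},
\]
the set of nonzero join-dense generators lying below $a$. Before anything else I would record two facts used throughout. First, $\mathcal{F}$ really is a compatibility frame: $\between$ is reflexive because $a \leq \neg a$ would force $a = a \wedge \neg a = 0$, excluded since $a \in V\setminus\{0\}$; and $\between$ is symmetric because order-reversal together with involution gives $a \leq \neg b \Leftrightarrow b \leq \neg a$. Second, join-density yields $a = \bigvee\{v \in V\setminus\{0\} \mid v \leq a\}$ for every $a \in L$ (any expression of $a$ as a join of elements of $V$ can only involve elements below $a$), and, since $\neg$ is an order-reversing involution on a complete lattice, the infinite De Morgan law $\neg\bigvee_i a_i = \bigwedge_i \neg a_i$ holds. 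These facts let me translate freely between the frame relation $x \between y$ (i.e.\ $x \not\leq \neg y$) and the lattice order.

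Then I would carry out the following steps. \textbf{(i)} Show each $h(a)$ is $\between$-regular: given $x \not\leq a$, I must produce $y \between x$ such that every $z \between y$ fails $z \leq a$. Using join-density and De Morgan, $\bigwedge\{\neg y \mid y \in V\setminus\{0\},\ y \leq \neg a\} = \neg\bigvee\{y \mid y \leq \neg a\} = \neg\neg a = a$, so from $x \not\leq a$ there is some $y \in V\setminus\{0\}$ with $y \leq \neg a$ and $x \not\leq \neg y$, i.e.\ $x \between y$; and since $y \leq \neg a$ gives $a \leq \neg y$, any $z \leq a$ satisfies $z \leq \neg y$, i.e.\ $z \not\between y$, as required. \textbf{(ii)} Injectivity and order-reflection follow at once from $a = \bigvee h(a)$: if $h(a) \subseteq h(b)$ then $a = \bigvee h(a) \leq b$; order-preservation is immediate. \textbf{(iii)} Surjectivity is the heart of the argument: given a $\between$-regular $B$, set $a = \bigvee B$; clearly $B \subseteq h(a)$, and for the converse suppose $v \in V\setminus\{0\}$ with $v \leq a$ but $v \notin B$. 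Regularity of $B$ supplies $y \between v$ with $z \notin B$ for all $z \between y$; contrapositively every element of $B$ lies below $\neg y$, so $a = \bigvee B \leq \neg y$, whence $v \leq a \leq \neg y$ contradicts $v \between y$. Thus $h(a) = B$.

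Having shown $h$ is an order-isomorphism between two complete lattices, it automatically preserves arbitrary meets and joins, the meet being intersection as in $O(\mathcal{F})$, and it sends $0$ to $\varnothing$ and $1$ to $S$. It remains to check $h(\neg a) = \neg h(a)$. One inclusion uses that $v \leq \neg a$ forces $y \leq a \Rightarrow y \leq \neg v$; the other unpacks membership of $v$ in $\neg h(a)$ as the statement that every $y \in V\setminus\{0\}$ below $a$ is below $\neg v$, whence join-density gives $a \leq \neg v$, i.e.\ $v \leq \neg a$. This completes the verification that $h$ is an ortholattice isomorphism.

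I expect the main obstacle to be step \textbf{(iii)} together with the regularity check in \textbf{(i)}: these are exactly the points where the defining quantifier pattern of $\between$-regularity must be converted into a statement about the lattice order, and both hinge on the interplay between join-density (to recover an element as the join of the generators below it) and the translation $x \between y \Leftrightarrow x \not\leq \neg y$. The remaining homomorphism and bijectivity bookkeeping is routine once these translations are in hand.
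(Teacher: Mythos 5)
Your proposal is correct. Note that the paper does not actually prove Theorem \ref{MacLaren}; it defers to \citealt[Theorem 2.5]{MacLaren1964}. Your argument is a sound, self-contained reconstruction of that standard representation proof, built on the canonical map $h(a)=\{v\in V\setminus\{0\}\mid v\leq a\}$. The three load-bearing steps all check out: (i) $\between$-regularity of $h(a)$ via the identity $\bigwedge\{\neg y\mid y\in V\setminus\{0\},\ y\leq\neg a\}=a$ (which legitimately uses the infinite De Morgan law, valid because $\neg$ is an order-reversing involution on a complete lattice); (iii) surjectivity by taking $a=\bigvee B$ and using regularity of $B$ to force $\bigvee B\leq\neg y$ for the witness $y$; and the separate verification that $h(\neg a)=\neg h(a)$, which is exactly what is needed beyond the order-isomorphism to conclude $h$ is an ortholattice isomorphism (meets being intersections and joins being determined by order in both complete lattices). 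The only cosmetic caveat is the degenerate case $0=1$, where $V\setminus\{0\}=\varnothing$ fails the nonemptiness requirement on compatibility frames, but that is an edge case in the theorem statement itself, not a defect of your proof.
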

\noindent For a proof, see, e.g., \citealt[Theorems 2.3 and 2.5]{MacLaren1964}. Compare this representation of complete ortholattices to Tarski's \citeyearpar{Tarski1935} representation of \textit{complete and atomic Boolean algebras}, which lies at the foundation of possible world semantics: any such Boolean algebra is isomorphic to the powerset of its set of atoms (think possible worlds).\footnote{Recall that an \textit{atom} in a bounded lattice is a nonzero element $a$ such that $b\leq a$ implies $b=0$ or $b=a$. A lattice $L$ is \textit{atomic} if for every nonzero element $b$ in $L$, there is an atom $a\leq b$.} Now in a \textit{finite} ortholattice, the set of join-irreducible elements is join-dense in the lattice, so we obtain the following corollary of Theorem \ref{MacLaren}, which explains Examples \ref{CompEx} and \ref{TwoOrthoRep}.

\begin{corollary}\label{FiniteRep} Let $L=\langle A,\vee,0,\wedge,1,\neg\rangle $ be a finite ortholattice. Then where $\mathcal{F}=\langle J,\between\rangle$ is the compatibility frame based on the set $J$ of join-irreducible elements of $L$ with $\between$ defined by $a\between b$ if $a\not\leq \neg b$, we have that $O(\mathcal{F})$ is isomorphic to $L$.
\end{corollary}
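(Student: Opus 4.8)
The plan is to obtain this as an immediate application of Theorem \ref{MacLaren}, the only work being to verify that its hypotheses hold when the join-dense set $V$ is taken to be the set $J$ of join-irreducible elements of $L$. Two things must be checked: that $L$ is complete, and that $J$ is join-dense in $L$. The first is free, since the excerpt already records that every finite lattice is complete. The second is the only substantive point, and it is standard.

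To establish join-density, I would show that every $a \in L$ is the join of the join-irreducible elements lying below it, i.e.\ $a = \bigvee J_a$ where $J_a = \{j \in J \mid j \leq a\}$. The inequality $\bigvee J_a \leq a$ is immediate, since every member of $J_a$ is below $a$. For the converse I would argue by contradiction, using finiteness to choose an element $a$ that is minimal among those with $a \neq \bigvee J_a$. Such an $a$ is nonzero (as $0 = \bigvee\varnothing = \bigvee J_0$) and is not join-irreducible (else $a \in J_a$ and $a = \bigvee J_a$), so by the definition of join-irreducibility $a = \bigvee X$ for some set $X$ of elements each distinct from---hence strictly below---$a$. Minimality then gives $x = \bigvee J_x$ for each $x \in X$, whence $a = \bigvee\big(\bigcup_{x \in X} J_x\big)$; but $\bigcup_{x \in X} J_x \subseteq J_a$, so $a \leq \bigvee J_a \leq a$, contradicting the choice of $a$. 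Hence $J$ is join-dense.

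With both hypotheses verified, I would invoke Theorem \ref{MacLaren} with $V = J$. Since join-irreducible elements are by definition nonzero, $0 \notin J$, so $J \setminus \{0\} = J$ and the frame $\langle J \setminus \{0\}, \between\rangle$ supplied by the theorem coincides with the frame $\langle J, \between\rangle$ of the corollary---same carrier, same compatibility relation $a \between b \iff a \not\leq \neg b$. The theorem then delivers $O(\mathcal{F}) \cong L$ directly. The only content beyond citing Theorem \ref{MacLaren} is the join-density lemma, and since that is a routine minimal-counterexample argument licensed by finiteness, I anticipate no real obstacle.
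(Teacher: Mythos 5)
Your proposal is correct and takes exactly the paper's route: the paper likewise obtains the corollary by applying Theorem \ref{MacLaren} with $V=J$, noting that in a finite ortholattice the set of join-irreducible elements is join-dense (and that finiteness gives completeness). The only difference is that you supply the standard minimal-counterexample proof of join-density, which the paper asserts without proof; your argument for it is sound.
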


Let us now turn to formal semantics for the language $\mathcal{L}$ (recall Definition \ref{LDef}). Proposition \ref{FrameToOrtho} leads us to the following.

\begin{definition}\label{GoldblattSemantics} Given a compatibility model $\mathcal{M}= \langle S,\between,V\rangle$, $x\in S$, and $\varphi\in\mathcal{L}$, we define $\mathcal{M},x\Vdash \varphi$ as follows:
\begin{enumerate}
\item $\mathcal{M},x\Vdash\top$;
\item $\mathcal{M},x\Vdash p$ iff $x\in V(p)$;
\item $\mathcal{M},x\Vdash \varphi\wedge\psi$ iff $\mathcal{M},x\Vdash \varphi$ and $\mathcal{M},x\Vdash \psi$;
\item $\mathcal{M},x\Vdash \neg\varphi$ iff for all $y\between x$, $\mathcal{M},y\nVdash \varphi$.
\end{enumerate}
We define $\llbracket \varphi\rrbracket^\mathcal{M}=\{x\in S\mid \mathcal{M},x\Vdash\varphi\}$.
\end{definition}

\noindent Then given our definition of $\varphi\vee\psi$ as $\varphi\vee\psi:=\neg(\neg\varphi\wedge\neg\psi)$, as in (\ref{OrEq}) we have:
\begin{itemize}
\item $\mathcal{M},x\Vdash \varphi\vee\psi$ iff for all $y\between x$ there is a $z\between y$ such that $\mathcal{M},z\Vdash \varphi$ or $\mathcal{M},z\Vdash\psi$.
\end{itemize}

An easy induction shows that the set of possibilities that make a formula true is indeed a proposition. 

\begin{lemma} For any compatibility model $\mathcal{M}$ and $\varphi\in\mathcal{L}$, $\llbracket \varphi\rrbracket^\mathcal{M}$ is a $\between$-regular set.\end{lemma}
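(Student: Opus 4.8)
The plan is to prove the claim by induction on the structure of $\varphi$ following the grammar of Definition \ref{LDef}, with one case for each of the three clauses $p$, $\neg\psi$, and $\psi\wedge\chi$. In each case the goal is to verify that the relevant truth-set satisfies the $\between$-regularity condition of Definition \ref{RegProp}. The observation that makes the induction essentially automatic is that the truth-set operations recorded in the clauses of Definition \ref{GoldblattSemantics} coincide exactly with the ortholattice operations on $\between$-regular sets established in Proposition \ref{FrameToOrtho}: the conjunction clause corresponds to intersection (the meet), and the negation clause corresponds to the operation $\neg$ of (\ref{NegEq}). So the whole argument reduces to the closure of the collection of $\between$-regular sets under these operations.

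For the base case $\varphi = p$, I would simply note that $\llbracket p\rrbracket^{\mathcal{M}} = V(p)$, which is $\between$-regular by the definition of a compatibility model (Definition \ref{CompModDef}). For the conjunction case, $\llbracket \psi\wedge\chi\rrbracket^{\mathcal{M}} = \llbracket\psi\rrbracket^{\mathcal{M}}\cap\llbracket\chi\rrbracket^{\mathcal{M}}$; applying the induction hypothesis to both conjuncts and then the closure of $\between$-regular sets under intersection from Proposition \ref{FrameToOrtho}, we conclude that this is $\between$-regular. If one prefers a self-contained check rather than citing the proposition: if $x\notin A\cap B$, then without loss of generality $x\notin A$, and the witness $y\between x$ supplied by the regularity of $A$ works for $A\cap B$ as well, since $z\notin A$ implies $z\notin A\cap B$.

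For the negation case, $\llbracket\neg\psi\rrbracket^{\mathcal{M}} = \neg\llbracket\psi\rrbracket^{\mathcal{M}}$ with $\neg$ as in (\ref{NegEq}), and I would conclude regularity from the fact, again from Proposition \ref{FrameToOrtho}, that $\neg$ maps $\between$-regular sets to $\between$-regular sets. This is the only step with any real content, and it is where I expect the (minor) obstacle to lie. The cleanest self-contained argument does not even need the induction hypothesis: for an arbitrary $A\subseteq S$, given $x\notin\neg A$ there is some $y\between x$ with $y\in A$; then for every $z\between y$ we have $y\between z$ by symmetry, so $z\notin\neg A$, which witnesses the regularity of $\neg A$ using this same $y$. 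This uses only the symmetry of $\between$ guaranteed by Definition \ref{CompFrame}, and it completes the induction.
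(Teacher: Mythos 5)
Your proposal is correct and is essentially the paper's own argument: the paper dispatches this lemma with the remark that ``an easy induction'' establishes it, and your induction---atomic case from Definition \ref{CompModDef}, conjunction via closure of $\between$-regular sets under intersection, negation via the operation $\neg$ of (\ref{NegEq}) as in Proposition \ref{FrameToOrtho}---is exactly that induction. Your additional self-contained check that $\neg A$ is $\between$-regular for an \emph{arbitrary} $A\subseteq S$, using only the symmetry of $\between$, is sound and does not change the route.
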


\begin{example} Consider a compatibility model based on the frame in Figure \ref{Fig2} with $V(p)=\{x_1,x_2\}$, so $V(p)$ is a $\between$-regular set, as shown in Figure \ref{CompModFig}. Then observe that $\llbracket\neg p\rrbracket^\mathcal{M}=\{x_4,x_5\}$, another $\between$-regular set.
\begin{figure}[h]
\begin{center}
\tikzset{every loop/.style={min distance=10mm,looseness=10}}
\begin{tikzpicture}[->,>=stealth',shorten >=1pt,shorten <=1pt, auto,node
distance=2.5cm,semithick]
\tikzstyle{every state}=[fill=gray!20,draw=none,text=black]

\node[circle,draw=black!100, fill=green!100, label=below:$x_1$,inner sep=0pt,minimum size=.175cm] (1) at (0,0) {{}};
\node at (0,-.75) {{$p$}};

\node[circle,draw=black!100, fill=green!100, label=below:$x_2$,inner sep=0pt,minimum size=.175cm] (2) at (1.5,0) {{}};
\node at (1.5,-.75) {{$p$}};

\node[circle,draw=black!100, fill=black!100, label=below:$x_3$,inner sep=0pt,minimum size=.175cm] (3) at (3,0) {{}};
\node[circle,draw=black!100, fill=black!100, label=below:$x_4$,inner sep=0pt,minimum size=.175cm] (4) at (4.5,0) {{}};

\node[circle,draw=black!100, fill=black!100, label=below:$x_5$,inner sep=0pt,minimum size=.175cm] (5) at (6,0) {{}};

\path (1) edge[-] node {{}} (2);
\path (2) edge[-] node {{}} (3);
\path (3) edge[-] node {{}} (4);
\path (4) edge[-] node {{}} (5);

\end{tikzpicture} 

\end{center}\caption{A compatibility model based on the frame from Figure \ref{Fig2} with $V(p)$ highlighted in green}\label{CompModFig}
\end{figure}
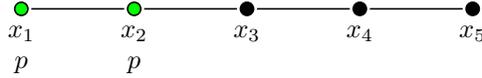
\end{example}

We define semantic consequence as usual in terms of truth preservation.

\begin{definition}\label{PossCons} Given a class $\mathbf{C}$ of compatibility frames, define the semantic consequence relation $\vDash_\mathbf{C}$, a binary relation on $\mathcal{L}$, as follows: $\varphi\vDash_\mathbf{C}\psi$ if for every $\mathcal{F}\in\mathbf{C}$, model $\mathcal{M}$ based on $\mathcal{F}$, and possibility $x$ in $\mathcal{M}$, if $\mathcal{M},x\Vdash\varphi$, then $\mathcal{M},x\Vdash\psi$.
\end{definition}
\noindent We can of course also define a consequence relation between a set of premises on the left and a single conclusion on the right: $\Gamma\vDash_\mathbf{C}\psi$ if for every $\mathcal{F}\in\mathbf{C}$, model $\mathcal{M}$ based on $\mathcal{F}$, and possibility $x$ in $\mathcal{M}$, if $\mathcal{M},x\Vdash \varphi$ for all $\varphi\in\Gamma$, then $\mathcal{M},x\Vdash\psi$. But for simplicity we only consider finite sets of premises here, in which case a single premise on the left suffices given that $\mathcal{L}$ contains a conjunction interpreted as intersection.

Goldblatt \citeyearpar{Goldblatt1974} proved the following completeness theorem, showing that the possibility semantics above is indeed a semantics for the orthologic $\mathsf{O}$ of Definition \ref{OrthoLogicDef}.

\begin{theorem}\label{Ocomp} The logic $\mathsf{O}$ is sound and complete with respect to the class $\mathbf{CF}$ of all compatibility frames according to the consequence relation in Definition \ref{PossCons}: for all $\varphi,\psi\in\mathcal{L}$, we have $\varphi\vdash_\mathsf{O}\psi$ if and only if $\varphi\vDash_\mathbf{CF}\psi$.
\end{theorem}

Theorem \ref{Ocomp} is easily proved from the embedding result in Theorem \ref{MacLaren}. In \S~\ref{AddEp} we will give a proof of a modal version of this completeness theorem that will show another way to prove Theorem \ref{Ocomp}.

\subsection{Adding epistemic modality}\label{AddEp}

In this section, we extend the compatibility frames of \S~\ref{ReviewPoss} to interpret modalities $\Box$ and $\Diamond$. We do so in the style of \textit{relational} possibility semantics (\citealt{Humberstone1981}, \citealt{HollidayForthA,HollidayForthB,Holliday2022}); as we note below, we could equally work with a  \textit{functional possibility semantics} (\citealt{Holliday2014}) instead. 

As usual, given a binary relation $R$ on a set $S$ of possibilities, we define a $\Box$ operation on propositions by
\begin{equation}\Box A = \{x\in S\mid R(x)\subseteq A\},\label{BoxEq1}\end{equation}
where $R(x)=\{y\in S\mid xRy\}$.  Given our definition of $\Diamond A$ as $\neg\Box\neg A$, we have 
\begin{equation}\Diamond A=\{x\in S\mid \forall x' \between x\; \exists y'\in R(x')\;\exists y''\between y': y''\in A\}.\label{DiamondEq}\end{equation}
Finally, we posit one condition on the relation between \textit{epistemic accessibility} and what \textit{might} be the case:
\begin{itemize}
\item $R$-\textsf{regularity}: if $x$ can epistemically access a possibility compatible with $y$, then $x$ is compatible with a possibility according to which $y$ \textit{might obtain}. 
\end{itemize}
Formally, the antecedent means that $xRy'\between y$, and the consequent means that $x$ is compatible with a possibility $x'\in \Diamond \mathord{\downarrow}y$ (recall Lemma \ref{Downx}).

We can now define the basic frames for our modal semantics.

\begin{definition}\label{DCompFrame}  A \textit{modal compatibility frame} is a triple $\mathcal{F}=\langle S,\between,R\rangle$ where $\langle S,\between\rangle$ is a compatibility frame and $R$ is a binary relation on $S$ satisfying
\begin{itemize}
\item $R$-\textsf{regularity}: if $xRy'\between y$, then $\exists x'\between x$: $x'\in\Diamond \mathord{\downarrow}y$. 
\end{itemize}
\noindent A \textit{modal compatibility model} is a pair $\mathcal{M}=\langle \mathcal{F},V\rangle$ where $\mathcal{F}=\langle S,\between, R\rangle$ is a modal compatibility frame and $\langle S,\between, V\rangle$ is a compatibility model as in Definition \ref{CompModDef}. We say that $\mathcal{M}$ is \textit{based on $\mathcal{F}$}.
\end{definition}

\begin{lemma}\label{RregEquiv} $R$-\textsf{regularity} can be written equivalently without $\Diamond$ as follows: 
\begin{itemize}
\item $R$-\textsf{regularity}: if $xRy'\between y$, then $\exists x'\between x$ $\forall x''\between x'$ $\exists y''$: $x''Ry''\between y$. 
\end{itemize}
\end{lemma}
\begin{proof} By (\ref{DiamondEq}), $x'\in \Diamond \mathord{\downarrow}y$ in $R$-\textsf{regularity} is equivalent to: $\forall x''\between x'$ $\exists y''\in R(x'')$ $\exists y'''\between y''$: $y'''\in \mathord{\downarrow}y$. But $\exists y'''\between y''$: $y'''\in \mathord{\downarrow}y$ is equivalent to $y''\between y$; from left-to-right, if $y'''\sqsubseteq y$, then $y''\between y'''$ implies $y''\between y$, and from right-to-left, take $y'''=y$. Thus, $x'\in \Diamond \mathord{\downarrow}y$ is equivalent to the condition on $x'$ in the lemma. \end{proof} 

We will begin by proving using $R$-\textsf{regularity} that if $A$ is a proposition, so is $\Box A$.

\begin{proposition}\label{BoxReg} For any modal compatibility frame $\mathcal{F}=\langle S,\between,R\rangle$ and $\between$-regular set $A\subseteq S$, we have that $\Box A$ is $\between$-regular.
\end{proposition}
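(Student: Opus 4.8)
The plan is to unpack Definition \ref{RegProp} for $\Box A$ and reduce it to the $\between$-regularity of $A$, with the $i$-\textsf{regularity} condition serving as the bridge between the two. Concretely, I would fix an arbitrary $x \in S$ with $x \notin \Box A$ and manufacture a witness $y \between x$ such that no $z \between y$ lies in $\Box A$.

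First I would read off what $x \notin \Box A$ means via (\ref{BoxEq1}): since $\Box A$ holds vacuously wherever $i$ is undefined, $x \notin \Box A$ forces $i$ to be defined at $x$ with $i(x) \notin A$. Applying the $\between$-regularity of $A$ to the point $i(x)$, I obtain some $w \between i(x)$ witnessing that every $v \between w$ satisfies $v \notin A$.

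Next I would feed $w \between i(x)$ into $i$-\textsf{regularity} (taking its `$y$' to be $w$): this yields an $x' \between x$ such that for every $x'' \between x'$ we have $w \between i(x'')$ --- which, under the stated convention that writing `$i(x'')$' presupposes $i$ is defined there, simultaneously tells us $i$ is defined at every such $x''$. I claim $x'$ is the desired witness. Indeed, take any $z \between x'$; then $i(z)$ is defined and $w \between i(z)$, so by symmetry of $\between$ we get $i(z) \between w$, whence $i(z) \notin A$ by the choice of $w$. Since $i$ is defined at $z$ and $i(z) \notin A$, (\ref{BoxEq1}) gives $z \notin \Box A$. Thus $x' \between x$ and every $z \between x'$ avoids $\Box A$, establishing the $\between$-regularity of $\Box A$.

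The one delicate point --- and the place I would be most careful --- is the interplay with the partiality of $i$. The argument crucially uses that $x \notin \Box A$ \emph{guarantees} $i(x)$ is defined, and that $i$-\textsf{regularity} (read with the convention about $i(x'')$) hands back definedness of $i$ at \emph{all} $x'' \between x'$, not merely compatibility of $w$ with $i(x'')$. This is exactly what the $i$-\textsf{regularity} axiom is designed to secure: without it, a $z \between x'$ at which $i$ were undefined would vacuously lie in $\Box A$ and wreck the witness. Everything else is a routine chase through the symmetry of $\between$ and the two regularity conditions.
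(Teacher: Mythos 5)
Your proof is correct and follows essentially the same route as the paper's: unpack $x \notin \Box A$ to get that $i$ is defined at $x$ with $i(x) \notin A$, apply the $\between$-regularity of $A$ at $i(x)$ to obtain a witness (your $w$, the paper's $y$), pull it back via $i$-\textsf{regularity} to an $x' \between x$, and finish with symmetry of $\between$. Your explicit attention to partiality---that the conclusion of $i$-\textsf{regularity}, read with the stated convention, carries definedness of $i$ at every $x'' \between x'$---is a point the paper's proof relies on implicitly but does not spell out.
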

\begin{proof} We must show that 
\[ x\not\in \Box A \Rightarrow  \exists x'\between x\,\forall x''\between x'\;\, x''\not\in \Box A.\]
Suppose $x\not\in \Box A$, so for some $y\in R(x)$, we have $y\not\in A$. Then since $A$ is $\between$-regular, there is a $z\between y$ such that $z\in \neg A$. Since $xRy\between z$, by $R$-\textsf{regularity} as in Lemma \ref{RregEquiv}, $\exists x'\between x\,\forall x''\between x'$ $\exists y''$: $x''R y''\between z$. Since $y''\between z$ implies $y''\not\in A$ given $z\in\neg A$, it follows that $\exists x'\between x\,\forall x''\between x'$ $\exists y''\in R(x'')$: $y''\not\in A$ and hence $x''\not\in\Box A$.\end{proof}

The underlying compatibility frame of a modal compatibility frame gives rise to an ortholattice just as in Proposition \ref{FrameToOrtho}, and the $R$ relation gives us the modal operation of a modal ortholattice.

\begin{proposition}\label{EpistemicFrameToEpistemicOrtho} For any modal compatibility frame $\mathcal{F}=\langle S,\between,R\rangle$, the $\between$-regular sets ordered by inclusion form a complete lattice, which becomes an ortholattice with $\neg$ defined as in (\ref{NegEq}) and then a modal ortholattice with $\Box$ as defined in Definition (\ref{BoxEq1}). We denote this modal ortholattice by~$O(\mathcal{F})$.
\end{proposition}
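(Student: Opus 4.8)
The plan is to observe that nearly all the work has already been done, so that only two short equational checks remain. By Proposition \ref{FrameToOrtho}, the $\between$-regular subsets of $\langle S,\between\rangle$, ordered by $\subseteq$, already form a complete ortholattice with meet given by intersection, $1=S$, $0=\varnothing$, and $\neg$ as in (\ref{NegEq}); this settles the complete-lattice and ortholattice structure, so I would simply cite that result verbatim. Moreover, by Proposition \ref{BoxReg}, the operation $\Box$ defined in (\ref{BoxEq1}) maps $\between$-regular sets to $\between$-regular sets, so $\Box$ is a genuine unary operation on the carrier of $O(\mathcal{F})$. It therefore remains only to verify the two conditions in the definition of a modal ortholattice, namely $\Box 1 = 1$ and $\Box(A\wedge B)=\Box A\wedge\Box B$.

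For $\Box 1 = 1$: since $1=S$ and $i(x)\in S$ whenever $i$ is defined at $x$, the defining condition in (\ref{BoxEq1}) holds for every $x\in S$, giving $\Box S = S$. For $\Box(A\wedge B)=\Box A\wedge\Box B$, recalling that $A\wedge B=A\cap B$ in $O(\mathcal{F})$, I would argue by cases on whether $i$ is defined at a given $x\in S$. If $i$ is undefined at $x$, then $x$ belongs to each of $\Box(A\cap B)$, $\Box A$, and $\Box B$ vacuously, hence to both sides. If $i$ is defined at $x$, then $x\in\Box(A\cap B)$ iff $i(x)\in A\cap B$ iff $i(x)\in A$ and $i(x)\in B$ iff $x\in\Box A$ and $x\in\Box B$, i.e.\ iff $x\in\Box A\cap\Box B$; so the two sets coincide.

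There is no real obstacle here: the only nontrivial fact, the $\between$-regularity of $\Box A$, is precisely Proposition \ref{BoxReg}, and the two remaining checks are immediate case analyses. Accordingly I would keep the written proof to essentially one sentence citing Propositions \ref{FrameToOrtho} and \ref{BoxReg} and noting the two verifications. If any point deserves emphasis, it is the conceptual reason that the clause (\ref{BoxEq1}) distributes over intersection: because $i$ is a \emph{partial function} rather than a relation, the condition ``$i(x)\in A\cap B$'' splits cleanly into the conjunction ``$i(x)\in A$ and $i(x)\in B$,'' with the partiality absorbed harmlessly by the vacuous case. This is exactly the feature that yields normality ($\Box$ commuting with finite meets and preserving $1$) at the level of $O(\mathcal{F})$.
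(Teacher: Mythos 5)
Your proposal is correct and follows essentially the same route as the paper's own proof: cite Proposition \ref{FrameToOrtho} for the complete ortholattice structure and then verify $\Box(A\cap B)=\Box A\cap \Box B$ and $\Box 1=1$ directly from the defining clause (\ref{BoxEq1}). The only cosmetic difference is that you explicitly invoke Proposition \ref{BoxReg} for well-definedness of $\Box$ on the $\between$-regular sets, which the paper leaves implicit (that proposition immediately precedes this one), and your case split on whether $i$ is defined at $x$ is just an unpacked form of the paper's chain of set equalities.
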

\begin{proof} Given Proposition \ref{FrameToOrtho}, we need only check the new modal part, which is completely standard:
\begin{eqnarray*}
\Box(A\cap B) &=& \{x\in S\mid R(x)\subseteq A\cap B\} \\
&=&\{x\in S\mid R(x)\subseteq A \} \cap \{x\in S\mid R(x)\subseteq B\} \\
&=&\Box A\cap \Box B,
\end{eqnarray*}
and $\Box 1 =\Box S=\{x\in S\mid R(x)\subseteq S\}=S =1$.\end{proof}

Now to interpret $R(x)$ as an \textit{information state}, we impose two additional conditions on $R$.  The first is the familiar reflexivity condition on epistemic accessibility.

\begin{definition}\label{TComp} A \textit{$\mathsf{T}$ compatibility frame} is a modal compatibility frame  satisfying:
\begin{itemize}
\item \textsf{Reflexivity}: for all $x\in S$,  $xRx$. 
\end{itemize}
\end{definition}
\noindent As usual, it follows from this condition that $\Box A$ entails $A$.\footnote{In fact, the following weaker condition suffices: for all $x\in S$, there is a $y\in R(x)$ with $x\sqsubseteq y$.}

\begin{proposition} For any $\mathsf{T}$ compatibility frame $\mathcal{F}$ and $\between$-regular set $A$, we have $\Box A\subseteq A$. Hence $O(\mathcal{F})$ is a $\mathsf{T}$ modal ortholattice.
\end{proposition}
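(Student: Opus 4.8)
The plan is to chase the definitions and let Factivity do all the work through the proposition-based characterization of refinement. First I would fix a $\between$-regular set $A$ and an arbitrary $x\in\Box A$. Since we have already passed to $\mathsf{D}$ compatibility frames (so $i$ is total), membership in $\Box A$ can be read off equation (\ref{BoxEq2}): $x\in\Box A$ means simply $i(x)\in A$.

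The crucial move is then to invoke \textsf{Factivity} from Definition \ref{TComp}, which gives $x\sqsubseteq i(x)$, and to unpack $\sqsubseteq$ using condition (\ref{RefinementDef1}) of Lemma \ref{RefinementDef} rather than (\ref{RefinementDef2}). That condition says precisely that every $\between$-regular set containing $i(x)$ also contains $x$. Applying this to our $\between$-regular set $A$, from $i(x)\in A$ we conclude $x\in A$. Since $x\in\Box A$ was arbitrary, this establishes $\Box A\subseteq A$.

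For the second assertion, I would note that $O(\mathcal{F})$ is already a modal ortholattice by Proposition \ref{EpistemicFrameToEpistemicOrtho}, and that its lattice order is inclusion $\subseteq$. Hence the inclusion $\Box A\subseteq A$ just proved is literally the statement $\Box A\leq A$ in $O(\mathcal{F})$, which is the defining inequality of a $\mathsf{T}$ modal ortholattice; so $O(\mathcal{F})$ is one.

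I do not expect a genuine obstacle here: the content of the proposition is exactly that \textsf{Factivity} of the frame is the semantic correlate of the algebraic $\mathsf{T}$ condition. The only point requiring care is bookkeeping about which of the two equivalent formulations of $\sqsubseteq$ to use. Formulation (\ref{RefinementDef1}) yields the conclusion in one line, whereas formulation (\ref{RefinementDef2}) would force me to re-derive the implication from compatibility together with the $\between$-regularity of $A$, an unnecessary detour. Choosing (\ref{RefinementDef1}) is what keeps the argument immediate.
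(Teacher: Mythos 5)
Your proof is correct and is essentially identical to the paper's: the paper also takes $x\in\Box A$, reads off $i(x)\in A$, and applies \textsf{Factivity} ($x\sqsubseteq i(x)$) together with formulation \ref{RefinementDef1} of Lemma \ref{RefinementDef} to conclude $x\in A$. Your added remark about preferring formulation \ref{RefinementDef1} over \ref{RefinementDef2} matches exactly how the paper's one-line argument works.
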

\begin{proof} If $x\in \Box A$, then $R(x)\subseteq A$, which with $xRx$ implies $x\in A$.\end{proof}

The second condition essentially says that \textit{there is a possibility where everything settled true by $x$ is known}. 

\begin{definition}\label{Knowability} An \textit{epistemic compatibility frame} is a \textsf{T} compatibility frame also satisfying:
\begin{itemize}
\item \textsf{Knowability}: for all $x\in S$, there is a $y\in S$ such that for all $z\in R(y)$, $z\sqsubseteq x$.
\end{itemize}
\end{definition}
\noindent Recall from Lemma \ref{RefinementDef} that $z\sqsubseteq x$ implies that every proposition true at $x$ is also true at $z$.

\textsf{Knowability} can naturally be seen as the semantic constraint corresponding to the proof theoretic principle, noted in Lemma \ref{BoxBotLem}, that for any $\varphi\in\mathcal{EL}$, if $\Box\varphi\vdash_\mathsf{EO}\bot$, then $\varphi\vdash_\mathsf{EO}\bot$, or contrapositively, if $\varphi\nvdash_\mathsf{EO}\bot$, then  $\Box\varphi\nvdash_\mathsf{EO}\bot$.  In the presence of our background logic, that principle is equivalent to Wittgenstein's Law. Correspondingly, \textsf{Knowability} ensures that $\neg A$ and $\Diamond A$ are inconsistent in any epistemic compatibility frame.

\begin{proposition}\label{EpCont} For any epistemic compatibility frame $\mathcal{F}$ and $\between$-regular set $A$, we have $\neg A\cap \Diamond A=\varnothing$. Hence $O(\mathcal{F})$ is an epistemic ortholattice.
\end{proposition}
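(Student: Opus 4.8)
The plan is to prove the set-theoretic claim $\neg A\cap\Diamond A=\varnothing$ directly, and then read off the algebraic consequence. Since in $O(\mathcal{F})$ we have $\wedge=\cap$ and $0=\varnothing$ (Proposition \ref{FrameToOrtho}), the emptiness of $\neg A\cap\Diamond A$ for every $\between$-regular $A$ is exactly \textsf{Wittgenstein's Law} $\neg a\wedge\Diamond a=0$; combined with the fact that an epistemic compatibility frame is in particular a $\mathsf{T}$ compatibility frame (so that $O(\mathcal{F})$ is already a $\mathsf{T}$ modal ortholattice by the preceding proposition), this yields that $O(\mathcal{F})$ is an epistemic ortholattice. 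So the whole content is the emptiness claim.

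Rather than argue by contradiction, I would establish the equivalent positive statement: every $x\in\Diamond A$ admits some $z\between x$ with $z\in A$, so that $x\notin\neg A$ by the definition of $\neg$ in (\ref{NegEq}). First I would fix $x\in\Diamond A$ and unfold this using (\ref{DiamondEq}): for every $y\between x$ there is a $z\between i(y)$ with $z\in A$. The difficulty is that this only produces $A$-witnesses compatible with various $i(y)$, whereas to defeat $x\in\neg A$ I need a witness compatible with $x$ itself. This is where \textsf{Knowability} (Definition \ref{Knowability}) is essential: it supplies a $v$ with $i(v)\sqsubseteq x$, i.e.\ a possibility whose information state refines $x$.

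The key steps, in order, would then be: (i) obtain $v$ with $i(v)\sqsubseteq x$ from \textsf{Knowability}; (ii) verify $v\between x$, which follows because \textsf{Factivity} gives $v\sqsubseteq i(v)$ and reflexivity of $\between$ gives $v\between v$, whence $v\between i(v)$ by Lemma \ref{RefinementDef}, and then $i(v)\sqsubseteq x$ yields $v\between x$; (iii) instantiate the unfolded $\Diamond A$ condition at the now-legitimate witness $y=v$ to obtain some $z\between i(v)$ with $z\in A$; and (iv) transport this $z$ back along the refinement, since $z\between i(v)$ together with $i(v)\sqsubseteq x$ gives $z\between x$ by the characterization of $\sqsubseteq$ in Lemma \ref{RefinementDef}. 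Then $z\between x$ and $z\in A$ witness $x\notin\neg A$, as desired.

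I expect the bookkeeping in steps (ii) and (iv) to be the only place where care is required: one must use Lemma \ref{RefinementDef} in the correct direction (that $p\sqsubseteq q$ means everything compatible with $p$ is compatible with $q$) and invoke reflexivity and symmetry of $\between$ at the right moments. The conceptual obstacle, by contrast, is recognizing that \textsf{Factivity} alone goes the ``wrong way''---since $x\sqsubseteq i(x)$ makes $i(x)$ compatible with \emph{more} possibilities than $x$, one cannot simply push $\neg A$ upward from $x$ to $i(x)$ and argue $\neg A\subseteq\Box\neg A$---so that \textsf{Knowability} is genuinely doing the work of relocating the possibility witness to something that refines $x$.
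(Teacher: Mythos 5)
Your proof is correct and is essentially the paper's own argument run in the contrapositive direction: the paper assumes $x\in\neg A$ and shows $x\notin\Diamond A$, while you assume $x\in\Diamond A$ and show $x\notin\neg A$, but both rest on exactly the same ingredients---\textsf{Knowability} supplying a possibility $v$ with $i(v)\sqsubseteq x$, \textsf{Factivity} plus reflexivity of $\between$ giving $v\between x$, and Lemma \ref{RefinementDef} transporting compatibility with $i(v)$ to compatibility with $x$. Your reduction of the algebraic claim to the emptiness statement (via Proposition \ref{FrameToOrtho} and the preceding $\mathsf{T}$ result) likewise matches the paper's reading of the ``Hence'' clause.
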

\begin{proof} Suppose $x\in\neg A$. By \textsf{Knowability}, there is a $y\in S$ such that for all $z\in R(y)$, we have $z\sqsubseteq x$ and hence $z\in\neg A$ by Lemma \ref{RefinementDef}, so $y\in \Box\neg A$. Since $y\in R(y)$ by \textsf{Reflexivity}, it follows that $y\sqsubseteq x$ and hence $y\between x$, which with $y\in \Box\neg A$ implies $x\not\in \neg\Box\neg A$, so $x\not\in\Diamond A$.\end{proof}

\noindent In essence, the explanation of the badness of Wittgenstein sentences according to our possibility semantics is this: since for any possibility $x$, \textit{there is a possibility where everything settled true by $x$ is known}, then $x$ cannot settle $\neg A\wedge\Diamond A$ as true, for that would imply the existence of a possibility $y$ settling as true the contradictory proposition $\Box(\neg A\cap \Diamond A)\subseteq \Box \neg A\cap \Box\Diamond A\subseteq \Box \neg A\cap \Diamond A=\varnothing$.

\begin{remark} Some (e.g., \citealt{Lassiter:2016}) reject the \textsf{T} axiom for $\Box$. Without \textsf{T}, we can validate Wittgenstein's Law and its generalization in Fact \ref{GeneralizedWittLaw} by adopting a slightly stronger constraint than \textsf{Knowability}:   for all $x\in S$ and $n\in\mathbb{N}$, there is a $y\between x$ such that for all $z\in R^n(y)$, we have $z\sqsubseteq x$, where $R^0(y)=\{y\}$ and $R^{n+1}(y)=R[R^n(y)]$. Clearly this suffices for the proof in Proposition \ref{EpCont} that $\neg A\cap \Diamond A=\varnothing$ and its generalization to $\neg A\cap \Diamond^n A=\varnothing$, so those who reject \textsf{T} can still use our approach to account for the contradictoriness of (generalized) Wittgenstein sentences.\end{remark}

\begin{remark}When we say `there is a possibility where everything settled true by $x$ is known', we are using `known' in a loose way. If we identified epistemic modals with the knowledge of some particular agent (or group) at a time---the speaker, say---then this condition would be somewhat implausible in general, since we can plausibly imagine possibilities where it is settled that not everything settled is known. Instead, we are thinking about `what is known'  as corresponding to something like salient information (possibly) augmented with further facts that may well go beyond what any individual or group knows.
The idea that epistemic modals can track information that goes beyond what is actually known by anyone goes back to \citet{Hacking1967}. Hacking considers

\begin{quote} a salvage crew searching for a ship that sank a long time ago. The mate of the salvage ship works from an old log, makes some mistakes in his calculations, and concludes that the wreck may be in a certain bay. It is possible, he says, that the hulk is in these waters. No one knows anything to the contrary. But in fact, as it turns out later, it simply was not possible for the vessel to be in that bay; more careful examination of the log shows that the boat must have gone down at least thirty miles further south. The mate said something false when he said, `It is possible that we shall find the treasure here,' but the falsehood did not arise from what anyone actually knew at the time. \hfill \citep[p. 148]{Hacking1967}\end{quote} 
(We can, if we want, elaborate stories like this so that the falsehood does not arise from what anyone actually knows at \emph{any} time.)
This idea is  elaborated in different ways in \citealt{Dorr:2013}, \citealt{Mandelkern:2018a}, and \citealt{Kratzer:2017}, and we think it is fundamentally correct. Of course, this idea 
 remains vague and context sensitive; nonetheless, we think that our \textsf{Knowability} constraint is a natural way of making precise the idea that epistemic modalities involve some mix of purely epistemic and factual considerations.  
\end{remark}

To summarize, we impose the following three conditions:
\begin{itemize}
\item $R$-\textsf{regularity}: if $xRy'\between y$, then $\exists x'\between x$: $x'\in\Diamond \mathord{\downarrow}y$\\
(if $x$ can epistemically access a possibility compatible with $y$, then $x$ is compatible with a possibility according to which $y$ \textit{might obtain});
\item \textsf{Reflexivity}: for all $x \in S$, $xRx$ \\ 
(everything known is true);
\item \textsf{Knowability}: for all $x\in S$, there is a $y\in S$ such that for all $z\in R(y)$, we have $z\sqsubseteq x$ \\
(there is a possibility where everything settled true by $x$ is known).
\end{itemize}

\begin{example}\label{EpCompEx} Figure \ref{EpCompFig} shows an epistemic compatibility frame $\mathcal{F}$ based on the compatibility frame of Example \ref{CompEx}. It is a good exercise to check that the three conditions on the $R$ relation are satisfied. The epistemic ortholattice $O(\mathcal{F})$ is isomorphic to the epistemic ortholattice in Figure \ref{Fig1}, as we will see when we discuss a model based on the frame $\mathcal{F}$ in Example \ref{EpModEx}.

\begin{figure}[h]
\begin{center}
\tikzset{every loop/.style={min distance=10mm,looseness=10}}
\begin{tikzpicture}[->,>=stealth',shorten >=1pt,shorten <=1pt, auto,node
distance=2.5cm,semithick]
\tikzstyle{every state}=[fill=gray!20,draw=none,text=black]

\node[circle,draw=black!100, fill=black!100, label=below:$x_1$,inner sep=0pt,minimum size=.175cm] (1) at (0,0) {{}};

\node[circle,draw=black!100, fill=black!100, label=below:$x_2$,inner sep=0pt,minimum size=.175cm] (2) at (1.5,0) {{}};

\node[circle,draw=black!100, fill=black!100, label=below:$x_3$,inner sep=0pt,minimum size=.175cm] (3) at (3,0) {{}};

\node[circle,draw=black!100, fill=black!100, label=below:$x_4$,inner sep=0pt,minimum size=.175cm] (4) at (4.5,0) {{}};
\node[circle,draw=black!100, fill=black!100, label=below:$x_5$,inner sep=0pt,minimum size=.175cm] (5) at (6,0) {{}};

\path (1) edge[-] node {{}} (2);
\path (2) edge[-] node {{}} (3);
\path (3) edge[-] node {{}} (4);
\path (4) edge[-] node {{}} (5);

\path (2) edge[dotted,thick,bend left, MidnightBlue] node {{}} (1);
\path (2) edge[dotted,thick,bend left, MidnightBlue] node {{}} (3);

\path (4) edge[dotted,thick,bend left, MidnightBlue] node {{}} (3);
\path (4) edge[dotted,thick,bend left, MidnightBlue] node {{}} (5);

\end{tikzpicture}

\end{center}\caption{An epistemic compatibility frame with dotted arrows indicating the accessibility relation $R$ (with reflexive loops omitted)}\label{EpCompFig}
\end{figure}
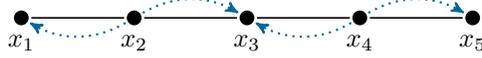
\end{example}

Turning to formal semantics for $\mathcal{EL}$ (recall Definition \ref{ELDef}), Proposition \ref{EpistemicFrameToEpistemicOrtho} leads to the following.

\begin{definition}\label{ModalSemantics} Given a modal compatibility model $\mathcal{M}= \langle S,\between,R,V\rangle$, $x\in S$, and $\varphi\in\mathcal{L}$, we define $\mathcal{M},x\Vdash \varphi$  with the same clauses as in Definition \ref{GoldblattSemantics} plus
\begin{itemize}
\item $\mathcal{M},x\Vdash\Box\varphi$ iff for all $y\in R(x)$, we have $\mathcal{M},y\Vdash \varphi$.
\end{itemize}
As before, we define $\llbracket \varphi\rrbracket^\mathcal{M}=\{x\in S\mid \mathcal{M},x\Vdash\varphi\}$.
\end{definition}
\noindent  Then given our definition of $\Diamond$ as $\neg\Box\neg$, as in (\ref{DiamondEq}) we have:
\begin{itemize}
\item $\mathcal{M},x\Vdash\Diamond\varphi$ iff $\forall x'\between x$ $\exists y'\in R(x')$ $\exists y''\between y'$: $\mathcal{M},y''\Vdash\varphi$.
\end{itemize}

Once again an easy induction shows the following.
\begin{lemma} For any modal compatibility model $\mathcal{M}$ and $\varphi\in\mathcal{EL}$, $\llbracket \varphi\rrbracket^\mathcal{M}$ is a $\between$-regular set.\end{lemma}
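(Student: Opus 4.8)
The plan is to proceed by induction on the structure of $\varphi\in\mathcal{EL}$, following the grammar of Definition~\ref{ELDef}, whose primitive constructors are propositional variables, $\neg$, $\wedge$, and $\Box$. The strategy is to observe that at each step the truth-set $\llbracket\varphi\rrbracket^\mathcal{M}$ is exactly the image of the truth-sets of the immediate subformulas under the corresponding operation of the ortholattice $O(\mathcal{F})$, and then to invoke the results already established showing that each such operation preserves $\between$-regularity. Since disjunction and $\Diamond$ are defined rather than primitive, no separate cases for them are needed.

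For the base case $\varphi=p$, we have $\llbracket p\rrbracket^\mathcal{M}=V(p)$, which is $\between$-regular by the definition of a compatibility model (Definition~\ref{CompModDef}). For the negation case, the clause for $\neg$ in Definition~\ref{GoldblattSemantics} gives $\llbracket\neg\psi\rrbracket^\mathcal{M}=\neg\llbracket\psi\rrbracket^\mathcal{M}$, where $\neg$ on the right is the operation of (\ref{NegEq}); by Proposition~\ref{FrameToOrtho} this operation sends $\between$-regular sets to $\between$-regular sets, so the inductive hypothesis on $\psi$ suffices. For conjunction, the same proposition gives $\llbracket\psi\wedge\chi\rrbracket^\mathcal{M}=\llbracket\psi\rrbracket^\mathcal{M}\cap\llbracket\chi\rrbracket^\mathcal{M}$, and since $\wedge$ is interpreted as $\cap$, which is the meet in $O(\mathcal{F})$, the intersection of two $\between$-regular sets is again $\between$-regular.

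The only case that goes beyond the non-modal lemma for $\mathcal{L}$ is $\varphi=\Box\psi$. Here the clause in Definition~\ref{ModalSemantics}, together with the standing assumption that $i$ is total, gives $\llbracket\Box\psi\rrbracket^\mathcal{M}=\Box\llbracket\psi\rrbracket^\mathcal{M}$ with $\Box$ as in (\ref{BoxEq2}); by the inductive hypothesis $\llbracket\psi\rrbracket^\mathcal{M}$ is $\between$-regular, so Proposition~\ref{BoxReg} delivers that $\Box\llbracket\psi\rrbracket^\mathcal{M}$ is $\between$-regular as well. This completes the induction.

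In truth there is no real obstacle here: all the substance has been front-loaded into Propositions~\ref{FrameToOrtho} and~\ref{BoxReg}, so the induction itself is routine. If anything, the step requiring the most care is the modal case, since it is the only one relying on $i$-\textsf{regularity} (through Proposition~\ref{BoxReg}); the point to keep in view is simply that this is precisely the frame condition guaranteeing that $\Box$ maps propositions to propositions, which is why it was imposed in Definition~\ref{DCompFrame}.
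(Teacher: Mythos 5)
Your proof is correct and matches the paper's intended argument: the paper dispatches this lemma with the remark that ``an easy induction shows'' it, relying exactly as you do on the base case from the definition of a model, Proposition~\ref{FrameToOrtho} for the $\neg$ and $\wedge$ cases, and Proposition~\ref{BoxReg} for the $\Box$ case. Your observation that the modal clause is the only genuinely new step (and that $i$-\textsf{regularity} is precisely what makes it go through) is exactly the right emphasis.
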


\begin{example}\label{EpModEx} Figure \ref{EpCompModFig} shows an epistemic compatibility model based on the epistemic compatibility frame in Figure \ref{EpCompFig} with $V(p)=\{x_1,x_2\}$, so $V(p)$ is a $\between$-regular set and $\llbracket \neg p\rrbracket^\mathcal{M}=\{x_4,x_5\}$. We call this model the Epistemic Scale for $p$, since as we move from $x_1$ to $x_5$ we move as if on a scale from $\Box p$ at $x_1$ to $p$ at $x_2$ to $\Diamond p\wedge\Diamond\neg p$ at $x_3$ to $\neg p$ at $x_4$ to $\Box\neg p$ at $x_5$. Observe the following:
\begin{center}
\begin{minipage}{3in}
\begin{itemize}
\item $\llbracket \Box p\rrbracket^\mathcal{M} =\{x_1\}$;
\item $\llbracket \neg\Box p\rrbracket^\mathcal{M} =\llbracket \Diamond\neg p\rrbracket^\mathcal{M} =\{x_3,x_4,x_5\}$;
\item $\llbracket \Box \neg p\rrbracket^\mathcal{M} =\{x_5\}$;
\end{itemize}
\end{minipage}\begin{minipage}{3in}
\begin{itemize}
\item $\llbracket \neg\Box \neg p\rrbracket^\mathcal{M} =\llbracket \Diamond p\rrbracket^\mathcal{M}=\{x_1,x_2,x_3\}$;
\item $\llbracket \Diamond p\wedge\Diamond\neg p \rrbracket^\mathcal{M}=\{x_3\}$;
\item $\llbracket \Box p\vee \Box\neg p \rrbracket^\mathcal{M}=\{x_1,x_5\}$.
\end{itemize}
\end{minipage}
\end{center}
These calculations match the fact that the associated epistemic ortholattice is isomorphic to that in Figure~\ref{Fig1}. 

Let us also see how our possibility semantics explains the failure of the distributive law of classical logic. Consider  $x_3$, which is a \textit{partial} possibility: it does not settle $p$ as true and it does not settle $\neg p$ as true, as it is compatible with both (since it is compatible with $x_2$ and $x_4$), but like any possibility, it settles $p\vee\neg p$ as true. Now since the information available in $x_3$ is $x_3$ itself, $x_3$ settles that $p$ might be true and that $\neg p$ might be true: $\Diamond p\wedge \Diamond\neg p$. Given that $(p\vee\neg p)\wedge (\Diamond p\wedge \Diamond\neg p)$ is true at $x_3$, the distributive law would require that  $(p\wedge\Diamond \neg p)\vee (\neg p\wedge\Diamond p)$ be true at $x_3$. But we have already seen in our discussion after Proposition \ref{EpCont} why no possibility can settle either of those disjuncts as true. Then since no possibility settles either disjunct as true, $x_3$ does not settle the disjunction as true, which shows that the distributive law is invalid.

\begin{figure}[h]
\begin{center}
\tikzset{every loop/.style={min distance=10mm,looseness=10}}
\begin{tikzpicture}[->,>=stealth',shorten >=1pt,shorten <=1pt, auto,node
distance=2.5cm,semithick]
\tikzstyle{every state}=[fill=gray!20,draw=none,text=black]

\node[circle,draw=black!100, fill=green!100, label=below:$x_1$,inner sep=0pt,minimum size=.175cm] (1) at (0,0) {{}};
\node at (0,-.75) {{$p$}};

\node at (0,-1.25) {{$\Box p$}};

\node at (3,-.75) {{$\Diamond p\wedge\Diamond\neg p$}};

\node at (4.5,-.75) {{$\neg p$}};

\node at (6,-.75) {{$\neg p$}};

\node at (6,-1.25) {{$\Box \neg p$}};

\node[circle,draw=black!100, fill=green!100, label=below:$x_2$,inner sep=0pt,minimum size=.175cm] (2) at (1.5,0) {{}};
\node at (1.5,-.75) {{$p$}};

\node[circle,draw=black!100, fill=black!100, label=below:$x_3$,inner sep=0pt,minimum size=.175cm] (3) at (3,0) {{}};

\node[circle,draw=black!100, fill=black!100, label=below:$x_4$,inner sep=0pt,minimum size=.175cm] (4) at (4.5,0) {{}};
\node[circle,draw=black!100, fill=black!100, label=below:$x_5$,inner sep=0pt,minimum size=.175cm] (5) at (6,0) {{}};

\path (1) edge[-] node {{}} (2);
\path (2) edge[-] node {{}} (3);
\path (3) edge[-] node {{}} (4);
\path (4) edge[-] node {{}} (5);

\path (2) edge[dotted,thick,bend left, MidnightBlue] node {{}} (1);
\path (2) edge[dotted,thick,bend left, MidnightBlue] node {{}} (3);

\path (4) edge[dotted,thick,bend left, MidnightBlue] node {{}} (3);
\path (4) edge[dotted,thick,bend left, MidnightBlue] node {{}} (5);

\end{tikzpicture} 

\end{center}\caption{An epistemic compatibility model, dubbed the Epistemic Scale for $p$, based on the frame of Figure~\ref{EpCompFig}, with $V(p)$ highlighted in green.}\label{EpCompModFig}
\end{figure}
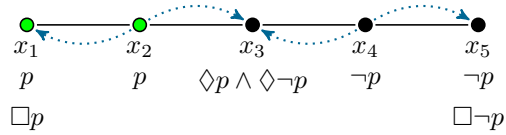
\end{example}

\begin{example} Let us consider an example with two dimensions of epistemic variation in contrast to the one dimension of the Epistemic Scale. Figure \ref{EpistemicGrid} shows what we call an Epistemic Grid for $p$ and $q$. It is obtained from the Epistemic Scale by a product construction: each possibility in the Epistemic Grid is a pair of possibilities $(x,y)$ with $x$ from the Epistemic Scale for $p$ and $y$ from the Epistemic Scale for $q$; $(x,y)$ is compatible with (resp.~accessible from) $(x',y')$ iff $x$ is compatible with (resp.~accessible from) $x'$ and $y$ is compatible with (resp.~accessible from) $y'$;\footnote{This construction of the accessibility relation guarantees that \textsf{$R$-regularity}, \textsf{Reflexivity}, and \textsf{Knowability} are still satisfied.} and a propositional variable is true at $(x,y)$ iff it is true at $x$ or at $y$. We encourage the reader to calculate the truth values of some formulas at possibilities in this Epistemic Grid. For example, $\neg (p\wedge q)$ is true at the possibilities in the bottommost two rows and rightmost two columns; hence $\Box\neg (p\wedge q)$ is true in the bottommost row and rightmost column; and hence $\neg\Box\neg (p\wedge q)$, i.e., $\Diamond (p\wedge q)$, is true at the nine possibilities in the upper-left quadrant, which are also the possibilities where $\Diamond p\wedge\Diamond q$ is true. By contrast, along the centermost column, $\Diamond p\wedge \Diamond\neg p\wedge\neg \Diamond(p\wedge\neg p)$ is true. 

While the epistemic ortholattice arising from an Epistemic Scale has 10 elements, shown on the right of Figure \ref{JIfig}, the epistemic ortholattice arising from an Epistemic Grid has 1,942 elements.\footnote{Though that is small compared to the size of the powerset of the Epistemic Grid at $2^{25}=33,554,432$ elements. See the second of the  notebooks cited in \S~\ref{Intro} for the calculation of the 1,942 figure.} This is another reason why we want a possibility semantics in addition to an algebraic semantics: while a 1,942-element epistemic ortholattice is too large to grasp, we can represent it using a perspicuous 25-element relational  epistemic compatibility frame. One can even visualize an Epistemic Cube for three variables and consider more general $n$-dimensional Epistemic Hypercubes (see the second notebook cited in \S~\ref{Intro}).

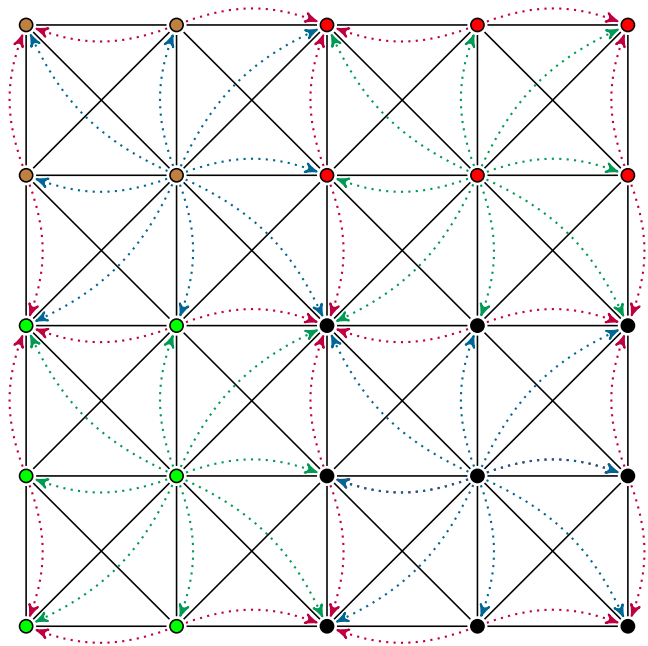
\begin{figure}[h]
\begin{center}
\tikzset{every loop/.style={min distance=10mm,looseness=10}}
\begin{tikzpicture}[->,>=stealth',shorten >=1pt,shorten <=1pt, auto,node
distance=2.5cm,semithick]
\tikzstyle{every state}=[fill=gray!20,draw=none,text=black]

\node[circle,draw=black!100, fill=brown!100, label=below:$$,inner sep=0pt,minimum size=.175cm] (1) at (0,0) {{}};
\node[circle,draw=black!100, fill=brown!100, label=below:$$,inner sep=0pt,minimum size=.175cm] (2) at (2,0) {{}};
\node[circle,draw=black!100, fill=red!100, label=below:$$,inner sep=0pt,minimum size=.175cm] (3) at (4,0) {{}};
\node[circle,draw=black!100, fill=red!100, label=below:$$,inner sep=0pt,minimum size=.175cm] (4) at (6,0) {{}};
\node[circle,draw=black!100, fill=red!100, label=below:$$,inner sep=0pt,minimum size=.175cm] (5) at (8,0) {{}};

\path (1) edge[-] node {{}} (2);
\path (2) edge[-] node {{}} (3);
\path (3) edge[-] node {{}} (4);
\path (4) edge[-] node {{}} (5);

\path (2) edge[dotted,thick,bend left=20, purple] node {{}} (1);
\path (2) edge[dotted,thick,bend left=20, purple] node {{}} (3);

\path (4) edge[dotted,thick,bend left=20, purple] node {{}} (3);
\path (4) edge[dotted,thick,bend left=20, purple] node {{}} (5);

\node[circle,draw=black!100, fill=brown!100, label=below:$$,inner sep=0pt,minimum size=.175cm] (1') at (0,-2) {{}};
\node[circle,draw=black!100, fill=brown!100, label=below:$$,inner sep=0pt,minimum size=.175cm] (2') at (2,-2) {{}};
\node[circle,draw=black!100, fill=red!100, label=below:$$,inner sep=0pt,minimum size=.175cm] (3') at (4,-2) {{}};
\node[circle,draw=black!100, fill=red!100, label=below:$$,inner sep=0pt,minimum size=.175cm] (4') at (6,-2) {{}};
\node[circle,draw=black!100, fill=red!100, label=below:$$,inner sep=0pt,minimum size=.175cm] (5') at (8,-2) {{}};

\path (1') edge[-] node {{}} (2');
\path (2') edge[-] node {{}} (3');
\path (3') edge[-] node {{}} (4');
\path (4') edge[-] node {{}} (5');

\path (1') edge[-] node {{}} (1);
\path (2') edge[-] node {{}} (2);
\path (3') edge[-] node {{}} (3);
\path (4') edge[-] node {{}} (4);
\path (5') edge[-] node {{}} (5);

\path (1') edge[-] node {{}} (2);
\path (2') edge[-] node {{}} (3);
\path (3') edge[-] node {{}} (4);
\path (4') edge[-] node {{}} (5);

\path (2') edge[-] node {{}} (1);
\path (3') edge[-] node {{}} (2);
\path (4') edge[-] node {{}} (3);
\path (5') edge[-] node {{}} (4);

\node[circle,draw=black!100, fill=green!100, label=below:$$,inner sep=0pt,minimum size=.175cm] (1'') at (0,-4) {{}};
\node[circle,draw=black!100, fill=green!100, label=below:$$,inner sep=0pt,minimum size=.175cm] (2'') at (2,-4) {{}};
\node[circle,draw=black!100, fill=black!100, label=below:$$,inner sep=0pt,minimum size=.175cm] (3'') at (4,-4) {{}};
\node[circle,draw=black!100, fill=black!100, label=below:$$,inner sep=0pt,minimum size=.175cm] (4'') at (6,-4) {{}};
\node[circle,draw=black!100, fill=black!100, label=below:$$,inner sep=0pt,minimum size=.175cm] (5'') at (8,-4) {{}};

\path (1'') edge[-] node {{}} (2'');
\path (2'') edge[-] node {{}} (3'');
\path (3'') edge[-] node {{}} (4'');
\path (4'') edge[-] node {{}} (5'');

\path (1'') edge[-] node {{}} (1');
\path (2'') edge[-] node {{}} (2');
\path (3'') edge[-] node {{}} (3');
\path (4'') edge[-] node {{}} (4');
\path (5'') edge[-] node {{}} (5');

\path (1'') edge[-] node {{}} (2');
\path (2'') edge[-] node {{}} (3');
\path (3'') edge[-] node {{}} (4');
\path (4'') edge[-] node {{}} (5');

\path (2'') edge[-] node {{}} (1');
\path (3'') edge[-] node {{}} (2');
\path (4'') edge[-] node {{}} (3');
\path (5'') edge[-] node {{}} (4');

\path (2'') edge[dotted,thick,bend left=20, purple] node {{}} (1'');
\path (2'') edge[dotted,thick,bend left=20, purple] node {{}} (3'');

\path (4'') edge[dotted,thick,bend left=20, purple] node {{}} (3'');
\path (4'') edge[dotted,thick,bend left=20, purple] node {{}} (5'');

\node[circle,draw=black!100, fill=green!100, label=below:$$,inner sep=0pt,minimum size=.175cm] (1''') at (0,-6) {{}};
\node[circle,draw=black!100, fill=green!100, label=below:$$,inner sep=0pt,minimum size=.175cm] (2''') at (2,-6) {{}};
\node[circle,draw=black!100, fill=black!100, label=below:$$,inner sep=0pt,minimum size=.175cm] (3''') at (4,-6) {{}};
\node[circle,draw=black!100, fill=black!100, label=below:$$,inner sep=0pt,minimum size=.175cm] (4''') at (6,-6) {{}};
\node[circle,draw=black!100, fill=black!100, label=below:$$,inner sep=0pt,minimum size=.175cm] (5''') at (8,-6) {{}};

\path (1''') edge[-] node {{}} (2''');
\path (2''') edge[-] node {{}} (3''');
\path (3''') edge[-] node {{}} (4''');
\path (4''') edge[-] node {{}} (5''');

\path (1''') edge[-] node {{}} (1'');
\path (2''') edge[-] node {{}} (2'');
\path (3''') edge[-] node {{}} (3'');
\path (4''') edge[-] node {{}} (4'');
\path (5''') edge[-] node {{}} (5'');

\path (1''') edge[-] node {{}} (2'');
\path (2''') edge[-] node {{}} (3'');
\path (3''') edge[-] node {{}} (4'');
\path (4''') edge[-] node {{}} (5'');

\path (2''') edge[-] node {{}} (1'');
\path (3''') edge[-] node {{}} (2'');
\path (4''') edge[-] node {{}} (3'');
\path (5''') edge[-] node {{}} (4'');

\path (4''') edge[dotted,thick,bend left=20, purple] node {{}} (3''');
\path (4''') edge[dotted,thick,bend left=20, purple] node {{}} (5''');

\node[circle,draw=black!100, fill=green!100, label=below:$$,inner sep=0pt,minimum size=.175cm] (1'''') at (0,-8) {{}};
\node[circle,draw=black!100, fill=green!100, label=below:$$,inner sep=0pt,minimum size=.175cm] (2'''') at (2,-8) {{}};
\node[circle,draw=black!100, fill=black!100, label=below:$$,inner sep=0pt,minimum size=.175cm] (3'''') at (4,-8) {{}};
\node[circle,draw=black!100, fill=black!100, label=below:$$,inner sep=0pt,minimum size=.175cm] (4'''') at (6,-8) {{}};
\node[circle,draw=black!100, fill=black!100, label=below:$$,inner sep=0pt,minimum size=.175cm] (5'''') at (8,-8) {{}};

\path (1'''') edge[-] node {{}} (2'''');
\path (2'''') edge[-] node {{}} (3'''');
\path (3'''') edge[-] node {{}} (4'''');
\path (4'''') edge[-] node {{}} (5'''');

\path (1'''') edge[-] node {{}} (1''');
\path (2'''') edge[-] node {{}} (2''');
\path (3'''') edge[-] node {{}} (3''');
\path (4'''') edge[-] node {{}} (4''');
\path (5'''') edge[-] node {{}} (5''');

\path (1'''') edge[-] node {{}} (2''');
\path (2'''') edge[-] node {{}} (3''');
\path (3'''') edge[-] node {{}} (4''');
\path (4'''') edge[-] node {{}} (5''');

\path (2'''') edge[-] node {{}} (1''');
\path (3'''') edge[-] node {{}} (2''');
\path (4'''') edge[-] node {{}} (3''');
\path (5'''') edge[-] node {{}} (4''');

\path (2'''') edge[dotted,thick,bend left=20, purple] node {{}} (1'''');
\path (2'''') edge[dotted,thick,bend left=20, purple] node {{}} (3'''');

\path (4'''') edge[dotted,thick,bend left=20, purple] node {{}} (3'''');
\path (4'''') edge[dotted,thick,bend left=20, purple] node {{}} (5'''');

\path (1') edge[dotted,thick,bend left=20, purple] node {{}} (1);
\path (1') edge[dotted,thick,bend left=20, purple] node {{}} (1'');

\path (1''') edge[dotted,thick,bend left=20, purple] node {{}} (1'');
\path (1''') edge[dotted,thick,bend left=20, purple] node {{}} (1'''');

\path (3') edge[dotted,thick,bend left=20, purple] node {{}} (3);
\path (3') edge[dotted,thick,bend left=20, purple] node {{}} (3'');

\path (3''') edge[dotted,thick,bend left=20, purple] node {{}} (3'');
\path (3''') edge[dotted,thick,bend left=20, purple] node {{}} (3'''');

\path (5') edge[dotted,thick,bend left=20, purple] node {{}} (5);
\path (5') edge[dotted,thick,bend left=20, purple] node {{}} (5'');

\path (5''') edge[dotted,thick,bend left=20, purple] node {{}} (5'');
\path (5''') edge[dotted,thick,bend left=20, purple] node {{}} (5'''');

\path (2') edge[dotted,thick,bend left=20, MidnightBlue] node {{}} (1');
\path (2') edge[dotted,thick,bend left=20, MidnightBlue] node {{}} (3');
\path (2') edge[dotted,thick,bend left=20, MidnightBlue] node {{}} (2);
\path (2') edge[dotted,thick,bend left=20, MidnightBlue] node {{}} (2'');
\path (2') edge[dotted,thick,bend left=20, MidnightBlue] node {{}} (1);
\path (2') edge[dotted,thick,bend left=20, MidnightBlue] node {{}} (3);
\path (2') edge[dotted,thick,bend left=20, MidnightBlue] node {{}} (3'');
\path (2') edge[dotted,thick,bend left=20, MidnightBlue] node {{}} (1'');

\path (4') edge[dotted,thick,bend left=20, ForestGreen] node {{}} (3');
\path (4') edge[dotted,thick,bend left=20, ForestGreen] node {{}} (5');
\path (4') edge[dotted,thick,bend left=20, ForestGreen] node {{}} (4);
\path (4') edge[dotted,thick,bend left=20, ForestGreen] node {{}} (4'');
\path (4') edge[dotted,thick,bend left=20, ForestGreen] node {{}} (3);
\path (4') edge[dotted,thick,bend left=20, ForestGreen] node {{}} (5);
\path (4') edge[dotted,thick,bend left=20, ForestGreen] node {{}} (3'');
\path (4') edge[dotted,thick,bend left=20, ForestGreen] node {{}} (5'');

\path (2''') edge[dotted,thick,bend left=20, ForestGreen] node {{}} (2'');
\path (2''') edge[dotted,thick,bend left=20, ForestGreen] node {{}} (2'''');
\path (2''') edge[dotted,thick,bend left=20, ForestGreen] node {{}} (1''');
\path (2''') edge[dotted,thick,bend left=20, ForestGreen] node {{}} (3''');
\path (2''') edge[dotted,thick,bend left=20, ForestGreen] node {{}} (1'');
\path (2''') edge[dotted,thick,bend left=20, ForestGreen] node {{}} (3'');
\path (2''') edge[dotted,thick,bend left=20, ForestGreen] node {{}} (1'''');
\path (2''') edge[dotted,thick,bend left=20, ForestGreen] node {{}} (3'''');

\path (4''') edge[dotted,thick,bend left=20, MidnightBlue] node {{}} (4'');
\path (4''') edge[dotted,thick,bend left=20, MidnightBlue] node {{}} (4'''');
\path (4''') edge[dotted,thick,bend left=20, MidnightBlue] node {{}} (3''');
\path (4''') edge[dotted,thick,bend left=20, MidnightBlue] node {{}} (5''');
\path (4''') edge[dotted,thick,bend left=20, MidnightBlue] node {{}} (3'');
\path (4''') edge[dotted,thick,bend left=20, MidnightBlue] node {{}} (5'');
\path (4''') edge[dotted,thick,bend left=20, MidnightBlue] node {{}} (3'''');
\path (4''') edge[dotted,thick,bend left=20, MidnightBlue] node {{}} (5'''');
\end{tikzpicture} 

\end{center}\caption{An Epistemic Grid for $p$ and $q$. Green possibilities make $p$ true; red possibilities make $q$ true; and brown possibilities makes both $p$ and $q$ true. Different colors for different dotted arrows are used only to make the pattern more intelligible; all such arrows represent the accessibility relation. Reflexive dotted loops between each possibility and itself are assumed but omitted from the diagram.}\label{EpistemicGrid}
\end{figure}
\end{example}

\begin{remark}\label{OtherMust} While we interpret the `must' modality $\Box$ using the $R$ relation, there is another operation, which we will denote by $\Box_\between$, available on the ortholattice $O(\mathcal{F})$ of any compatibility frame $\mathcal{F}$: $\Box_\between A={\{x\in S\mid \forall x'\between x\;\, x'\in A\}}$. Note that if $A$ is $\between$-regular, so is $\Box_\between A$: for if $x\not\in \Box_\between A$, then there is a $y\between x$ with $y\not\in A$, which implies that for all $z\between y$, $z\not\in\Box_\between A$.\footnote{Note, by contrast, that the function $\blacklozenge_\between$ defined by $\blacklozenge_\between A=\{x\in S\mid \exists x'\between x: x'\in A\}$ is not an operation on the ortholattice $O(\mathcal{F})$, i.e., it is not guaranteed that if $A$ is $\between$-regular, then so is $\blacklozenge_\between A$. For example, in the compatibility frame in Figure \ref{Fig2}, $\{x_1,x_2,x_3\}$ is $\between$-regular, but $\blacklozenge_\between \{x_1,x_2,x_3\}=\{x_1,x_2,x_3,x_4\}$ is not. A natural response to this problem is to apply the closure operator from Footnote \ref{ClosureOperator} to $ \blacklozenge_\between A$, interpreting `might $A$' as $c_\between \blacklozenge_\between A$. But $c_\between \blacklozenge_\between $ is just the dual of $\Box_\between$, i.e., $c_\between \blacklozenge_\between A = \neg \Box_\between \neg A$, so the reasons for rejecting $\Box_\between$ as `must' in the remainder of Remark \ref{OtherMust} are also reasons for rejecting  $c_\between \blacklozenge_\between$ as `might'.} 

In fact, interpreting `must' as $\Box_\between$ is just a special case of our possibility semantics: the proposal is just to set $R=\,\between$ (note using Lemma \ref{RregEquiv} that $R$-\textsf{regularity} holds). However, while the $\Box_\between$ modality may have useful applications,\footnote{For example, compatibility frames can of course also be regarded as classical possible world frames with $\between$ as an accessibility relation for the modality $\Box_\between$ on the powerset of the set of possibilities. This leads to the translation of orthologic into the classical modal logic \textsf{KTB} (\citealt{Goldblatt1974}, cf.~\citealt{Dishkant1977}), similar in spirit to the translation of intuitionistic logic into classical \textsf{S4} (\citealt{Godel1933b}, \citealt{McKinsey1948}). Likewise, epistemic compatibility frames can be regarded as classical \textit{bimodal} possible world frames, just as possibility frames for classical modal logic are regarded as bimodal possible world frames in \citealt{Benthem2017}. This leads to a translation of our epistemic orthologic into a classical bimodal logic. We omit the details, which can be worked out on the model of \citealt{Benthem2017}.} it is not appropriate for interpreting `must'. For one thing, it validates the B principle (where $\Diamond_\between A=\neg\Box_\between \neg A$): for all $A\in O(\mathcal{F})$, $\Diamond_\between \Box_\between A\subseteq A$, which we do not want to require in general. Even in cases where B is acceptable, $\Box_\between$ may not give the desired results for `must'. For example, in the frame in Figure \ref{EpCompModFig}, where $P=\{x_1,x_2\}$, we have $\Box P=\Box_\between P=\{x_1\}$, $\Box \neg P=\Box_\between \neg P=\{x_5\}$, and $\neg\Box P\wedge\neg\Box\neg P=\neg\Box_\between P\wedge\neg\Box_\between\neg P=\{x_3\}$; however, while $\Box\Box P=\{x_1\}$, $\Box\neg\Box P=\{x_5\}$, and $\Box(\neg \Box P\wedge\neg\Box\neg P)=\{x_3\}$, we have $\Box_\between\Box_\between P=\Box_\between\neg\Box_\between P=\Box_\between (\neg\Box_\between P\wedge \neg\Box_\between \neg P)=\varnothing$. The general problem is that iterating  $\Box_\between$ keeps shrinking the relevant set until one reaches a set of possibilities that are not compatible with anything outside it. Most fatal for the idea of interpreting `must' as $\Box_\between$ is that this approach fails to make Wittgenstein sentences contradictions. Consider a compatibility frame with three possibilities such that $x \between y \between z$, but not $x\between z$. Set $V(p) = \{x\}$, which is a $\between$-regular set. Then $p\wedge\neg\Box_\between p$ is true at $x$. The moral is that we cannot reduce \textit{epistemic} modality entirely to the \textit{alethic} relation of compatibility.
\end{remark}

As before, we define semantic consequence standardly in terms of truth preservation.

\begin{definition}\label{PossCon1} Given a class $\mathbf{C}$ of modal compatibility frames, define the semantic consequence relation $\vDash_\mathbf{C}$, a binary relation on $\mathcal{EL}$, as follows: $\varphi\vDash_\mathbf{C}\psi$ if for every  $\mathcal{F}\in\mathbf{C}$, model $\mathcal{M}$ based on $\mathcal{F}$, and possibility $x$ in $\mathcal{M}$, if $\mathcal{M},x\Vdash\varphi$, then $\mathcal{M},x\Vdash\psi$.
\end{definition}

Our first main result is the completeness of the epistemic orthologic $\textsf{EO}$ in Definition \ref{EODef} with respect to our epistemic possibility semantics. 

\begin{theorem}\label{EOThm} The logic $\textsf{EO}$ is sound and complete with respect to the class $\mathbf{ECF}$ of all epistemic compatibility frames according to the consequence relation of Definition \ref{PossCon1}: for all $\varphi,\psi\in\mathcal{EL}$, we have $\varphi\vdash_\mathsf{EO}\psi$ if and only if $\varphi\vDash_\mathbf{ECF}\psi$.
\end{theorem}

\begin{proof} Soundness is a straightforward check of the principles of the logic. For completeness, as in the proofs of Theorem \ref{AlgComp1} and \ref{AlgComp2}, we consider the Lindenbaum-Tarski algebra $L$ of \textsf{EO}, which is an epistemic ortholattice. From $L$ we define a modal compatibility frame $\mathcal{F}=\langle S,\between,R\rangle$ as follows:
\begin{itemize}
\item $S$ is the set of all proper filters of $L$;\footnote{\label{FilterNote}Recall that a \textit{filter} in a lattice is a nonempty set $F$ of elements such that for any $a,b\in L$, together $a\in F$ and $a\leq b$ imply $b\in F$, and $a,b\in F$ implies $a\wedge b\in F$; a filter is \textit{proper} if it does not contain every element of the lattice.}
\item for $F,G\in S$, $F\between G$ if there is no $a\in F$ with $\neg a\in G$;
\item for $F,G\in S$, $FRG$ iff for all $\Box a\in F$, we have $a\in G$.
\end{itemize}
That $\between$ is reflexive follows from the fact that each $F\in S$ is a \textit{proper} filter, and in $L$, $a\wedge\neg a=0$. That $\between$ is symmetric follows from the fact that if $a\in F$ and $\neg a\in G$, then $\neg a\in G$ and $\neg\neg a\in F$, since $a=\neg\neg a$ in $L$, so there is a $b$ such that $b\in G$ and $\neg b\in F$. That $R$ is reflexive follows from $L$ being a \textsf{T} modal ortholattice.

Below we will use the fact that if $F$ is a proper filter, then so is $i(F)=\{a\in B\mid \Box a\in F\}$. For if not, then there are $\Box a_1,\dots,\Box a_n\in F$ such that $a_1\wedge\dots\wedge a_n=0$, which implies $\Box(a_1\wedge\dots\wedge a_n)=\Box 0$ and hence $\Box a_1\wedge\dots \wedge\Box a_n=\Box0$.  Since $L$ is a \textsf{T} modal ortholattice, $\Box 0 =0$, so we obtain $\Box a_1\wedge\dots \wedge\Box a_n=0$, which contradicts the assumption that $F$ is proper, since $\Box a_1,\dots,\Box a_n\in F$ implies $\Box a_1\wedge\dots \wedge\Box a_n\in F$. Thus, $i(F)$ is proper, and by definition of $R$, we have $FRi(F)$. 

Now let us prove $R$-\textsf{regularity} in its form in Lemma \ref{RregEquiv}:
\begin{itemize}
\item if $FRG'\between G$, then $\exists F'\between F$ $\forall F''\between F'$ $\exists G''$: $F''RG''\between G$.
\end{itemize}
Let $F'$ be the filter generated by $\{\Diamond a\mid a\in G\}$. We claim that $F'$ is proper and that $F'\between F$. If not, then for some $b\in F'$, $\neg b\in F$. Since $b\in F'$, there are $a_1,\dots,a_n\in G$ such that $\Diamond a_1\wedge\dots \wedge\Diamond a_n\leq b$. It follows that $\Diamond (a_1\wedge\dots\wedge a_n)\leq b$, so $\neg b\leq \Box \neg (a_1\wedge\dots\wedge a_n)$. Hence $\Box \neg (a_1\wedge\dots\wedge a_n)\in F$, which with $FRG'$ implies $\neg (a_1\wedge\dots\wedge a_n)\in G'$. But  $a_1,\dots,a_n\in G$ implies $a_1\wedge\dots\wedge a_n\in G$, so $\neg (a_1\wedge\dots\wedge a_n)\in G'$  contradicts $G'\between G$. Thus, we have $F'\between F$. Now consider any $F''\between F'$. We claim that $i(F'')\between G$. If not, then there is some $a\in G$ such that $\neg a\in i(F'')$. But $a\in G$ implies $\Diamond a\in F'$, and  $\neg a\in i(F'')$ implies $\Box\neg a\in F''$, and together $\Diamond a\in F'$ and $\Box\neg a\in F''$ contradict $F''\between F'$. Thus, $ i(F'')\between G$, so setting $G''= i(F'')$ completes the proof of $R$-\textsf{regularity}.

Finally, we show that the frame satisfies \textsf{Knowability}: for all $F\in S$, there is some $G\in S$ such that for all $H\in R(G)$, we have $H\sqsubseteq F$.  Given $F$, let $G$ be the filter generated by $\{\Box a\mid a\in F\}$. We claim that $G$ is proper. If not, then there are $a_1,\dots,a_n\in F$ such that $\Box a_1\wedge\dots\wedge \Box a_n=0$ and hence $\Box (a_1\wedge\dots\wedge a_n)=0$. Hence $\Diamond \neg (a_1\wedge\dots\wedge a_n)=1$, so $\Diamond \neg (a_1\wedge\dots\wedge a_n)\in F$. From  $a_1,\dots,a_n\in F$ we also have $a_1\wedge\dots\wedge a_n\in F$. But since $L$ is an epistemic ortholattice, from $\Diamond \neg (a_1\wedge\dots\wedge a_n)\in F$ and $a_1\wedge\dots\wedge a_n\in F$, we have $0\in F$, contradicting our assumption that $F$ is proper. Hence $G$ is proper. To show $H\sqsubseteq F$, suppose $I\between H$ but not $I\between F$, so there is some $ b\in I$ such that $\neg b \in F$. Since $\neg b\in F$, $\Box\neg b\in G$, which with $GRH$ implies $\neg b\in H$, which contradicts $I\between H$.

Thus, $\mathcal{F}=\langle S,\between,R\rangle \in \mathbf{ECF}$. It is well known that the map sending $a\in L$ to $\widehat{a}=\{F\in S\mid a\in F\}$ is an embedding of the ortholattice reduct of $L$ into the ortholattice of $\between$-regular subsets of  $\langle S,\between\rangle$ (see, e.g., \citealt[Proposition 1]{Goldblatt1975}).\footnote{Goldblatt \citeyearpar{Goldblatt1975} works with the complement of the compatibility relation, called the orthogonality relation, but the proof is easily rephrased in terms of compatibility.} It only remains to observe that this embedding also respects $\Box$: \[\widehat{\Box a} = \{F\in S\mid \Box a\in F\}= \{F\in S\mid a\in i(F)\}= \{F\in S\mid R(F)\subseteq\widehat{a}\}=\Box \widehat{a}.\]
Thus, the map $a\mapsto\widehat{a}$ is a modal ortholattice embedding of $L$ into $O(\mathcal{F})$.  Let $\mathcal{M}$ be the model based on $\mathcal{F}$ with the valuation $V$ defined by $V(p)=\widehat{[p]}$. Then since $a\mapsto\widehat{a}$ is a modal ortholattice embedding, we have $\llbracket \varphi\rrbracket^\mathcal{M}=\widehat{[\varphi]}$ for all $\varphi\in\mathcal{EL}$. Now if $\varphi \nvdash_\mathsf{EO}\psi$, then in $L$ we have $[\varphi]\not\leq [\psi]$. Then since $a\mapsto\widehat{a}$ is a modal ortholattice embedding, it follows that $\widehat{[\varphi]}\not\subseteq \widehat{[\psi]}$ and hence  $\llbracket \varphi\rrbracket^\mathcal{M}\not\subseteq \llbracket \psi \rrbracket^\mathcal{M}$. Therefore, $\varphi\nvDash_\mathbf{ECF} \psi$.\end{proof}

\begin{remark} The proof of Theorem \ref{EOThm} shows that instead of giving semantics for $\Box$ using an accessibility relation $R\subseteq S\times S$, we could give a semantics using an accessibility \textit{function} $i:S\to S$ such that 
\[\Box A=\{x\in S\mid i(x)\in A\}.\]
This semantics simplifies some abstract calculations but has the disadvantage of requiring us to add more possibilities to frames such as the frame in Figure \ref{EpCompFig} in order to do the work of a set of epistemically accessible possibilities using a single coarse-grained possibility (in the case of the frame in Figure \ref{EpCompFig}, the functional approach requires us to add two possibilities to the frame).
\end{remark}

\subsection{Distinguishing non-epistemic propositions}\label{DistinguishSec2}

Just as we distinguished non-epistemic propositions from arbitrary propositions in the context of algebraic semantics in \S~\ref{DistinguishSec}, we can now do the same in the context of possibility semantics.\footnote{Equipping a frame with a distinguished collection of propositions is similar in spirit to introducing \textit{general frames} in modal logic (\citealt[\S~5.5]{Blackburn2001}). Unlike with general frames,  we do not require that the distinguished collection be closed under $\Box$. However,  we will have a different requirement involving $\Box$ in Definition \ref{GroundedEpistemicFrame}.}

\begin{proposition}\label{StratProp} Given a compatibility frame $\mathcal{F}=\langle S,\between,R\rangle$, let  $\mathbb{B}$ be a nonempty set of $\between$-regular subsets of $S$ closed under $\cap$ and the operation $\neg$ from (\ref{NegEq}). Then the following are equivalent:
\begin{enumerate}
\item $\mathbb{B}$ is a Boolean algebra under $\cap$ and $\neg$;
\item\label{StratProp2} for all $A,B\in\mathbb{B}$, if there are $x\in A$ and $y\in B$ with $x\between y$, then $A\cap B\neq\varnothing$.
\end{enumerate}
\end{proposition}
\begin{proof} By Proposition \ref{PseudoToBoole}, $\mathbb{B}$ being Boolean is equivalent to the condition that  $\neg$ restricted to $\mathbb{B}$ is pseudocomplementation, i.e., that for all $A,B\in \mathbb{B}$,  if $A\cap B=\varnothing$, then $A\subseteq\neg B$;  and $A\subseteq \neg B$ is equivalent to there being no $x\in A$ and $y\in B$ with $x\between y$. Contraposing, we obtain the condition in part \ref{StratProp2}.\end{proof}

\noindent Thus, the key condition on $\mathbb{B}$ is that if there are compatible possibilities making $A$ and $B$ true, respectively, then there is a single possibility making both $A$ and $B$ true (cf.~the even stronger classical idea that compatibility implies compossibility in Remark \ref{BooleanCase}).

\begin{definition}\label{GroundedEpistemicFrame} A \textit{grounded modal compatibility frame} is a tuple $\mathcal{F}=\langle S,\between,R,\mathbb{B} \rangle$ where $\langle S,\between,R\rangle$ is a modal compatibility frame and $\mathbb{B}$ is a nonempty collection of $\between$-regular sets closed under $\cap$ and the operation $\neg$ from (\ref{NegEq}). Given a grounded frame $\mathcal{F}=\langle S,\between,R,\mathbb{B}\rangle $, 
\begin{itemize}
\item let $\mathbb{B}_0=\mathbb{B}$, and
\item let  $\mathbb{B}_{n+1}$ be the closure of  $\{\Box B\mid B\in \mathbb{B}_n\}$ under $\cap$ and $\neg$,
\end{itemize}
where $\Box$ is the operation defined from $R$ in (\ref{BoxEq1}). We say that $\mathcal{F}$ is \textit{stratified} if for all $n\in\mathbb{N}$ and  $A,B\in\mathbb{B}_n$, if there are $x\in A$ and $y\in B$ with $x\between y$, then $A\cap B\neq\varnothing$.

A \textit{stratified epistemic compatibility frame} is a stratified modal compatibility frame in which $\langle S,\between,R\rangle$ is an epistemic compatibility frame.\end{definition}

Then the following is immediate from Proposition \ref{StratProp}.

\begin{corollary} In a stratified modal compatibility frame, each $\mathbb{B}_n$ is a Boolean algebra.
\end{corollary}

\begin{example} It is easy to check by hand that the epistemic compatibility frame in Figure \ref{EpCompFig} equipped with $\mathbb{B}=\{\varnothing, \{x_1,x_2\}, \{x_4,x_5\},  S\}$ is stratified. In this case $\mathbb{B}_1$ has eight propositions, and $\mathbb{B}_2=\mathbb{B}_1$, so $\mathbb{B}_n=\mathbb{B}_1$ for all $n\geq 1$; this matches what we observed in Example \ref{LevelwiseBooleEx} for the corresponding epistemic ortho-Boolean lattice. 
\end{example}

Recall from \S~\ref{DistinguishSec} the distinguished set \textsf{Bool} of non-epistemic propositional variables. These are now interpreted as propositions in the distinguished family $\mathbb{B}$.

\begin{definition}\label{GroundedModel} A \textit{grounded modal compatibility model} is a pair $\mathcal{M}=\langle \mathcal{F},V\rangle$ where $\mathcal{F}=\langle S,\between, R, \mathbb{B}\rangle$ is a grounded modal compatibility frame and $V$ assigns to each $p\in\mathsf{Prop}$ a $\between$-regular  $V(p)\subseteq S$ and to each $\mathtt{p}\in\mathsf{Bool}$ a $V(\mathtt{p})\in\mathbb{B}$. We say that $\mathcal{M}$ is \textit{based on} $\mathcal{F}$.\end{definition}

Next comes the definition of consequence, which follows exactly the same pattern as before.

\begin{definition}\label{PossCon} Given a class $\mathbf{C}$ of grounded modal compatibility frames, define the semantic consequence relation $\vDash_\mathbf{C}$, a binary relation on $\mathcal{EL}^+$, as follows: $\varphi\vDash_\mathbf{C}\psi$ if for every  $\mathcal{F}\in\mathbf{C}$, model $\mathcal{M}$ based on $\mathcal{F}$, and possibility $x$ in $\mathcal{M}$, if $\mathcal{M},x\Vdash\varphi$, then $\mathcal{M},x\Vdash\psi$.
\end{definition}

Finally, we can prove that our epistemic orthologic $\textsf{EO}^+$ in Figure \ref{EO+Fig} is sound and complete with respect to possibility semantics using stratified epistemic compatibility frames.

\begin{theorem}\label{EO+Comp} The logic $\textsf{EO}^+$ is sound and complete with respect to the class $\mathbf{ECF}^+$ of all stratified epistemic compatibility frames according to the consequence relation of Definition \ref{PossCon}: for all $\varphi,\psi\in\mathcal{EL}^+$, we have $\varphi\vdash_{\mathsf{EO}^+}\psi$ if and only if $\varphi\vDash_{\mathbf{ECF}^+}\psi$.
\end{theorem}
\begin{proof} Simply add to the proof of Theorem \ref{EOThm} that $\mathbb{B}=\{\widehat{[\beta]}\mid \beta\mbox{ Boolean}\}$. As in the proof of Theorem \ref{EO+comp1}, each $B_n$ is a Boolean subalgebra of the Lindenbaum-Tarski algebra, which implies that $\mathbb{B}_n$ satisfies the key condition in Definition \ref{GroundedEpistemicFrame}: in contrapositive form, if there is no proper filter $F\in \widehat{[\alpha]}\cap \widehat{[\beta]}$, i.e., no proper filter $F$ with $[\alpha]\wedge[\beta]\in F$, so $[\alpha]\wedge[\beta]=0$ in the Lindenbaum-Tarski algebra, then since $[\alpha]$ and $[\beta]$ belong to a Boolean subalgebra, we have $[\alpha]\leq \neg [\beta]$, which implies that there are no proper filters $F\in\widehat{[\alpha]}$ and $G\in \widehat{[\beta]}$ with $F\between G$.\end{proof}

\noindent Thus, our logic \textsf{EO}$^+$   is the logic not only of a certain class of algebras but also of a concrete possibility semantics generalizing possible world semantics. 

\section{Constructing possibilities from worlds}\label{Epistemicization}\label{EpExtSection}

We have now given an algebraic semantics and a possibility semantics that both correspond to our logic \textsf{EO}$^+$. 
From one perspective, our job is now done twice over: for those who might worry that algebraic semantics is suspiciously close to the corresponding syntactic principles, we have given a concrete implementation in terms of truth at possibilities. 
However, some might find it hard to theorize directly in terms of possibilities. Possible worlds are generally more familiar to  semanticists and logicians, and they have been much more thoroughly studied by philosophers. 

In order to counteract any squeamishness about possibilities, we will show in this section that we can {construct} a model for our possibility semantics from any given possible worlds model. This construction should put to rest any worries that readers might have about working with a framework that rests on our non-standard semantic foundation. It shows that our approach is ontologically innocent: if you are comfortable with possible worlds but not possibilities, you can build the whole system on that foundation. (Of course, there may be ontological reasons to prefer possibilities, but we will not make that argument here.) Moreover, by giving a concrete way for those familiar with possible worlds semantics to build 
models in our semantic framework, this construction will enable semanticists comfortable with possible worlds to easily build  models for our possibility semantics.
More broadly, it gives us a  scaffolding for moving from structures defined on  possible worlds  to structures defined on possibilities. In subsequent work, we will show how to use this scaffolding to lift  probability functions and conditional selection functions from worlds to possibilities. More generally, the goal is to develop a technique to nimbly take us from the familiar setting of possible worlds semantics into a possibility semantics framework more suitable for theorizing about epistemic modality, showing how our  possibility semantics can capture both the standard phenomena covered by possible world semantics and the peculiar behavior of epistemic modals.

In more detail, given a possible worlds model, we will show how to transform that model into one of our epistemic compatibility models. As we will see, the resulting models validate a strict strengthening of \textsf{EO}$^+$; in particular,  they commit us to the \textsf{S5} axioms in addition to \textsf{EO}$^+$, so not every epistemic compatibility model can be represented as coming from a possible worlds model. The goal of this construction is to show just one way to obtain a possibility semantics from a non-modal starting point. 

The basic idea behind the construction, starting with a set $W$ of worlds, is to \[\mbox{construct possibilities as \textit{pairs $(A,I)$ of sets of worlds}  where $\varnothing \neq A\subseteq I\subseteq W$.}\]
In fact, we show 
 that our construction applies starting with an arbitrary Boolean algebra $B$, but in the finite case we can assume without loss of generality that $B$ is the powerset of a set of worlds.

Given a Boolean algebra $B$ of non-modal propositions, we wish to ``add modal propositions'' to $B$, resulting in an epistemic ortholattice that we will call the \textit{epistemic extension} of $B$. We will define the epistemic extension of $B$ as the epistemic ortholattice coming from an epistemic compatibility frame constructed from~$B$. 

\begin{definition}\label{EpistemicFrameDef} Let $B$ be a Boolean algebra. The \textit{epistemic frame of $B$} is the tuple $B^\mathsf{e}=(S,\between, R)$ defined as follows:
\begin{enumerate}
\item\label{EpistemicFrameDef1} $S=\{(a,i)\mid a,i\in B, 0\neq a\leq i\}$;
\item\label{EpistemicFrameDef2} $(a,i)\between (a',i')$ iff $a\wedge a'\neq 0$ and $a\leq i'$ and $a'\leq i$;
\item\label{EpistemicFrameDef3} $(a,i)R(a',i')$ iff $a\leq a'$ and $i'\leq i$.
\end{enumerate}
Given a valuation $\theta:\mathsf{Bool}\to B$, we define  $\theta^\mathsf{e}$ by $\theta^\mathsf{e}(\mathtt{p})=\{(a,i)\mid a\leq \theta(\mathtt{p})\}$. The \textit{epistemic model of} $(B,\theta)$ is the pair $(B^\mathsf{e},\theta^\mathsf{e})$.
\end{definition}

Thus, the possibilities in the epistemic frame of $B$ are pairs of propositions. As noted above, if $B$ is the powerset of a set $W$ of  worlds, the possibilities are pairs $(A,I)$ where $\varnothing \neq A\subseteq I\subseteq W$. Pairs of the form $(\{w\},I)$ are familiar from domain semantics for epistemic modals (e.g., \citealt{Yalcin2007}, \citealt{MacFarlane:2011}), where $w\in W$ is a world and $I\subseteq W$ is an information state. 
But unlike in domain semantics,  our $A$ need not be a singleton set, so we have in effect ``partialized'' or ``possibilized'' the worldly component of the familiar picture.
Moreover, unlike in domain semantics, where there is a clear division of labor between the first component, which determines truth-values for Boolean sentences, and the second component, which determines truth-values for modal sentences, in the present framework, \emph{both} components contribute to the evaluation of modals:  the second component $I$ determines what Boolean propositions \emph{must} be the case, while  what Boolean propositions \textit{might} be the case is determined by the first component $A$, which \emph{also} determines what Boolean propositions \textit{are} the case (see Lemmas \ref{SimpleTruth} and \ref{AlgSimpleTruth}). The dual roles of the first component are essential to  how this approach, unlike domain semantics, validates Wittgenstein's Law: if $\beta$ is determined  to be true by $A$, then $\neg\beta$ is not compatible with $A$ and so $\lozenge\neg\beta$ cannot be true. But the second component is also required to determine what must be the case: just because something is settled true by $A$ does not mean that it is \emph{known} to be true. 

More carefully, any Boolean $\beta$ \textit{entailed} by the first component of a possibility (i.e., $A\subseteq \tilde{\theta}(\beta)$, when $B$ is a powerset) \textit{is} true, any Boolean $\beta$ \textit{consistent} with the first component (i.e., $A\cap \tilde{\theta}(\beta)\neq \varnothing$)  \textit{might} be true, and any Boolean $\beta$ \textit{entailed} by the second component (i.e., $I\subseteq \tilde{\theta}(\beta)$) \textit{must} be true. This helps explain the definitions of $\between$ and $R$. The definition of $\between$ says that two possibilities are compatible if (i) how things are and might be according to each possibility is consistent with how things are and might be according to the other ($A\cap A'\neq \varnothing$) and (ii) how things are and might be according to each possibility is a narrowing down of how things must be according to the other ($A\subseteq I'$ and $A'\subseteq I$). While (i) ensures that if one possibility settles that $\beta$ is true, then a compatible possibility does not settle that $\neg \beta$ is true, (ii) ensures that if one possibility settles that $\beta$ \textit{might} be true (so there is a $\beta$-world in $A$), then a compatible possibility does not settle that $\neg \beta$ \textit{must} be true (since the $\beta$-world in $A$ belongs to $I'$). According to the definition of $R$, one possibility has epistemic access to another iff both components of the second possibility lie in the interval $\{X\subseteq W\mid A\subseteq X\subseteq I\}$ between the two components of the first. This ensures that whatever $\beta$ might be true according to the first possibility also might be true according to the second (since $A\subseteq A'$), and whatever $\beta$  must be true according to the first possibility also must be true according to the second (since $I'\subseteq I$).

We can also give a simple characterization of refinement (recall Lemma \ref{RefinementDef}) in the epistemic frame of $B$.

\begin{lemma}\label{SimpleRefinement} Given a Boolean algebra $B$ and possibilities $(a,i)$ and $(a',i')$ in $B^\mathsf{e}$, we have $(a,i)\sqsubseteq (a',i')$ iff $a=a'$ and $i\leq i'$.
\end{lemma}
\begin{proof} For the right-to-left direction,  suppose $a=a'$, $i\leq i'$, and $(a'',i'')\between (a,i)$. Then since $a=a'$ and $i\leq i'$,  we have $(a'',i'')\between (a',i')$. Hence $(a,i)\sqsubseteq (a',i')$. From left to right, if $a\not\leq a'$, then $(a\wedge\neg a',a)\between (a,i)$ but not $(a\wedge\neg a',a)\between (a',i')$, so $(a,i)\not\sqsubseteq (a',i')$. If $a'\not\leq a$, then $(a,a)\between (a,i)$ but not $(a,a)\between (a',i')$, so $(a,i)\not\sqsubseteq (a',i')$. Finally, if $i\not\leq i'$, then $(i,i)\between (a,i)$ but not $(i,i)\between (a',i')$, so  $(a,i)\not\sqsubseteq (a',i')$.\end{proof}

Below we will prove that $B^\mathsf{e}$ is indeed an epistemic compatibility frame (Definition \ref{Knowability}), but first let us consider some concrete examples. Given a set $W$ of worlds, we consider the epistemic frame of $\wp(W)$.

\begin{example}\label{TwoWorlds} Suppose we are about to flip a fair coin, so $W=\{0,1\}$. Figure \ref{FrameFromTwoWorlds} shows the epistemic frame of $\wp(W)$. Note that it is isomorphic to the frame underlying the Epistemic Scale in Figure \ref{EpCompModFig}. As before, the solid  lines represent compatibility (with reflexive loops omitted), and the dotted lines represent epistemic accessibility (with reflexive loops omitted). 
 For example, according to Definition \ref{EpistemicFrameDef}.\ref{EpistemicFrameDef2}, the possibility $x_1=(\{0\}{,}\,\{0\})$ is compatible with $x_2=(\{0\}{,}\,\{0{,}\,1\})$ because their first-coordinates have a non-empty intersection, and the first coordinate of each entails the second coordinate of the other. However, $x_1$ is not compatible with $x_3=(\{0,\,1\}{,}\,\{0{,}\,1\})$, since the first coordinate of $x_3$ does not entail the second coordinate of $x_1$. As for accessibility, according to Definition \ref{EpistemicFrameDef}.\ref{EpistemicFrameDef3}, $x_2$ can access $x_3$ because both coordinates of $x_3$ lie in the interval between the coordinates of $x_2$. However, $x_3$ cannot access $x_2$ because the first coordinate of $x_2$ does not lie in the interval between the coordinates of $x_3$. 
 
\begin{figure}[h]
\begin{center}
\tikzset{every loop/.style={min distance=10mm,looseness=10}}
\begin{tikzpicture}[->,>=stealth',shorten >=1pt,shorten <=1pt, auto,node
distance=2.5cm,semithick]
\tikzstyle{every state}=[fill=gray!20,draw=none,text=black]

\node (1) at (0,0) {{(\{0\}{,}\,\{0\})}};
\node (2) at (3,0) {{$(\{0\}{,}\,\{0{,}\,1\})$}};
\node  (3) at (6,0) {{$(\{0{,}\,1\}{,}\,\{0{,}\,1\})$}};
\node  (4) at (9,0) {{$(\{1\}{,}\,\{0{,}\,1\})$}};
\node (5) at (12,0) {{$(\{1\}{,}\,\{1\})$}};
\node (6) at (0,-.5) {{$x_1$}};
\node (7) at (3,-.5) {{$x_2$}};
\node (8) at (6,-.5) {{$x_3$}};
\node (9) at (9,-.5) {{$x_4$}};
\node (10) at (12,-.5) {{$x_5$}};

\path (1) edge[-] node {{}} (2);
\path (2) edge[-] node {{}} (3);
\path (3) edge[-] node {{}} (4);
\path (4) edge[-] node {{}} (5);

\path (2) edge[dotted,thick,bend right, MidnightBlue] node {{}} (1);
\path (2) edge[dotted,thick,bend left, MidnightBlue] node {{}} (3);

\path (4) edge[dotted,thick,bend right, MidnightBlue] node {{}} (3);
\path (4) edge[dotted,thick,bend left, MidnightBlue] node {{}} (5);

\end{tikzpicture} 
\end{center}
\caption{Epistemic frame constructed from two worlds.}\label{FrameFromTwoWorlds}
\end{figure}
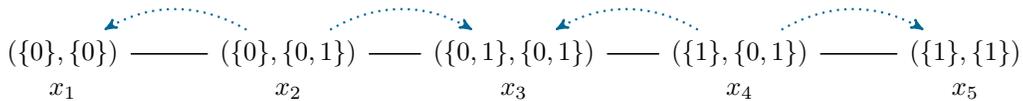
\end{example}

An especially appealing aspect of this construction is how simple it is to check the truth values of formulas of modal depth $\leq 1$ at a possibility $(A,I)$, as shown by the following lemma. Given a possible worlds model $\mathfrak{M}=(W,V)$, where $W$ is a nonempty set and $V:\mathsf{Bool}\to \wp(W)$, and a Boolean formula $\varphi$ (i.e., one free of modals, as in Definition \ref{ELPlus}), define $\mathfrak{M},w\vDash\varphi$ as usual: $\mathfrak{M},w\vDash \mathtt{p}$ iff $w\in V(\mathtt{p})$; $\mathfrak{M},w\vDash \neg\varphi$ iff $\mathfrak{M},w\nvDash\varphi$; and $\mathfrak{M},w\vDash \varphi\wedge\psi$ iff $\mathfrak{M},w\vDash\varphi$ and $\mathfrak{M},w\vDash\psi$.

\begin{lemma}\label{SimpleTruth} Let $\mathfrak{M}=(W,V)$ be a possible worlds model and $\mathcal{M}$ the epistemic model of $(\wp(W),V)$. For any Boolean formula $\varphi$ and $(A,I)$ in $\mathcal{M}$, we have:
\begin{enumerate}
\item\label{SimpleTruth1} $\mathcal{M},(A,I)\Vdash \varphi$ iff for all $w\in A$, we have $\mathfrak{M},w\vDash \varphi$;
\item\label{SimpleTruth2} $\mathcal{M},(A,I)\Vdash \Box\varphi$ iff for all $w\in I$, we have $\mathfrak{M},w\vDash \varphi$;
\item\label{SimpleTruth3} $\mathcal{M},(A,I)\Vdash \Diamond\varphi$ iff for some $w\in A$, we have $\mathfrak{M},w\vDash \varphi$.
\end{enumerate}
\end{lemma}
\begin{proof} We prove part \ref{SimpleTruth1} by induction on $\varphi$. For the base case of a propositional variable $\mathtt{p}$, we have ${\mathcal{M},(A,I)\Vdash \mathtt{p}}$ iff $(A,I)\in V^\mathsf{e}(\mathtt{p})$ iff $A\subseteq V(\mathtt{p})$ iff for all $w\in A$, $\mathfrak{M},w\vDash \mathtt{p}$. The inductive step for $\wedge$ is immediate from the inductive hypothesis. For the inductive step for $\neg$, we have 
\begin{eqnarray*}
\mathcal{M},(A,I)\Vdash\neg\varphi &\mbox{ iff }& \forall (A',I')\between (A,I), \,\mathcal{M},(A',I')\nVdash\varphi\\
&\mbox{ iff }& \forall (A',I')\between (A,I)\; \exists w\in A': \mathfrak{M},w\nvDash\varphi\mbox{ by the inductive hypothesis}\\
&\mbox{ iff }& \forall w\in A,\, \mathfrak{M},w\nvDash\varphi.
\end{eqnarray*}
The implication from the second to third line follows from the fact that for each $w\in A$, we have that $(\{w\},I)\between (A,I)$. The converse implication follows from the fact that $ (A',I')\between (A,I)$ implies $A'\cap A\neq\varnothing$.

For part \ref{SimpleTruth2},  we have
\begin{eqnarray*}
\mathcal{M},(A,I)\Vdash \Box\varphi & \mbox{ iff }&\forall (A',I')\in R((A,I)),\, \mathcal{M},(A',I')\Vdash \varphi \\
&\mbox{ iff }&\forall (A',I')\in R((A,I))\; \forall w\in A',\, \mathfrak{M},w\vDash \varphi \mbox{ by part \ref{SimpleTruth1}}\\
&\mbox{ iff }& \forall w\in I,\, \mathfrak{M},w\vDash \varphi.
\end{eqnarray*}
The implication from the second to third line follows from the fact that $(A,I)R(I,I)$. The converse implication follows from the fact that $(A,I)R(A',I')$ implies $A'\subseteq I'\subseteq I$.

For part \ref{SimpleTruth3}, we have 
\begin{eqnarray*}
\mathcal{M},(A,I)\Vdash \Diamond\varphi & \mbox{ iff }&\forall (A',I')\between (A,I) \;\exists (A'',I'')\in R((A',I'))  \\
&&\qquad\qquad\qquad\quad\;\,\exists (A''',I''')\between (A'',I''): \mathcal{M},(A''',I''')\Vdash \varphi \\
&\mbox{ iff }&\forall (A',I')\between (A,I) \;\exists (A'',I'')\in R((A',I'))  \\
&&\qquad\qquad\qquad\quad\;\,\exists (A''',I''')\between (A'',I'')\;\forall w\in A''', \mathfrak{M},w\vDash \varphi \mbox{ by part \ref{SimpleTruth1}}\\
&\mbox{ iff }& \exists w\in A: \mathfrak{M},w\vDash \varphi.
\end{eqnarray*}
For the implication from the second to third line, setting $(A',I')=(A,A)$, we have $(A',I')\between (A,I)$, and then from $(A',I')R(A'',I'')$ we have $A''\subseteq I''\subseteq I'\subseteq A$. Then from $(A''',I''')\between (A'',I'')$, we have $A'''\cap A''\neq\varnothing$ and hence $A'''\cap A\neq\varnothing$. Then since $\forall w\in A''', \mathfrak{M},w\vDash \varphi $, we have $\exists w\in A: \mathfrak{M},w\vDash \varphi$. For the converse implication, suppose $w\in A$, $\mathfrak{M},w\vDash\varphi$, and $(A',I')\between (A,I)$. It follows that $w\in I'$. Then take $(A'',I'')=(A'\cup\{w\},I')$, so $(A',I')R(A'',I'')$. Finally, taking $(A''',I''')=(\{w\},I')$, we have $(A''',I''')\between (A'',I'')$ and $\forall w\in A'''$, $\mathfrak{M},w\vDash\varphi$, which completes the proof.\end{proof}

To see how Lemma \ref{SimpleTruth} facilitates computations of truth values, let us consider a richer example starting from three worlds.

\begin{example}\label{ThreeWorlds} Suppose we are about to roll a three-sided die, so $W=\{0,1,2\}$. Figure \ref{FrameFromThreeWorlds} shows the epistemic frame of $\wp(W)$. Figure \ref{FrameFromThreeHighlighted} then shows the propositions expressed by several formulas, highlighted in green; the propositional variables $\mathsf{0}$, $\mathsf{1}$, and $\mathsf{2}$ are true in all possibilities $(A,I)$ such that $A=\{0\}$, $A=\{1\}$, and $A=\{2\}$, respectively. Note the ease of computing the modal propositions using Lemma \ref{SimpleTruth}.

\definecolor{Darklavender}{rgb}{0.45, 0.31, 0.59}

\begin{figure}
  % [inline block 0: 7 envs, 53856 chars -> data_tex | \begin{tikzpicture}[scale=0.85]       \draw...]

    \caption{$\llbracket \Diamond(\mathtt{0}\vee\mathtt{1})\rrbracket^\mathcal{M}$}
    \end{subfigure}

     \caption{Propositions highlighted in copies of the epistemic model constructed from the possible worlds model with $W=\{0,1,2\}$, $V(\mathtt{0})=\{0\}$, $V(\mathtt{1})=\{1\}$, and $V(\mathtt{2})=\{2\}$}\label{FrameFromThreeHighlighted}
\end{figure}
\end{example}
Finally, let us push one step further to the case of four worlds.

\begin{example} Starting with $W=\{0,1,2,3\}$ is especially interesting, since we can capture all the possible truth value combinations for two Boolean propositional variables $\mathtt{p}$ and $\mathtt{q}$ with $V(\mathtt{p})=\{0,1\}$ and $V(\mathtt{q})=\{0,3\}$. It is harder to cleanly draw the epistemic frame of $\wp(W)$ in this case, but Figure \ref{FrameFromFourWorlds} provides a helpful three-dimensional tetrahedral visualization of the frame.
\begin{figure}[h]
\begin{center}
\includegraphics[scale=.3]{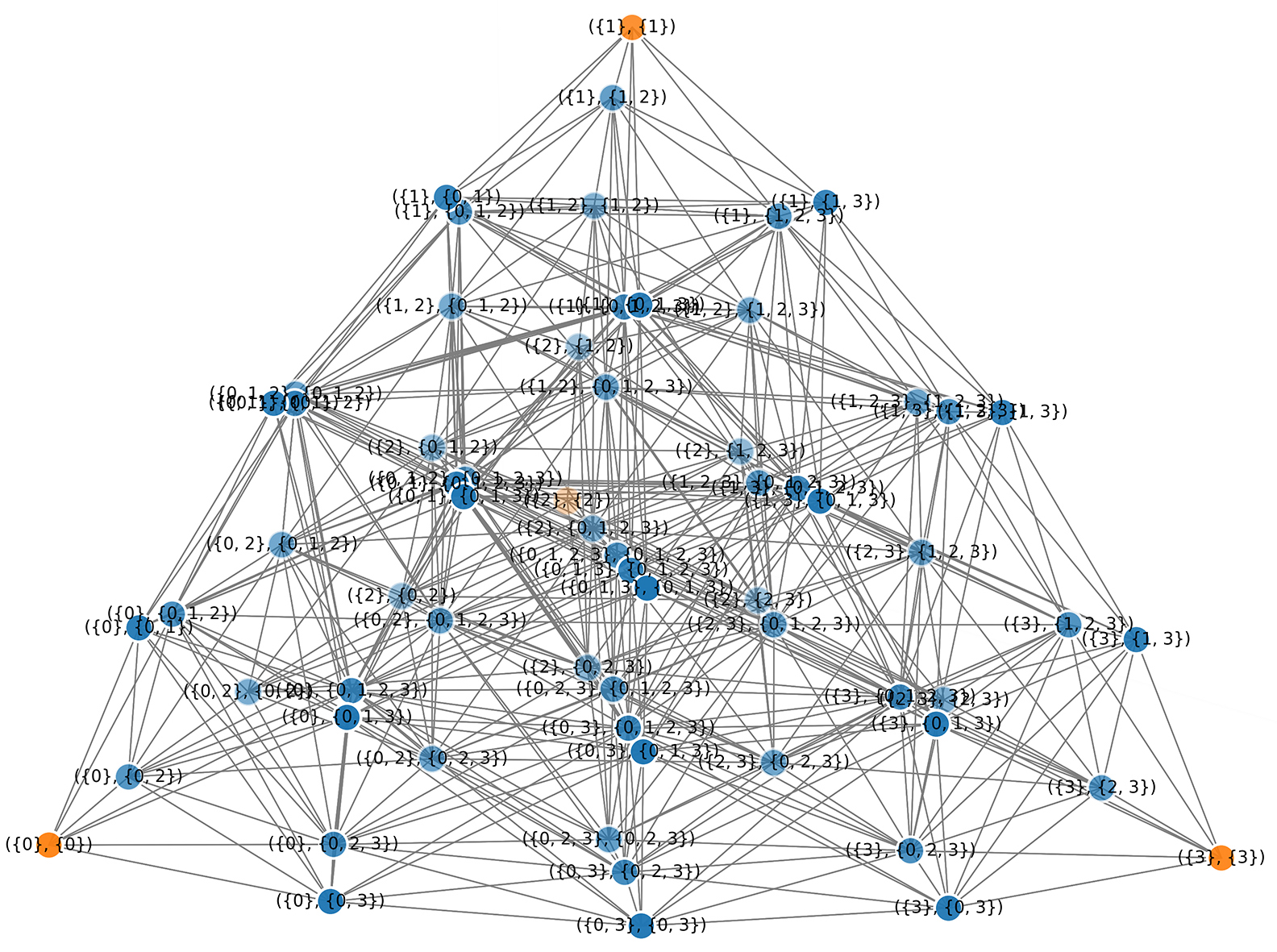}
\end{center}
\caption{Epistemic frame constructed from four worlds with fully determinate possibilities highlighted. To reduce clutter, only compatibility relations are shown.}\label{FrameFromFourWorlds}
\end{figure} 
\end{example}

Table \ref{SizeTable} shows the increase in the size of the epistemic frames as we increase the initial number of worlds. Visualization beyond four worlds becomes difficult, but our computer implementation of the construction can comfortably handle more than four initial worlds (though calculating the size of the ortholattice coming from the epistemic frame becomes prohibitive without an analytic expression for this size).

\begin{table}[h]
\begin{center}
\begin{tabular}{c|c|c|c}
 size of   $B$ & world representation of $B$ & epistemic frame of $B$ & epistemic extension of $B$ \\

\hline
4&2 & 5 & 10 \\
8&3 & 19 & 302 \\
16&4 & 65 & 298,414 \\
32&5 & 211 & ?  \\
64&6 & 665 & ?  \\
\end{tabular}
\end{center}
\caption{Sizes of objects associated with a Boolean algebra $B$.}\label{SizeTable}
\end{table}

Having seen the epistemic frames constructed from three Boolean algebras, we now prove the first main result of this section. We show that the epistemic frame of $B$ is an epistemic compatibility frame (Definition \ref{Knowability}) into whose associated epistemic ortholattice $B$ embeds. The bare fact that $B$ embeds into an (S5) epistemic ortholattice is trivial, since $B$ itself equipped with $\Box$ as the identity function---and hence where $\Diamond$ is the identity function---is an (S5) epistemic ortholattice. But what we want is to embed $B$ into an epistemic ortholattice where $\Diamond$ does not collapse, as in part \ref{NoMightCollapse} of the following.

\begin{theorem}\label{EpistemicFrameExtThm} For any Boolean algebra $B$ with lattice order $\leq$:
\begin{enumerate}
\item\label{EpistemicFrame} $B^\mathsf{e}$ is an epistemic compatibility frame;
\item\label{EpistemicExt} the map $e$ defined by $e_B(a)= \{(b,i)\in S\mid b\leq a \}$ is an embedding of $B$ into the complete epistemic ortholattice $O(B^\mathsf{e})$, which we therefore call the \textit{epistemic extension of}  $B$;
\item\label{EpistemicS5} $O(B^\mathsf{e})$ is an S5 epistemic ortholattice;
\item\label{NoMightCollapse} for all $b\in B$, if $b\not\in \{0,1\}$, then $\Diamond e_B(b) \not\leq e_B(b)$ in $O(B^\mathsf{e})$.
\end{enumerate}
\end{theorem}

\begin{proof} For part \ref{EpistemicFrame}, clearly $\between$ is reflexive and symmetric, so we proceed to the properties of $R$. 

For $R$\textsf{-regularity} in its form in Lemma \ref{RregEquiv}, suppose $(a,i)R(b,k)\between (d,l)$.  We claim that the desired witness for $R$\textsf{-regularity} is $(a',i')=(a\vee d,a\vee d)$. First, we claim $(a,i)\between (a\vee d,a\vee d)$. The first two conditions, namely $a\wedge (a\vee d)\neq 0$ and $a\leq a\vee d$, are immediate. As for $a\vee d\leq i$, from $(a,i)\in S$ we have $a\leq i$, and from $(a,i)R(b,k)\between (d,l)$ we have $i\geq k\geq d$.  Now suppose $(a'',i'')\between (a\vee d,a\vee d)$. Then we claim $(a'',i'')R(a\vee d,a\vee d)\between (d,l)$. That $(a'',i'')R(a\vee d,a\vee d)$ is immediate from $(a'',i'')\between (a\vee d,a\vee d)$. Finally, from $(d,l)\in S$ we have $d\leq l$, and from $(a,i)R(b,k)\between (d,l)$ we have $a\leq b\leq l$, so $a\leq l$. Hence $(a\vee d,a\vee d)\between (d,l)$. This completes the proof of $R$\textsf{-regularity}.

For \textsf{Reflexivity}, it is immediate from the definition of $R$ that $(a,i)R(a,i)$.

For \textsf{Knowability} as in Definition \ref{Knowability}, given $x=(a,i)$, we claim that $y=(a,a)$ is the desired witness. For suppose $z=(a',i')$ and (i) $(a,a)R(a',i')$. We claim that $z$ refines $x$. Supposing (ii) $(a'',i'')\between (a',i')$, we must show that $(a'',i'')\between (a,i)$, i.e., that (a) $a''\wedge a\neq 0$, (b) $a''\leq i$, and (c) $a\leq i''$. By (i), we have $a'\leq i'\leq a$, and by (ii), we have $a''\wedge a'\neq 0$, so (a) follows. By (ii) we also have $a''\leq i'$, which with $i'\leq a$ from (i) and $a\leq i$ implies (b). Finally, by (ii) we have $a'\leq i''$, and by (i) we have $a\leq a'$, so (c) follows.

For part \ref{EpistemicExt}, we first prove that $e_B(a)$ is $\between$-regular. Suppose $(b,i)\not\in e_B(a)$, so $b\not\leq a$. Then where $b'=b\wedge\neg a$, we have $0 \neq b'\leq b\leq i$, so $(b',i)\between (b,i)$. Now consider any $(b'',i'')\between (b',i)$. Then $b''\wedge b' \neq 0$, which implies $b''\not\leq a$, so $(b'',i'')\not\in e_B(a)$. Thus, we have shown that if $(b,i)\not\in e_B(a)$, then there is a $(b',i')\between (b,i)$ such that for all $(b'',i'')\between (b',i')$, we have $(b'',i'')\not\in e_B(a)$. Hence $e_B(a)$ is $\between$-regular.

Next, clearly $e$ preserves all existing meets from $B$:
\[e_B(\underset{a\in A}{\bigwedge}a) = \{(b,i)\in S\mid b\leq  \underset{a\in A}{\bigwedge}a\} = \underset{a\in A}{\bigcap} \{(b,i)\in S\mid b\leq a\} = \underset{a\in A}{\bigwedge} e_B(a).\]
For preservation of joins, we need only show that $e_B(\bigvee A)\subseteq \bigvee \{ e_B(a)\mid a\in A\}$, as the converse follows from order preservation, which follows from meet preservation. Suppose $(b,i)\in e_B(\bigvee A)$, so $b\leq \bigvee A$, and $(b',i')\between (b,i)$, so $b'\wedge b\neq 0$. Since $0\neq b'\wedge b\leq \bigvee A$, by distributivity in $B$ there is some $a\in A$ such that $a\wedge b'\wedge b\neq 0$. Then setting $b''= a\wedge b'\wedge b$, we have $(b'',i')\between (b,i)$ and $(b'',i')\in e_B(a)$. This shows that $(b,i)\in \bigvee \{ e_B(a)\mid a\in A\}$.

Finally, we show that $e_B(\neg a) = \neg e_B(a)$. Suppose $(b,i)\in e_B(\neg a)$, so $b\leq \neg a$. Then for any $(b',i')\between (b,i)$, since $b'\wedge b\neq 0$, it follows that $b'\not\leq a$, so $(b',i')\not\in e_B(a)$, which shows that $(b,i)\in \neg e_B(a)$. Conversely, suppose $(b,i)\not\in e_B(\neg a)$, so $b\not\leq \neg a$. Then  $(b\wedge a,i)\between (b,i)$ and $(b\wedge a,i)\in e_B(a)$, so $(b,i)\not\in \neg e_B(a)$. 

For part \ref{EpistemicS5}, for all $U\in O(B^\mathsf{e})$,  that $\Box U\subseteq U$ is immediate from the reflexivity of the accessibility relation $R$, and that $\Box U\subseteq \Box\Box U$ is immediate from the transitivity of $R$. From here it suffices to show $U\subseteq\Box\Diamond U$ for all $U\in O(B^\mathsf{e})$. Suppose  $(a,i)\in U$ and $(a,i)R(a',i')$, so $a\leq a'$ and $i'\leq i$. To show $(a',i')\in \Diamond U$, suppose $(a'',i'')\between (a',i')$, so $a''\leq i'$ and $a'\leq i''$. From  $a\leq a'\leq i''$ and $a''\leq i''$, we have  $(a'',i'')R(a\vee a'',a\vee a'')$, and from $a\leq i$ and $a''\leq i'\leq i$, we have $(a\vee a'',a\vee a'')\between (a,i)\in U$. This shows that $(a,i)\in\Box\Diamond U$. 

For part \ref{NoMightCollapse}, if $b\neq 0$, then it is easy to see that $(1,1)\in \Diamond e_B(b)$, and if $b\neq 1$, then $(1,1)\not\in e_B(b)$.\end{proof}

Our next goal is to show that $O(B^\mathsf{e})$ equipped with the distinguished Boolean subalgebra $\mathbb{B}=\{e_B(b)\mid b\in B\}$ is a complete epistemic \textit{ortho-Boolean lattice} as in Definition \ref{OrthoBoole}, or equivalently, that $B^\mathsf{e}$ equipped with the distinguished $\mathbb{B}=\{e_B(b)\mid b\in B\}$  is a \textit{stratified} epistemic compatibility frame as in Definition \ref{GroundedEpistemicFrame}. This is equivalent to showing that each $\mathbb{B}_n$ in the hierarchy of Definition \ref{GroundedEpistemicFrame} is a Boolean algebra. To prove this, we will use the following algebraic analogue of Lemma \ref{SimpleTruth}. 

\begin{lemma}\label{AlgSimpleTruth}
 For any Boolean algebra $B$, possibility $(a,i)$ in $ B^\mathsf{e}$, and $b\in B$, we have:
\begin{enumerate}
\item\label{SimpleTruth1} $(a,i) \in e_B(b)$ iff $ a\leq b$;
\item\label{SimpleTruth2}  $(a,i)\in \Box e_B(b)$ iff $i \leq b$; 
\item\label{SimpleTruth3} $(a,i)\in  \Diamond e_B(b)$ iff  $a\wedge b\neq 0$.
\end{enumerate}

\end{lemma}
\begin{proof}
By essentially the same reasoning used in the proof of Lemma \ref{SimpleTruth}, only with propositions instead of formulas and an arbitrary Boolean algebra instead of a powerset. \end{proof}

We will also use the following generalization of parts \ref{SimpleTruth2} and \ref{SimpleTruth3} of  Lemma \ref{AlgSimpleTruth}.
\begin{lemma}\label{KeyLem}
For any Boolean algebra $B$, the following are equivalent for any possibility $(a,i)$ in  $B^\mathsf{e}$ and families $\{b_k,d^1_k,\dots, d^{m_k}_k\}$ of elements of $B$ for $k\in \{1,\dots,n\}$:
\begin{enumerate}
\item\label{KeyLemA} $(a,i)\in \underset{1\leq k\leq n}{\bigvee}\big(\Box e_B(b_k)\wedge \Diamond e_B(d^1_k)\wedge\dots\wedge \Diamond e_B(d^{m_k}_k) \big) $;
\item\label{KeyLemB} for any $c$ such that $a\leq c\leq i$, there is some $k$ such that $c\leq b_k$ and $c\wedge d^\ell_k\neq 0$ for all $\ell\in \{1,\dots,m_k\}$.
\end{enumerate}
\end{lemma}

\begin{proof} 
Suppose part \ref{KeyLemA} holds and $a\leq c\leq i$. Then  $(c,c)\between (a,i)$, so there is an $(a',i')\between (c,c)$ and $k\in \{1,\dots, n\}$ such that $(a',i')\in \Box e_B(b_k)\wedge \Diamond e_B(d^1_k)\wedge\dots\wedge \Diamond e_B(d^{m_k}_k)$.  Then by Lemma \ref{AlgSimpleTruth}, $i'\leq b_k$ and $a'\wedge d^\ell_k\neq 0$ for all $\ell\in \{1,\dots,m_k\}$. 
Then since $(a',i')\between (c,c)$ implies $a'\leq c$ and $c\leq i'$, it follows that $c\leq b_k$ and $c\wedge d^\ell_k\neq 0$ for all $\ell\in \{1,\dots,m_k\}$. Therefore, part \ref{KeyLemB} holds.

Now suppose part \ref{KeyLemB} holds. Suppose $(a',i')\between (a,i)$. Then since $a\leq a\vee a'\leq i$, 
part \ref{KeyLemB} implies there is some $k$ such that $a\vee a'\leq b_k$ and $(a\vee a')\wedge  d^\ell_k\neq 0$ for all $\ell\in \{1,\dots,m_k\}$.  From $(a',i')\between (a,i)$ we also have $a\vee a'\leq i'$. Then where $x=a\vee a'$, 
we have $x\wedge a'\neq 0$, $x\leq i'$, and $a'\leq b_k$, so $(x,b_k)\between (a',i')$,  and by Lemma~\ref{AlgSimpleTruth}, $(x, b_k)\in \Box e_B(b_k)\wedge \Diamond e_B(d^1_k)\wedge\dots\wedge \Diamond e_B(d^{m_k}_k)$.  Thus, we have shown that for every $(a',i')\between (a,i)$, there is an $(a'',i'')\between (a',i')$ and $k\in \{1,\dots,n\}$ such that $(a'',i'')\in \Box e_B(b_k)\wedge \Diamond e_B(d^1_k)\wedge\dots\wedge\Diamond e_B(d^{m_k}_k)$. Therefore, part \ref{KeyLemA} holds. \end{proof}

Using Lemma \ref{KeyLem}, we can show that a general distributive law holds for propositions resulting from applying $\Box$ and $\Diamond$ to Boolean propositions.

\begin{lemma}\label{B1dist} Let $B$ be a Boolean algebra. Let $\{U_{t,j}\mid t\in T, j\in J_t\}$ be a finite family of propositions in $O(B^\mathsf{e})$ such that for each $U_{t,j}$, there is a $b\in B$ such that $U_{t,j}=\Box e_B(b)$ or $U_{t,j}=\Diamond e_B(b)$. Then we have
\[\underset{t\in T}{\bigwedge}\, \underset{j\in J_t}{\bigvee} U_{t,j}\subseteq \underset{f\in F}{\bigvee}\, \underset{t\in T}{\bigwedge}U_{t,f(t)} \]
where $F$ is the set of functions that choose for each $t\in T$ some $f(t)\in J_t$.
\end{lemma}
\begin{proof} Suppose $(a,i)\in \underset{t\in T}{\bigwedge}\, \underset{j\in J_t}{\bigvee} U_{t,j}$. Then by Lemma \ref{KeyLem}, for each $t\in T$ and $c$ such that $a\leq c\leq i$, there is some $j\in J_t$ such that either 

\begin{itemize}
\item[(i)] $U_{t,j}= \Box e_B(b)$ and $c\leq b$ or
\item[(ii)] $U_{t,j}= \Diamond e_B(b)$ and $c\wedge b\neq 0$.
\end{itemize}
Let $f_c$ be a function that chooses such $j\in J_t$ for each $t\in T$. Then let us write $\underset{t\in T}{\bigwedge}U_{t,f_c(t)}$ as 
\begin{equation*}\Box e_B(b^1_c)\wedge \dots \wedge \Box e_B(b_c^{n_c})\wedge  \Diamond e_B(d_c^1)\wedge \dots \wedge \Diamond e_B(d_c^{m_c}) .\label{ModalMeet}\end{equation*}
Where $b_c= b^1_c\wedge\dots\wedge b^{n_c}_c$, this is equivalent to 
\begin{equation*}\Box e_B(b_c)\wedge  \Diamond e_B(d_c^1)\wedge \dots \wedge \Diamond e_B(d_c^{m_c}) .\end{equation*}
Then to show that $(a,i) \in\underset{f\in F}{\bigvee}\, \underset{t\in T}{\bigwedge}U_{t,f(t)}$, it suffices by Lemma \ref{KeyLem} to show that for each $c$ such that $a\leq c\leq i$, we have $c\leq b_c$ and $c\wedge d^s_c\neq 0$ for each $s\in \{1,\dots, m_c\}$. Indeed, we have $c\leq b_c$ by (i) in the construction of $f_c$, and $c\wedge d^s_c\neq 0$ for each $s\in \{1,\dots, m_c\}$ by (ii) in the construction of $f_c$. This completes the proof.\end{proof}

We are now ready to prove that each of the distinguished algebras $\mathbb{B}_n$ is indeed Boolean.

\begin{theorem}\label{StratTheorem} For any Boolean algebra $B$,  $O(B^\mathsf{e})$ with the distinguished Boolean subalgebra $\{e_B(b)\mid b\in B\}$ is a complete epistemic ortho-Boolean lattice. Equivalently,  $(B^\mathsf{e},\{e_B(b)\mid b\in B\})$ is a stratified epistemic compatibility frame.
\end{theorem}
\begin{proof} Where $\mathbb{B}=\{e_B(b)\mid b\in B\}$, it is immediate from Theorem \ref{EpistemicFrameExtThm}.\ref{EpistemicExt} that $\mathbb{B}$ forms a Boolean algebra. It only remains to show that each $\mathbb{B}_n$ for $n\geq 1$ is a Boolean algebra.  First, let $\mathbb{B}'_1$ be the set of elements generated from $\{\Box U\mid U\in\mathbb{B}_0\}\cup \{\Diamond U\mid U\in\mathbb{B}_0\}$ by $\wedge$ and $\vee$. Thanks to De Morgan's laws and involution for negation, $\mathbb{B}'_1=\mathbb{B}_1$.  Now we claim that each $V\in\mathbb{B}'_1$ is equal to $\Box V$. The proof is by induction, given the inductive definition of $\mathbb{B}_1'$. In the case where $V$ is $\Box U$ or $\neg\Box U$ for some $U\in\mathbb{B}_0$, we have that $\Box U =\Box\Box U$ and $\neg\Box U =\Box\neg\Box U$ by the S5 axioms given by Theorem \ref{EpistemicFrameExtThm}.\ref{EpistemicS5}. If $V$ is $V_1\wedge V_2$ for $V_1,V_2\in\mathbb{B}'_1$, then by the inductive hypothesis, $V_1\wedge V_2=\Box V_1\wedge \Box V_2$, which is equal to $\Box (V_1\wedge V_2)$. Finally, if $V$ is $V_1\vee V_2$ for $V_1,V_2\in\mathbb{B}'_1$, then by the inductive hypothesis, $V_1\vee V_2=\Box V_1\vee \Box V_2$, which is equal (using the 4 and T axioms) to  $\Box(\Box V_1\vee \Box V_2)$. Thus, each $V\in \mathbb{B}_1$ is equal to $\Box V$. Hence each generator of $\mathbb{B}_2$, i.e., each $\Box V$ for $V\in \mathbb{B}_1$, already belongs to $\mathbb{B}_1$, which implies $\mathbb{B}_2=\mathbb{B}_1$ and indeed $\mathbb{B}_n=\mathbb{B}_1$ for $n\geq 1$. Thus, to complete the proof we need only show that $\mathbb{B}_1$, or equivalently, $\mathbb{B}_1
'$, is distributive. 
As shown by Kolibiar \citeyearpar[Thm.~2a]{Kolibiar1972}, to show that $\mathbb{B}_1'$ is distributive and hence Boolean, it suffices to show that for any finite set $M$ of finite subsets of the generating set  $\{\Box U\mid U\in\mathbb{B}_0\}\cup \{\Diamond U\mid U\in\mathbb{B}_0\}$, we have
\begin{equation*}\underset{X\in M}{\bigwedge}\,\bigvee X \subseteq \underset{ t\in \Pi M}{\bigvee}\,\underset{X\in M}\bigwedge t(X),\label{DistEq}\end{equation*}
where $\Pi M$ is the set of all functions $t$ on $M$ such that $t(X)\in X$ for $X\in M$. This is precisely Lemma~\ref{B1dist}.
\end{proof}

The moral of Theorems \ref{EpistemicFrameExtThm} and \ref{StratTheorem} is that we can add to any Boolean algebra $B$ of propositions new epistemic modal propositions, by embedding $B$ into a complete epistemic ortho-Boolean lattice that arises from a stratified epistemic compatibility frame built from $B$. The resulting ``epistemic extension'' therefore validates our logic $\mathsf{EO}^+$. Thus, we have an abundant Boolean source of models for our logic.

Now the obvious questions are the following. Which complete epistemic ortho-Boolean lattices arise as epistemic extensions of Boolean algebras? And what is the logic of epistemic frames of Boolean algebras? We answer the first question with an algebraic characterization in Appendix \ref{CharApp}. In Appendix \ref{DecideApp}, we prove the \textit{decidability} of the logic of epistemic frames of Boolean algebras; and in Appendix \ref{NormalApp}, we prove a related semantic normal form theorem. Crucially, the logic is stronger than $\mathsf{EO}^+$, not only because it includes the S5 principles but also because it includes the following principles.

\begin{proposition} \label{weakdist}For any Boolean algebra $B$ and $U,U_j,V_{j,k}\in O(B^\mathsf{e})$ in the image of the embedding~$e_B$:
\begin{enumerate}
\item\label{weakdist1} restricted diamond distributivity: if $V_{j,k}\subseteq U_j$ for  all $1\leq j\leq n$ and $1\leq k\leq m_j$, then \[\big(\underset{1\leq j\leq n}{\bigvee}U_j \big)\wedge \underset{1\leq j\leq n}{\bigwedge} (\Diamond V_{j,1}\wedge\dots\wedge \Diamond V_{j,m_j})\subseteq \underset{1\leq j\leq n}{\bigvee}(U_j\wedge \Diamond V_{j,1}\wedge\dots\wedge \Diamond V_{j,m_j});\]
\item\label{weakdist1.5} box distributivity: $(U_1\vee U_2)\wedge \Box V\subseteq (U_1\wedge\Box V)\vee (U_2\wedge\Box V)$;
\item\label{weakdist2} inheritance: $(U\wedge\Diamond V)\subseteq \Diamond (U\wedge V)$;
\item\label{weakdist3} diamond disjunctive syllogism A: $(U\vee \Diamond V)\wedge \neg \Diamond V\subseteq U$;
\item\label{weakdist4} diamond disjunctive syllogism B: $(U\vee \Diamond V)\wedge \neg U\subseteq \Diamond V$.
\end{enumerate}
\end{proposition}

\begin{proof} For part \ref{weakdist1}, let $b_j=e_B^{-1}(U_j)$ and $c_{j,k}=e_B^{-1}(V_{j,k})$. Then since $V_{j,k}\subseteq U_j$, we have $c_{j,k}\leq b_j$. Suppose $(a,i)$ is in the left-hand side of the inclusion. Then $a\leq b_1\vee \dots\vee b_n$ and $a\wedge c_{j,k}\neq 0$ for $1\leq j\leq n$ and $1\leq k\leq m_j$ by Lemma \ref{AlgSimpleTruth}. To show $(a,i)$ is in the right-hand side, consider any $(a',i')\between (a,i)$. Since $a'\wedge a\neq 0$ and $a\leq b_1\vee \dots \vee b_n$, it follows that $a'\wedge a\wedge b_j\neq 0$ for some $j$. Let $x=(a'\wedge a\wedge b_j)\vee (a\wedge c_{j,1})\vee \dots\vee (a\wedge c_{j,m_j}) $. Since $a'\leq i'$ and $a\leq i'$, we have $x\leq i'$. Hence $(x,i')\between (a',i')$. Moreover, since $x\leq b_j$ and $x\wedge c_{j,k}\neq 0$ for $1\leq k\leq m_j$, we have $(x,i')\in U_j\wedge\Diamond V_{j,1}\wedge\dots\wedge \Diamond V_{j,m_j}$ by Lemma~\ref{AlgSimpleTruth}. Thus, for every $(a',i')\between (a,i)$, there is an $(a'',i'')\in (U_1\wedge\Diamond V_{1,1}\wedge\dots \wedge \Diamond V_{1,m_1})\cup \dots \cup (U_n\wedge\Diamond V_{n,1}\wedge\dots \wedge \Diamond V_{n,m_n})$, which shows that $(a,i)$ is in the right-hand side.

For part \ref{weakdist1.5}, let $b_1=e_B^{-1}(U_1)$, $b_2=e_B^{-1}(U_2)$, and $c=e_B^{-1}(V)$. Suppose $(a,i)$ is in the left-hand side, so $i\leq c$ by Lemma \ref{AlgSimpleTruth}. Further suppose that $(a',i')\between (a,i)$, so $a'\leq i\leq c$. Then since $(a,i)\in U_1\vee U_2$, there is an $(a'',i'')\between (a',i')$ such that $(a'',i'')\in U_i$ for some $i\in\{1,2\}$, so $a''\leq b_i$. Now $(a'',i'')\between (a',i')$ implies $(a''\wedge a',a')\between (a',i')$, and since $a''\leq b_i$ and $a'\leq c$, we have $(a''\wedge a',a')\in U_i\wedge \Box V$ by Lemma \ref{AlgSimpleTruth}. Thus, for every $(a',i')\between (a,i)$, there is an $(a'',i'')\in (U_1\wedge \Box V)\cup (U_2\wedge \Box V)$, so $(a,i)$ is in the right-hand side.

For part \ref{weakdist2}, let $b=e_B^{-1}(U)$ and $c=e_B^{-1}(V)$. If $(a,i)$ is in the left-hand side of the inclusion, then $a\leq b$ and $a\wedge c\neq 0$ by Lemma \ref{AlgSimpleTruth}, which implies $a\wedge b\wedge c\neq 0$, so $(a,i)$ is in the right-hand side by Lemma \ref{AlgSimpleTruth}.

For parts \ref{weakdist3} and \ref{weakdist4},  let $b=e_B^{-1}(U)$ and $c=e_B^{-1}(V)$. If $(a,i)\in U\vee \Diamond V$, then for any $(a',i')\between (a,i)$ there is an $(a'',i'')\between (a',i')$ such that $(a'',i'')\in U$ or $(a'',i'')\in\Diamond V$, so $a''\leq b$ or $a''\wedge c\neq 0$ by Lemma~\ref{AlgSimpleTruth}. 

For part \ref{weakdist3}, if $(a,i)\not\in U$, so $a\not\leq b$, then $(a\wedge\neg b,i)\between (a,i)$, so by the previous paragraph, there is an $(a'',i'')\between (a\wedge\neg b,i)$ such that $a''\leq b$ or $a''\wedge c\neq 0$; but $a''\leq b$ contradicts $a''\wedge a\wedge\neg b\neq 0$ from $(a'',i'')\between (a\wedge\neg b,i)$, so we conclude that $a''\wedge c\neq 0$, which implies $(a'',i''\vee a)\in \Diamond V$ by Lemma \ref{AlgSimpleTruth}. Finally, $(a'',i'')\between (a\wedge\neg b,i)$ implies $(a'',i''\vee a)\between (a,i)$, so $(a'',i''\vee a)\in \Diamond V$ implies $(a,i)\not\in \neg\Diamond V$.

For part \ref{weakdist4}, if $(a,i)\not\in \Diamond V$, so $a\wedge c=0$ by Lemma \ref{AlgSimpleTruth}, then let $(a',i')=(a,a)$. Hence $(a',i')\between (a,i)$, so as above there is an  $(a'',i'')\between (a',i')$ such that $a''\leq b$ or $a''\wedge c\neq 0$. But  $(a'',i'')\between (a',i')$ implies $a''\leq i'=a$, so $a\wedge c=0$  implies $a''\wedge c=0$ and hence $a''\leq b$, which implies $(a'',i'')\in U$. Since $(a'',i'')\between (a',i')=(a,a)$, it follows that $(a'',i'')\between (a,i)$, so $(a,i)\not\in\neg U$.\end{proof}

It is not difficult to construct epistemic ortho-Boolean lattices validating $\mathsf{EO}^+$ that invalidate the principles in Proposition \ref{weakdist}. These principles may, however, by desirable (though the plausibility of part \ref{weakdist2} may depend on the S5 assumption built into the construction of possibilities from a set of worlds). To see the plausibility of the first principle, consider examples like \ref{mightfrance}, suggested to us by 
Seth Yalcin (p.c.): 

\ex. \label{mightfrance} \a. \label{mightfrancea} Either the window is open or the window is closed, and it might be raining in France. \b. \label{mightfranceb} Therefore, either the window is open and it might be raining in France, or the window is closed and it might be raining in France.

The inference from \ref{mightfrancea} to \ref{mightfranceb} is intuitively reasonable. After all, the possible weather in France has nothing to do with the state of my window, and so nothing, intuitively, prevents us from distributing the disjunction over the conjunction in this case. Given the enthymematic premise that \textit{it might be that both the window is open and it's raining in France}, and \textit{it might be that both the window is closed and it's raining in France}, the weakening of distributivity in Proposition \ref{weakdist}.\ref{weakdist1} would explain the felt validity of this reasoning. The other principles also strike us as reasonable, though we will not discuss them further here; see \citealt{Goldstein:2018,Hawke:2020} for discussion of inheritance in particular.

The principles in Proposition \ref{weakdist} highlight the interest of providing an axiomatization of the logic of epistemic frames of Boolean algebras, extending $\mathsf{EO}^+$, in the style of Definitions \ref{EODef} and \ref{EOplus}. However, we leave this task for future work.

\section{Comparisons}\label{Comparisons}

In this section, we compare our approach to epistemic modals to some others in the literature. We cannot cover them all, since there are far too many. We will, however, briefly situate our approach in the existing literature, with emphasis on comparing our theories to the most similar ones in the literature.\footnote{Among those we will not discuss are the salience-based approaches of \citealt{Dorr:2013} and \citealt{Stojnic:2016} and the probabilistic approaches of \citealt{Moss2014} and \citealt{Swanson2016}. See \citealt{Pseudodynamics} for critical discussion of the former and \citealt{Mandelkern:2018a} and \citealt{Hawke:2020} for critical discussion of the latter.}

Among non-\textsf{E}-logics, several variants are worth mentioning. First, there are logics in which Wittgenstein sentences are contradictions but cannot always be replaced by contradictions \emph{salva veritate}.
In other words, these logics have half of the profile of \textsf{E}-logics. A prominent example of a logic like this comes from the domain semantics for epistemic modals (\citealt{Yalcin2007}, \citealt{MacFarlane:2011}, \citealt{Bledin:2014}) together with Yalcin \citeyearpar{Yalcin2007} and Bledin's \citeyearpar{Bledin:2014} notion of informational consequence or Kolodny and MacFarlane's \citeyearpar{Kolodny:2010} notion of quasi-validity.  But the fact that Wittgenstein sentences cannot always be substituted for contradictions ends up depriving approaches like this of a great deal of empirical coverage, since we cannot conclude that $\varphi$ and $\varphi'$ are always equivalent when $\varphi$ embeds a sentence $\psi\wedge\lozenge \neg \psi$ and $\varphi'$ replaces that sentence with $\psi\wedge\neg \psi$. More concretely,  in the examples from \S~\ref{EpContra} of modalized, disjoined, and quantified Wittgenstein sentences (\ref{drekm}--\ref{indefiniteec}), the cited approaches fail to predict any infelicity (see \citealt{Mandelkern:2018a} for extensive discussion). Another way to put this is in terms of logic. For instance, in the logics of the cited approaches, $\lozenge p\wedge\lozenge \neg p$ entails $(\lozenge p\wedge\neg p)\vee (\lozenge\neg p \wedge p)$. Likewise, as 
Justin Helms (p.c.) pointed out to us, $\lozenge p$ entails $ p\vee (\neg p\wedge\lozenge p)$, which is intuitively equivalent to $p$ (see \citealt{Dorr:2013} for a related observation). 
This brings out the importance of both halves of an \textsf{E}-logic---a logic in which Wittgenstein sentences are contradictions \emph{and} can always be substituted for contradictions. 

A  different non-\textsf{E}-logic with the same profile is developed in a multilateral proof-theoretic setting in \citealt{Incurvati:2020} and \citealt{Aloni:2022}. We are methodologically sympathetic with that approach, insofar as it focuses on directly characterizing the  logic of epistemic modals. However, the fact that their logic is not an \textsf{E}-logic leads to problems like those that face the domain semantics: disjoined and modalized Wittgenstein sentences are predicted to be consistent. \citet{Incurvati:2020} propose to explain the incoherence of those cases by adopting a pragmatic rule that says that a disjunction can only be asserted if its disjuncts are supposable, and likewise that $\lozenge\varphi$ can only be asserted if $\varphi$ is supposable. Since Wittgenstein sentences are not supposable in their logic, this suffices to account for the incoherence of asserting a disjunction of Wittgenstein sentences or a Wittgenstein sentence embedded under `might'. (A defender of domain semantics could appeal to a similar rule.) 

However, this approach  fall shorts when it comes to disjoined or modalized Wittgenstein sentences that are themselves embedded under attitude predicates. There is clearly no rule that says the prejacent of an attitude predicate needs to be supposable in order for the attitude ascription to be assertable. For instance, if Sue has inconsistent beliefs about the weather, we can report them as in \ref{incon}:

\ex. \label{incon} Sue believes that it is raining and that it isn't raining. Her beliefs are inconsistent!

But now consider a disjoined Wittgenstein sentence under `believes':

\ex. Sue believes that either it's raining and might not be, or it isn't raining and might be.\label{decb}

Our judgment about \ref{decb} is that, like \ref{incon}, it ascribes incoherent beliefs to Sue. But we cannot see how an account like the one under discussion would predict this, since (i) the prejacent of `believes' is consistent on this account; and (ii) as \ref{incon} illustrates, there is no pragmatic rule that prevents us from ascribing unsupposable beliefs to others. Similar points apply to sentences like \ref{decm}:

\ex. Sue believes that it might be that it's raining and might not be raining.\label{decm}

In our view, these points provide further evidence for  an \textsf{E}-logic and against a pragmatic account of these~data.

Another kind of non-\textsf{E}-logic comes from the dynamic semantic tradition (\citealt{Veltman:1996}, \citealt{GSV:1996}). There are many different systems under this umbrella; for concreteness, we will focus on the semantics plus consequence relation of  \citealt{GSV:1996} for the moment and then say something about how this generalizes. In the system of \citealt{GSV:1996},  sentences with the form $\varphi\wedge\lozenge \neg \varphi$ are contradictions provided that $\varphi$ is Boolean. However, when $\varphi$ is non-Boolean, $\varphi\wedge\lozenge\neg \varphi$ can fail to be a contradiction. And order variants like $\lozenge \varphi\wedge\neg \varphi$ are generally not contradictions. As many have observed, however, there is no evidence for the kinds of stark order asymmetries thus predicted (see, e.g., \citealt{Yalcin:2013}, \citealt{Pseudodynamics}). While this approach has a ready explanation of the unassertability of Wittgenstein sentences of both orders, it has a harder time making sense of various embedding data involving the left-modal variant. For instance, the standard dynamic approach to the conditional predicts that $(\lozenge \varphi\wedge\neg \varphi)\to \psi$ is coherent and indeed equivalent to $\lozenge\varphi\wedge(\neg\varphi\to\psi)$ in  contexts consistent with $\varphi$. By contrast, sentences like this sound as bad as in the reverse order, as illustrated by the contrast between \ref{rainmight1} and \ref{rainmight2}:

\ex. \a. \# If it might be raining but it isn't, we don't need rainjackets.\label{rainmight1}
\b. It might be raining, but if it isn't, we don't need rainjackets.\label{rainmight2}

 Finally, this system invalidates many classical inferences---including the law of non-contradiction ($\varphi\wedge\neg\varphi\vdash\bot$) and excluded middle---which are intuitively valid for epistemic language (\citealt{Mandelkern:2018}). 

There are many other approaches to epistemic modals within a broadly dynamic framework (see, e.g.,  \citealt{Does:1997}, \citealt{Aloni:2000}, \citealt{YalcinDeRe}, \citealt{Goldstein:2018}, \citealt{Gillies:2018}), which have respectively different logics. For instance, if we combined the semantic system of \citealt{GSV:1996} with the \emph{test-to-test} consequence relation instead of the update-to-test relation of \citealt{GSV:1996}, we would have a logic where all Wittgenstein sentences are contradictions, as are sentences with the form $\varphi\wedge\neg\varphi$. However, this change in perspective does not yield a system that is descriptively adequate, given the desiderata laid out at the beginning, because in this system contradictions do not always \emph{embed} as we would expect contradictions to. So, for instance, while $\varphi\wedge\neg\varphi$ is a contradiction in this system, $\lozenge(\varphi\wedge\neg\varphi)$ is not, and neither is $\lozenge(\lozenge\varphi\wedge\neg\varphi)$. That is, (epistemic) contradictions remain epistemically possible. Likewise, (epistemic) contradictions can be coherently embedded in the antecedents of conditionals in  this system: $(\varphi\wedge\neg\varphi)\to\psi$ and $(\lozenge\varphi\wedge\neg\varphi)\to\psi$ can both be coherent and non-trivially true and false. Indeed, $(\lozenge\varphi\wedge\neg\varphi)\to\psi$ is still predicted to be equivalent to $\lozenge\varphi\wedge(\neg\varphi\to\psi)$ in contexts consistent with $\varphi$. While there are many different pairs of dynamic semantic systems and dynamic notions of consequence, with diverse logical profiles, we do not know of any that can capture the full range of embedding data sketched in \S~\ref{des}.
(See \citealt{Pseudodynamics} for more extensive critical discussion of various dynamic systems and more on the surrounding dialectic.)

We turn now to \textsf{E}-logics: that is, logics that predict that Wittgenstein sentences are contradictions and are always substitutable for contradictions (in some but not all of these systems, this holds because of a more general fact that logically equivalent sentences are always substitutable \emph{salva veritate}).

One \textsf{E}-logic in the literature is the bounded theory of \citealt{Mandelkern:2018a}. That system is given in a trivalent setting, which yields (at least) two natural logics: the \emph{truth-preserving} logic (the set of inferences that preserve truth) and the \emph{Strawson logic} (the set of inferences that preserve truth when the premises and conclusion are all true or false). Both logics are \textsf{E}-logics.\footnote{In the weaker of the two systems Mandelkern considers, this holds only if we restrict attention to \emph{diagonal consequence}.} But neither has exactly the properties that we have argued a logic for epistemic modals should have. In the truth-preserving logic, many classical inferences, like conjunction introduction, fail; from this perspective, the logic is too non-classical. In the Strawson logic, by contrast, all classical inferences are valid. But from this perspective, the logic is too classical: inferences like distributivity and disjunctive syllogism, which intuitively fail, come out valid. From a logical perspective, neither of these profiles is exactly the right fit for epistemic modals. The combination of the two perspectives might still suffice to make sense of the data, but the present approach is logically much more perspicuous.  

 By identifying a logic that is sound and complete with respect to the algebraic semantics and possibility semantics we have studied, consisting of intuitively valid principles, we have strong evidence that our semantics does not overgenerate in its predictions of the logical inferences that will strike speakers as valid. Since capturing felt (in)validity is a central part of the task of natural language semantics, having such a characterization is an essential part of any adequate theory of a fragment of natural language. We emphasize this point not just because it brings out a contrast with the approach of \citealt{Mandelkern:2018a} but  because in general axiomatization is often left out of contemporary semantic theory, and we think it is important methodologically to explicitly axiomatize the set of inferences predicted by a system to strike speakers as valid (see \citealt{HollidayIcard2018} for more discussion of the role of axiomatization in semantic theory).

A different class of (nearly) \textsf{E}-logics is given in the state-based frameworks of \citealt{Veltman85}, \citealt{Hawke:2020}, \citealt{Aloni:2016}, and \citealt{Flocke:2020}, where sentences are evaluated for truth relative to information states. 
Formally speaking, we can (if we want) model an information state as a set of possible worlds, which in turn is a special case of a possibility. So the indices in our semantics  are closely related to those in state-based semantic frameworks, which makes these approaches a direct precedent for ours. We will discuss two of these approaches in a bit more detail. 
First consider Veltman's \emph{data semantics}.\footnote{\citet{Gauker:2020} develops an interesting system that is, as far as our concerns go, relevantly like Veltman's.} In data semantics, $\lozenge \varphi$ is true at an information state just in case it has some refinement where $\varphi$ is true. When $\beta$ is Boolean, it is persistent: if $ \beta$ is true in an information state, it is true in every refinement of that information state. So $\beta\wedge\lozenge\neg\beta$ can never be true (conjunction has standard truth-conditions) and is everywhere substitutable for $\bot$. However, this system is not quite an \textsf{E}-logic because (just as in dynamic semantics) $\varphi\wedge\lozenge\neg \varphi$ is not contradictory when $\varphi$ itself is epistemic. For instance, where $\varphi=\lozenge p$, $\lozenge p$ may be true in a state but false in a refinement of that state (namely, a refinement that makes $\neg p$ true), so $\lozenge p\wedge \lozenge \neg\lozenge p$ will be consistent. 
And Veltman's logic in general is in some ways less classical and in other ways more classical than ours---and in both directions, we think our system is an improvement. In the first direction, 
  data semantics invalidates excluded middle (\citealt[p.~172]{Veltman85}). In this respect, it is even more extreme than  dynamic semantics, since excluded middle fails even for Boolean sentences (disjunction has its standard truth-conditions, so in a state that does not settle $p$ or $\neg p$, $p\vee\neg p$ will not be true, because neither $p$ nor $\neg p$ is true). In fact, as Veltman points out, there are no valid Boolean sentences at all. Of course, we may be able to give some other explanation of why  $p\vee\neg p$ has the feeling of a validity, but it is not immediately clear what that would be.
In the second direction, Veltman validates distributivity, because he validates proof by cases with side assumptions (from $\xi\wedge \varphi\vdash \chi$ and $\xi\wedge \psi\vdash \chi$ conclude $\xi\wedge(\varphi\vee\psi)\vdash\chi$), and disjunction introduction, which  together entail distributivity, as noted in \S~\ref{Log1}.  Thus, assuming disjunction introduction, we have an argument against proof by cases with side assumptions. For a more direct argument, note that the following are all intuitive (and hold in our logic):
\[\Diamond p\wedge\Diamond\neg p\vdash (\Diamond p\wedge\Diamond\neg p)\wedge(p\vee\neg p),\; (\Diamond p\wedge\Diamond \neg p)\wedge p\vdash\bot,\mbox{ and } (\Diamond p\wedge\Diamond \neg p)\wedge \neg p\vdash\bot .\]
But then by proof by cases with side assumptions and transitivity of $\vdash$, we would have $\Diamond p\wedge\Diamond\neg p\vdash \bot$. 

In fact, since data semantics validates distributivity and $\beta\wedge\lozenge\neg\beta$ is always substitutable for a contradiction, it has an unwanted prediction:  it rightly predicts that $(\beta\wedge\lozenge \neg \beta )\vee (\neg \beta \wedge\lozenge\beta)$ is a contradiction, but then since it validates distributivity, it predicts that $(\beta\vee\neg\beta)\wedge(\lozenge\neg\beta\wedge \lozenge \beta)$ is a contradiction. This seems implausible and indeed worse than the \emph{lack} of a prediction that $p\vee\neg p$ is valid: while the latter lack might be removed via some other theory of felt validity, it is hard to see how this approach could be made compatible with the feeling that $(\beta\vee\neg\beta)\wedge(\lozenge\neg\beta\wedge \lozenge \beta)$ is consistent and indeed equivalent to $\lozenge\neg\beta\wedge\lozenge\beta$.

 Turn next to the system of \citealt{Hawke:2020} (henceforth HS-T), which is designed specifically to account for the epistemic modal data above and indeed does a beautiful job  of capturing them---in particular, invalidating distributivity, unlike Veltman. 
HS-T's system, like ours, is classical over the Boolean fragment, but over the full language, it invalidates De Morgan's laws as well as disjunction introduction. Of course, we might have found failures of these laws, and HS-T indeed argue, on the basis of the free choice inference, that disjunction introduction should fail. We will not try to adjudicate that complicated topic here.\footnote{HS-T also argue for the principle they call \emph{Inheritance}, which says that $p\wedge\lozenge q\vDash \lozenge(p\wedge q)$. We showed in Proposition \ref{weakdist}.\ref{weakdist2} that this principle is valid (for $p,q$ Boolean) in epistemic frames of Boolean algebras. On the other hand, if we drop the S5 assumption built into the epistemic frame construction, then one may have doubts. For example, perhaps you can have high credence that Mary will come to the party and  that Aliyah might come to the party, while having very low credence that it might be that they both come (e.g., they just broke up). Certainly Inheritance cannot be accepted for arbitrary non-Boolean $p, q$, since $  \Diamond p\wedge\Diamond \neg p$ should not entail $\Diamond (\Diamond p\wedge \neg p)$ (i.e., $\Diamond \bot$ and hence $\bot$). Compare these points to Example \ref{DiamondBoxDiamond}.} 
  We think the failures of De Morgan's laws are more obviously problematic.   For instance, $\lozenge p\wedge\lozenge\neg p$ is not equivalent in their system to $\neg (\Box \neg p\vee \Box  p)$: while the first is true, as one would expect, in any state containing both $p$-worlds and $\neg p$-worlds, the second is not. In fact, for Boolean $p$, $\neg (\Box \neg p\vee \Box  p)$ is never true. For $\neg (\Box \neg p\vee \Box  p)$ is true at a state iff $\Box \neg p\vee \Box  p$ is false at a state. But $\Box \neg p\vee \Box  p$ is true at every state: for it is true at a state iff that state has a (possibly empty) part where $\Box \neg p$ is true  and a (possibly empty) part where $\Box p$ is true. But that will always hold: we can find those parts simply by partitioning the state into the $p$- and $\neg p$-worlds. This prediction seems wrong. On the one hand, the sentences in \ref{hstpair} seem equivalent (as De Morgan's laws, and our account, predict, but contrary to HS-T); and on the other, the disjunction in \ref{hst2} does not sound like a logical truth, contrary to HS-T (it feels more committal than `Either it isn't raining or it is').

\ex. \label{hstpair}\a. It might be raining and it might not be raining.
\b. It's not the case that it must not be raining or must  be raining. 

\ex. \label{hst2} It must not be raining or it must be raining. 

So we think it is a substantial advantage of our view that we validate the De Morgan equivalences and do not validate $\Box \neg p\vee\Box p$.

A second point concerns probabilities. We can naturally form judgments about the probabilities of sentences containing modals, which leads to interesting puzzles. In a companion paper, we show how to extend a probability measure on the powerset of a set of possible worlds to a measure on the epistemic extension of that powerset, yielding a natural model of those judgments. 
 While we leave the details to that paper, here a simple observation suffices: since possibilities represent \textit{how things are}, \textit{partially}, rather than attitudes towards a part of the world, there is no fundamental problem assigning probabilities to sets of possibilities. 
 By contrast, state-based theories face fundamental problems here. The first is conceptual. In state-based theories, at least as understood by HS-T, the indices relative to which such sentences are evaluated as true or false are information states. Those information states  represent mental states, and it is unclear how we could define reasonable measures over sets of such states. (Sloganistically, the probability of a sentence is the probability that it is \emph{true}, not the probability that it is \emph{accepted} by some relevant agents; for a similar point, compare \citet[p. 5]{KhooBook}: ``What would it mean to assign a probability to a property of your cognitive state?'') 
The second, closely related problem arises from the fact that in state-based theories $p$ entails $\Box p$. This holds not just in the informational logic, but also in the logic of truth-preservation, since truth is defined relative to mental states. 
 Reflecting just on what it takes for a mental state to \emph{accept} something---the primitive notion in state-based semantics---this equivalence might look defensible. But as soon as we consider probabilities,  it does not; the probability of $\Box p$ can be less than 
the probability of $p$, as we noted in \S~\ref{des}.
While of course it is open to state-based approaches to deny that probabilities must track entailments, it is hard to see, practically speaking, how to construct a probability measure over their points that avoids this conclusion, since, again, $\Box p$ is true at any state where $p$ is. Note that this  second problem will persist even if we drop HS-T's conceptualization of states as representing mental states, since it arises simply from the logic of the system.  The state-based theory thus faces a fundamental problem in meshing with a reasonable account of credences defined on epistemic modal propositions.

It is worth noting that similar points apply to a range of theories that characterize sentential contents in non-standard terms, including dynamic semantics and various expressivist and relativist theories.  In those systems, sentential contents in general are not the kinds of things ordinarily measured by probabilities, raising at least a prima facie worry about how such a system can make sense of exchanges like \ref{doubleh}:

\ex. \label{doubleh} \a. The coin must have landed heads.\label{doubleh}
\b. That has a .5 probability of being true.

Exchanges like this arguably make sense in some cases (for instance, in a case where there is a .5 chance that the coin was double-headed). 
But in systems where the content of a sentence like \ref{doubleh} is a kind of thing that probability measures are not defined on, exchanges like this, on the face of it, will be nonsense. This is particularly problematic since many of these theories predict that while an exchange like \ref{doubleh} should make no sense, the corresponding sentence in \ref{doubleh2} is perfectly coherent:

\ex. \label{doubleh2} There is .5 probability that the coin must have landed heads.

To give one example of a recent system with a profile like this, in the probabilistic semantics of \citealt{Moss2014}, sentential contents are sets of probability measures, which are not, of course, themselves defined on sets of probability measures.  See \citealt{KhooBook}, \citealt{Goldstein:2021}, and \citealt{Charlow:2019} for discussion of this problem and some attempted solutions.

A final point is slightly more complicated, but it brings out an important difference between our system and the state-based ones cited above, as well as standard implementations of domain and dynamic semantics. HS-T, again, cast their system as an attempt to cash out the core of the \emph{expressivist} approach of \citealt{Yalcin2007}: namely, that to assert $\lozenge p$ is to express a certain state of mind, namely, one that is compatible with $p$, and thus to believe $\lozenge p$ is simply for $p$ to be compatible with one's beliefs---hence their definition of truth at a state, understood as a mental state. This idea is tempting, but while this approach seems reasonable enough for non-factive mental states,  it does not work for factive mental states (\citealt{YalcinContext}, \citealt{Dorr:2013}). It is not plausible that to know $\lozenge p$ is for $p$ to be compatible with your knowledge, since any truth is compatible with your knowledge, but it does not follow from $p$ that you know $\lozenge p$. That is, the natural extension of this approach to factives will validate the inference $p\vDash K_a\lozenge p$,  but that inference is plainly invalid:

\ex. The treasure is buried in Alaska, but Blackbeard doesn't even know it might be there---he's sure it's at sea. 

 While HS-T do not give a semantics for `knows', they will have to contend with this problem (as will any approach built on Yalcin's central idea). While there are various responses available that complicate the semantics of attitude predicates to deal with these cases (e.g., in \citealt{YalcinContext}, \citealt{Rothschild:2011a}, \citealt{Willer:2013}, \citealt{Beddor:2021}), we by contrast will simply not face this problem in the first place. Given an accessibility relation $R_{K_a}$ between possibilities, let $K_a$ be the box operation defined using $R_{K_a}$, just as our $\Box$ is defined using $R$. Then if we consider, e.g., the Epistemic Scale from Example \ref{EpModEx} and set $R_{K_a}$ to be the universal relation between possibilities, representing full ignorance for agent $a$, then  $p\wedge\neg K_a\Diamond p$ will be true at the possibilities $x_1$ and $x_2$ that make $p$ true, so  $p\nvDash K_a\lozenge p$. 
  
 A closely connected point concerns the relation between knowledge and belief. If, as on the expressivist way of thinking, $K_a\lozenge p$ is true whenever $p$ is compatible with $a$'s knowledge state, and $B_a\lozenge p$ is true whenever $p$ is compatible with $a$'s beliefs, then $K_a\lozenge p$ will not entail $B_a\lozenge p$. Instead, $B_a\lozenge p$ will entail $K_a\lozenge p$. For whenever $p$ is compatible with a belief state, it is compatible with the corresponding knowledge state, but not vice versa. By contrast, we can capture the correct entailment relation by simply requiring that $R_{B_a}(x)\subseteq R_{K_a}(x)$, where $R_{B_a}$ is the doxastic accessibility relation for agent $a$, as in standard doxastic-epistemic logic. Then $K_a\lozenge p$ entails $B_a\lozenge p$ but not vice versa.  Together with the  points about logic and probability above, we take the possibility of a smooth extension of our system to factive attitude verbs to speak in favor of it over HS-T.
 
Finally, \citet{Kratzer:2017} develops a theory of epistemic language in a situation-based framework. Her notion of situation is very different from our notion of a possibility (see \citealt{KratzerSituations}), but it is an important precedent for our approach insofar as partiality plays a central role in that theory, as in ours. Having said that, the logic of the theory is classical, with apparent failures of distributivity to be explained pragmatically, so the overall approach is very different.

\section{Conclusion}\label{Conclusion}

Epistemic modals have strange properties. While this has been well appreciated in the recent literature, that literature has tended to take the corresponding logical failures seriously but not literally. Our goal has been to develop a logic and corresponding semantics that directly capture the apparent logical peculiarity of epistemic modals, while retaining as much of classical logic as we can. One central feature of the resulting system is that it is an \textsf{E}-logic: Wittgenstein sentences are contradictions and can always be substituted for contradictions \emph{salva veritate}. That enables our system to account for the wide range of embedding data for epistemic modals, as well as the plausibly closely related data for indicative conditionals. Another  feature of our system is that it invalidates distributivity, whose failure is (to a degree that has not been sufficiently appreciated) central to characterizing the logic of epistemic modals. Not only does distributivity intuitively fail, but invalidating distributivity is also equivalent (relative to a background orthologic) to rejecting the interpretation of negation as pseudocomplementation. That, in turn,  allows us to block the inference from $p\wedge\lozenge\neg p\vDash \bot$ to $\lozenge \neg p\vDash\neg p$, an inference that is valid in a classical setting and that obviously would prevent us from making $p\wedge\lozenge\neg p$ a contradiction. A set of related principles, like disjunctive syllogism and orthomodularity, are also invalid in our system.

Apart from this, however, our system retains much of classical logic. The non-modal fragment is fully classical, as is each fragment consisting of sentences at the same ``epistemic level''; and a wide variety of classical laws, which we have highlighted throughout, remain valid for the entire language. Our goal is to find a minimal variant on classical logic that invalidates just those laws that should fail for epistemic modals. Having said that, we have, again, no proof that we have succeeded; for there may be arguments for logics intermediate in strength between the one we have developed and classical logic. For instance, the principles we identified in Proposition \ref{weakdist} that are valid in the logic of epistemic extensions, but not in \textsf{EO}$^+$, seem plausible to us. 
 We hope that this paper will spur exploration of such logics. 
 
One way to taxonomize the literature on epistemic modality is in terms of where different proposals locate the central revision required by epistemic modals. The fundamental player in the dynamic treatment of Wittgenstein sentences is its non-classical  treatment of conjunction (and, correspondingly, the quantifiers). The domain treatment of Wittgenstein sentences instead has a classical approach to the connectives and holds that what is responsible for the badness of Wittgenstein sentences is fundamentally intensional: the impossibility of an information state \emph{accepting} such a sentence. The state-based approaches start from a similar intuition. The bounded theory, like the dynamic theory, locates the action again in the connectives---in particular in the treatment of conjunction and quantifiers. Our approach, by contrast, highlights a heretofore neglected option: namely, focusing on negation. That is, while we agree with the dynamic and bounded theories that the central action is in the interaction of connectives with epistemic modals, in our theory, negation plays a starring role.  We think this is well motivated, for recall that we can formulate the fundamental problem without even using conjunction: we want $\{\lozenge\neg p,p\}$ to be inconsistent, but we do not want to be able to conclude $\lozenge\neg p\vDash\neg p$, as classical logic would allow us to do. By treating negation as orthocomplementation rather than pseudocomplementation, we block this inference. In fact, conjunction remains ``classical'' in our theory in the sense that it is just set intersection (semantically speaking) and greatest lower bound (algebraically speaking). Of course, conjunction has distinctively non-classical properties in its interaction with disjunction---most prominently, the failure of distributivity. But this is also related to our treatment of negation, since disjunction and conjunction are related by negation via the De Morgan equivalences. 

Our treatment of epistemic modals also brings out the utility of working with possibilities rather than possible worlds as the building blocks of semantics. A central commitment of possible world semantics is that compatibility just is compossibility: for $\varphi$ and $\psi$ to be compatible just is for them to be true somewhere together. While this is also a commitment of possibility semantics for classical logic (recall Remark \ref{BooleanCase}), once we take the step from possible worlds to partial possibilities, we can take the further step of pulling apart compatibility from compossibility. Indeed, pulling apart compatibility from compossibility is \emph{the} central difference between possibility semantics for orthologic vs.~classical logic. And on reflection, it is not clear that compatibility entails compossibility. Romeo and Juliet were compatible but not compossible;  $p$ and $\lozenge \neg p$ are the same. This is why they are jointly contradictory (they are not compossible), but neither entails the negation of the other (they are compatible). These are positions we can hold only if compatibility does not entail compossibility---that is, only in the possibility framework.

We have left many open questions, and there are many points for further development. 
As noted, we need to axiomatize the logic of epistemic extensions of Boolean algebras. We need to  develop a model of credences in epistemic modal propositions, showing how to generalize the notion of a probability measure to a space of possibilities. We need a theory of conditionals in possibility semantics, in particular one that makes sense of the striking commonalities between indicative conditionals and epistemic modals. We also need a semantics for binary quantifiers, where (as noted briefly in \S~\ref{dissubsection}) many interesting issues arise from the interaction with epistemic modals.

A substantial further question to explore is what pragmatic system naturally goes with the possibility semantics we have developed, and, correspondingly, what the companion logic of rational acceptance (``informational logic'') comes to. One natural pragmatic model would go as follows. First,  associate each context with a context set, as in \citealt{Stalnaker:1974}, but treated now as a set of possibilities rather than worlds: namely, the set of possibilities where everything commonly accepted in the conversation is true. Unlike Stalnaker, we cannot model updates intersectively, since conversants may learn $\lozenge p$ and subsequently learn $\neg p$, without this presenting any fundamental informational conflict or tension. However, an apparatus required to model conditionals yields an elegant alternate model of updates. Given a conditional selection function $g$ which takes a proposition $A$ and a possibility $x$ to the set of possibilities closest to $x$ where $A$ is true, we can model conversational updates  as transitions from the context set to its image under $g$, relative to the newly accepted content. That is, if $A$ is asserted and accepted in a conversation whose context set is $C$ and whose selection function is $g$, the updated context set is $C+A=\bigcup\{g(x,A)\mid x\in C \}$. This yields a natural corresponding informational logic:  $\varphi$ informationally entails $\psi$ iff for any context set $C$, we have $x\in \llbracket \psi\rrbracket$ for all $x\in C+\llbracket \varphi\rrbracket$.  Interestingly, if we adopt a theory of conditionals on which $p\to\Box p$ is a theorem, this approach  has an attractive feature: it predicts that  $\varphi$ informationally entails $\Box\varphi$, since any context updated with $\varphi$ by the procedure just defined will entail $\Box\varphi$.  This is in line with the informational logic proposed by \citealt{Yalcin2007} and following, where the inference from $\varphi$ to $\Box\varphi$ is taken to be a central desideratum of an informational logic. Of course, this all depends on developing a theory of conditionals in the possibility framework, a project we hope to take up in future work.
 
A different approach to pragmatics would be to treat the context set  in the first instance as a set of worlds, which is updated in a standard way, and treat the corresponding possibility model as being derived from the context set using the method of epistemic extensions. This approach would need to be further developed to explain how we update with purely modal information. 

 A further set of questions concerns how to think about truth, correctness, agreement, and retraction in a possibility semantic framework. These are related to, but not necessarily determined by, questions about updating. These are difficult problems in the context of modal discourse; among other things, it is not particularly clear what the empirical generalizations are (see \citealt{Khoo14}, \citealt{Khoo:2019}, \citealt{Beddor:2018}, and \citealt{Phillips:2019}, which together suggest that the empirical picture is quite complicated). As far as we can tell, our approach is compatible with a wide range of answers to this question, including contextualist, relativist, and expressivist ones, although the details of a pragmatic theory in any of these veins will need to be developed further.

Whether our system, or further developments of it, ultimately provides the correct account, we hope that future work will take on board our goal of directly characterizing the logical peculiarities of epistemic language, aiming to capture exactly where classical logic goes wrong and where it doesn't.

\subsection*{Acknowledgements}

For helpful feedback, we thank the two reviewers for this journal and our editor, Ivano Ciardelli, as well as Maria Aloni, Alexandru Baltag, Justin Bledin, Scott Blomgren, Kyle Blumberg, Sam Carter, Lucas Champollion, Ahmee Christensen,  Yifeng Ding, J. Dmitri Gallow,  Justin Helms, Alexander Kocurek, John MacFarlane, Guillaume Massas,  Gareth Norman, 
Eric Pacuit, Frank Veltman,  Seth Yalcin, and participants at the LIRa seminar at the University of Amsterdam in February 2022,  the NASSLLI 2022 course on Possibility Semantics, Advances in Modal Logic 2022,  the 2023 MITing of the Minds conference, the  seminar on Sources of Nonclassical Logic at UC Berkeley in Spring 2023, the Tsinghua Logic Salon in May 2023, and the ESSLLI 2023 course on Possibility Semantics. 
This paper extends a proceedings paper (\citealt{HM2022}) presented at the 2022 Amsterdam Colloquium, and we thank our audience there and our two reviewers. 

\appendix

\section{Appendix}

\subsection{Characterization of epistemic extensions of Boolean algebras}\label{CharApp}

In \S~\ref{EpExtSection}, we proved that the epistemic extension of a Boolean algebra is a complete epistemic ortho-Boolean lattice. Now we characterize exactly which complete epistemic ortho-Boolean lattices arise in this way.

\begin{definition}\label{worldly} A complete epistemic ortho-Boolean lattice $L$ with distinguished Boolean algebra $B$ is \textit{extensive} if the following conditions hold:
\begin{enumerate}
\item $L$ is an S5 modal ortholattice;
\item\label{worldly1} for all $x,y\in L$, if $x\not\leq y$, then there are $a,i\in B$ with $0\neq a\leq i$ such that 
\[a\wedge \underset{b\in B,\,0\neq b\leq a}{\bigwedge}\Diamond b \wedge \Box i\leq x\quad\mbox{but}\quad a\wedge \underset{b\in B,\,0\neq b\leq a}{\bigwedge}\Diamond b \wedge \Box i\not\leq y;\]
\item\label{NegLemma} for all $a,i,a',i'\in B$ with $0\neq a\leq i$ and $0\neq a'\leq i$, the following are equivalent:
\begin{enumerate}
\item $a\wedge a'\neq 0$, $a\leq i'$, $a'\leq i$; 
\item $a'\wedge \underset{b\in B,\,0\neq b\leq a'}{\bigwedge}\Diamond b \wedge \Box i' \not\leq \neg (a\wedge \underset{b\in B,\,0\neq b\leq a}{\bigwedge}\Diamond b \wedge \Box i)$;
\end{enumerate}
\item\label{S5lem3} for all $a,i,a',i'\in B$ with $0\neq a\leq i$ and $0\neq a'\leq i$, if $a'\wedge \underset{b\in B,\,0\neq b\leq a'}{\bigwedge}\Diamond b \wedge \Box i'\leq \underset{b\in B,\,0\neq b\leq a}{\bigwedge}\Diamond b \wedge \Box i$, then $a\leq a'$ and $i'\leq i$;
\item\label{worldly5} for all $a,b_k,i\in B$, if $ b_k\leq a$, then $\Diamond b_1\wedge\dots\wedge\Diamond b_n\wedge\Box i\leq \Diamond (a\wedge \Diamond b_1\wedge\dots\wedge\Diamond b_n\wedge\Box i)$.
\end{enumerate}
\end{definition}

Our first task is to show that epistemic extensions of Boolean algebras are indeed extensive.

\begin{proposition} For any Boolean algebra $B$, its epistemic extension $O(B^\mathsf{e})$ is extensive.
\end{proposition}

\begin{proof} Verifying all the parts of Definition \ref{worldly} is straightforward with the exception of part \ref{worldly5}. Let us show that for all $U,V_k,W\in \mathbb{B}_0$,
 if $V_k\subseteq U $, then 
\begin{equation}\Diamond V_1\wedge\dots\wedge\Diamond V_n\wedge\Box W\subseteq \Diamond (U\wedge \Diamond V_1\wedge\dots\wedge\Diamond V_n\wedge\Box W).\label{WideDiamondEq}\end{equation} Suppose  $(a,i)$ is a possibility, so $a\leq i$, which belongs to the left-hand side. Then where $e$ is the embedding of $B$ into $O(B^\mathsf{e})$ from  Theorem \ref{EpistemicFrameExtThm}, we have  $a\wedge e^{-1}( V_k)\neq 0$ for $1\leq k\leq n$ and $i\leq e^{-1}(W)$ by Lemma~\ref{AlgSimpleTruth}.  Now consider any $(a',i')\between (a,i)$, so $a'\leq i'$, $a\wedge a'\neq 0$, $a\leq i'$, and $a'\leq i$. Then where 
\[x= \underset{1\leq i\leq n}{\bigvee}(a\wedge e^{-1}( V_k)),\]
we have $x\leq a\leq  i'$, so $a'\vee x\leq i'$.  From $a'\leq i\leq e^{-1}(W )$, we have $a'\leq e^{-1}(W )$, and from  $a\wedge e^{-1}( V_k)\leq i\leq e^{-1}(W )$, we have $x\leq e^{-1}(W )$, so $a'\vee x\leq e^{-1}(W )$. It follows that $(a',i')R(a'\vee x,i')\between (x,e^{-1}(W ))$ and \[(x,e^{-1}(W ))\in U\wedge \Diamond V_1\wedge\dots\wedge\Diamond V_n\wedge\Box W \] using Lemma \ref{AlgSimpleTruth}, which completes the proof that $(a,i)$ is in the right-hand side of (\ref{WideDiamondEq}).\end{proof}

Conversely, we will show that any extensive $L$ is isomorphic to the epistemic extension of its distinguished Boolean algebra.  We begin with a useful lemma.

\begin{lemma}\label{WorldlyLem}  In any extensive $L$ as in Definition \ref{worldly}, we have the following for all  $a,i\in B$ with $0\neq a\leq i$ and $x\in L$:
\begin{enumerate}
\item \label{S5lem} if $a\wedge \underset{b\in B,\,0\neq b\leq a}{\bigwedge}\Diamond b \wedge\Box i\leq \Box x$, then $\underset{b\in B,\,0\neq b\leq a}{\bigwedge}\Diamond b\wedge\Box i\leq \Box x$;
\item \label{S5lem2} if $\underset{b\in B,\,0\neq b\leq a}{\bigwedge}\Diamond b\wedge\Box i\leq x$, then $\underset{b\in B,\,0\neq b\leq a}{\bigwedge}\Diamond b\wedge\Box i\leq \Box x$.
\end{enumerate}
\end{lemma}
\begin{proof} For part \ref{S5lem}, we have
\begin{eqnarray*}
a\wedge  \underset{b\in B,\,0\neq b\leq a}{\bigwedge}\Diamond b  \wedge\Box i\leq \Box x & \Rightarrow &  \Diamond (a\wedge  \underset{b\in B,\,0\neq b\leq a}{\bigwedge}\Diamond b  \wedge\Box i)\leq \Diamond \Box x\quad\mbox{by monotonicity of }\Diamond \\
& \Rightarrow &   \underset{b\in B,\,0\neq b\leq a}{\bigwedge}\Diamond b  \wedge\Box i\leq \Diamond \Box x \leq \Box x\quad\mbox{by  Definition \ref{worldly}.\ref{worldly5} and S5}.
\end{eqnarray*}
For part \ref{S5lem2},  we have 
\begin{eqnarray*}
 \underset{b\in B,\,0\neq b\leq a}{\bigwedge}\Diamond b \wedge\Box i\leq x &  \Rightarrow & \Box( \underset{b\in B,\,0\neq b\leq a}{\bigwedge}\Diamond b \wedge\Box i)\leq \Box x \quad\mbox{by monotonicity of }\Box \\
&  \Rightarrow & \underset{b\in K}{\bigwedge}\Box\Diamond b\wedge \Box\Box i\leq \Box x \quad\mbox{by multiplicativity of }\Box \\
&  \Rightarrow &  \underset{b\in B,\,0\neq b\leq a}{\bigwedge}\Diamond b \wedge \Box i\leq \Box x \quad\mbox{by S5}.\qedhere
\end{eqnarray*}
\end{proof}

Now we prove the promised result.

\begin{theorem} If $L$ is extensive with distinguished Boolean algebra $B$, then $L$ is isomorphic to  $O(B^\mathsf{e})$. 
\end{theorem}
\begin{proof} Let $B^\mathsf{e}=(S,\between,R)$. Define a map $f: L\to O(B^\mathsf{e})$ by
\[f(x)=\{(a,i)\in S\mid a\wedge \underset{b\in B,\,0\neq b\leq a}{\bigwedge}\Diamond b \wedge \Box i\leq x \}.\]
That $x\not\leq y$ implies $f(x)\not\subseteq f(y)$ follows from Definition \ref{worldly}.\ref{worldly1}, so $f$ is injective.

We begin by showing that $f$ preserves $\neg$. First,  we show that $f(\neg x)\subseteq \neg f(x)$. Suppose $(a,i)\in f(\neg x)$, so 
\[a\wedge \underset{b\in B,\,0\neq b\leq a}{\bigwedge}\Diamond b \wedge \Box i\leq \neg x\]
and hence
\begin{equation}x\leq \neg (a\wedge \underset{b\in B,\,0\neq b\leq a}{\bigwedge}\Diamond b \wedge \Box i).\label{xunder}\end{equation}
Now suppose $(a,i)\between (a',i')$, so $a\wedge a'\neq 0$, $a\leq i'$, and $a'\leq i$.  Then by the (a) to (b) direction of Definition \ref{worldly}.\ref{NegLemma}, we have
\[a'\wedge \underset{b\in B,\,0\neq b\leq a'}{\bigwedge}\Diamond b \wedge \Box i'\not\leq \neg (a\wedge \underset{b\in B,\,0\neq b\leq a}{\bigwedge}\Diamond b \wedge \Box i),\]
which with (\ref{xunder}) implies
\[a'\wedge \underset{b\in B,\,0\neq b\leq a'}{\bigwedge}\Diamond b \wedge \Box i'\not\leq  x,\]
so $(a',i')\not\in f(x)$. It follows that $(a,i)\in \neg f(x)$.

Next we show that $\neg f(x)\subseteq f(\neg x)$. Suppose $(a,i)\not\in f(\neg x)$, so 
\[a\wedge \underset{b\in B,\,0\neq b\leq a}{\bigwedge}\Diamond b \wedge \Box i\not\leq \neg x,\]
which implies 
\[x\not\leq \neg (a\wedge \underset{b\in B,\,0\neq b\leq a}{\bigwedge}\Diamond b \wedge \Box i).\]
Then by part \ref{worldly1} of Definition \ref{worldly}, there are $a',i'\in B$ such that $0\neq a'\leq i'$ and
\[a'\wedge \underset{b\in B,\,0\neq b\leq a'}{\bigwedge}\Diamond b \wedge \Box i'\leq x\]
but 
\[a'\wedge \underset{b\in B,\,0\neq b\leq a'}{\bigwedge}\Diamond b \wedge \Box i'\not\leq \neg (a\wedge \underset{b\in B,\,0\neq b\leq a}{\bigwedge}\Diamond b \wedge \Box i).\]
Then by the (b) to (a) direction of Definition \ref{worldly}.\ref{NegLemma}, we have $(a',i')\between (a,i)$, which with $(a',i')\in f(x)$ implies $(a,i)\not\in \neg f(x)$.

Using the fact that $f$ preserves $\neg$, we claim that $f(x)$ is a $\between$-regular set. Suppose $(a,i)\not\in f(x)$, so
\[a\wedge \underset{b\in B,\,0\neq b\leq a}{\bigwedge}\Diamond b \wedge \Box i\not\leq x\]
and hence
\[\neg x\not\leq \neg (a\wedge \underset{b\in B,\,0\neq b\leq a}{\bigwedge}\Diamond b \wedge \Box i).\]
Then by part \ref{worldly1} of Definition \ref{worldly}, there are $a',i'\in B$ with $0\neq a'\leq i'$ such that 
\[a'\wedge \underset{b\in B,\,0\neq b\leq a'}{\bigwedge}\Diamond b \wedge \Box i'\leq\neg x\]
but 
\[a'\wedge \underset{b\in B,\,0\neq b\leq a'}{\bigwedge}\Diamond b \wedge \Box i'\not\leq \neg (a\wedge \underset{b\in B,\,0\neq b\leq a}{\bigwedge}\Diamond b \wedge \Box i).\]
Then $(a',i')\in  f(\neg x)=\neg f(x)$, and by Definition \ref{worldly}.\ref{NegLemma}, $(a',i')\between (a,i)$. This shows that $f(x)$ is $\between$-regular.

Next  we observe that $f(\underset{x\in X}{\bigwedge}x)= \underset{x\in X}{\bigwedge}f(x)$:
\begin{eqnarray*}
(a,i)\in f(\underset{x\in X}{\bigwedge}x) & \Leftrightarrow & a\wedge \underset{b\in B,\,0\neq b\leq a}{\bigwedge}\Diamond b \wedge \Box i\leq  \underset{x\in X}{\bigwedge}x \\
& \Leftrightarrow & a\wedge \underset{b\in B,\,0\neq b\leq a}{\bigwedge}\Diamond b \wedge \Box i\leq x\mbox{ for all }x\in X \\
& \Leftrightarrow & (a,i)\in f(x)\mbox{ for all }x\in X \\
& \Leftrightarrow & (a,i)\in \underset{x\in X}{\bigwedge} f(x).
\end{eqnarray*}

Next we show that $f(\underset{x\in X}{\bigvee}x)\leq \underset{x\in X}{\bigvee}f(x)$. The converse inclusion follows from order preservation, which follows from meet preservation. Suppose $(a,i)\in  f(\underset{x\in X}{\bigvee}x)$, so
\begin{equation*}a\wedge \underset{b\in B,\,0\neq b\leq a}{\bigwedge}\Diamond b \wedge \Box i\leq \underset{x\in X}{\bigvee}x ,\end{equation*}
which implies 
\begin{equation*}\neg ( \underset{x\in X}{\bigvee}x)\leq \neg (a\wedge \underset{b\in B,\,0\neq b\leq a}{\bigwedge}\Diamond b \wedge \Box i ).\end{equation*}
Further suppose that $(a',i')\between (a,i)$, so $a\wedge a'\neq 0$, $a\leq i'$, and $a'\leq i$. Then by Definition \ref{worldly}.\ref{NegLemma}, we have 
\[a'\wedge \underset{b\in B,\,0\neq b\leq a'}{\bigwedge}\Diamond b \wedge \Box i' \not\leq \neg (a\wedge \underset{b\in B,\,0\neq b\leq a}{\bigwedge}\Diamond b \wedge \Box i)\]
and hence
\[a'\wedge \underset{b\in B,\,0\neq b\leq a'}{\bigwedge}\Diamond b \wedge \Box i' \not\leq \neg ( \underset{x\in X}{\bigvee}x) = \underset{x\in X}{\bigwedge} \neg x,\]
which implies that for some $x\in X$, 
\[a'\wedge \underset{b\in B,\,0\neq b\leq a'}{\bigwedge}\Diamond b \wedge \Box i' \not\leq \neg x.\]
Thus, $(a',i')\not\in f(\neg x)=\neg f(x)$, so there is an $(a'',i'')\between (a',i')$ such that $(a'',i'')\in f(x)$. This shows that $(a,i)\in \underset{x\in X}{\bigvee}f(x)$.

Next we show that $f(\Box x)\subseteq \Box f(x)$. Suppose $(a,i)\in f(\Box x)$, so 
\[a\wedge \underset{b\in B,\,0\neq b\leq a}{\bigwedge}\Diamond b \wedge \Box i\leq \Box x.\]
Then by Lemma \ref{WorldlyLem}.\ref{S5lem}, we have
\[\underset{b\in B,\,0\neq b\leq a}{\bigwedge}\Diamond b \wedge \Box i\leq \Box x.\]
Now if $(a,i)R(a',i')$, so $a\leq a'$ and $i'\leq i$, then $a$ is among the nonzero elements of $B$ below $a'$, and $\Box i'\leq \Box i$, so we have
\[a'\wedge \underset{b\in B,\,0\neq b\leq a'}{\bigwedge}\Diamond b \wedge \Box i'\leq  \underset{b\in B,\,0\neq b\leq a}{\bigwedge}\Diamond b \wedge \Box i \leq \Box x \leq x,\]
so $(a',i')\in f(x)$. This shows that $(a,i)\in \Box f(x)$.

Now to show that $\Box f(x)\subseteq f(\Box x)$, suppose $(a,i)\not\in f(\Box x)$, so 
\[a\wedge \underset{b\in B,\,0\neq b\leq a}{\bigwedge}\Diamond b \wedge \Box i\not\leq \Box x,\]
which by Lemma \ref{WorldlyLem}.\ref{S5lem2} implies 
\[\underset{b\in B,\,0\neq b\leq a}{\bigwedge}\Diamond b \wedge \Box i \not\leq x.\]
Now by part \ref{worldly1} of Definition \ref{worldly}, there are $a',i'\in B$ such that $0\neq a'\leq i'$ and
\begin{equation}a'\wedge \underset{b\in B,\,0\neq b\leq a'}{\bigwedge}\Diamond b \wedge \Box i'\leq \underset{b\in B,\,0\neq b\leq a}{\bigwedge}\Diamond b \wedge \Box i \label{AccessStep}\end{equation}
but 
\[a'\wedge \underset{b\in B,\,0\neq b\leq a'}{\bigwedge}\Diamond b \wedge \Box i'\nleq x,\]
so $(a',i')\not\in f(x)$. By (\ref{AccessStep}) and Definition \ref{worldly}.\ref{S5lem3}, we have $(a,i)R(a',i')$, so we conclude that $(a,i)\not\in \Box f(x)$.

Finally, we show that $f$ is surjective. For $U\in O(B^\mathsf{e})$, where \[x_U = \underset{(a,i)\in U}{\bigvee}\big(a\wedge \underset{b\in B,\,0\neq b\leq a}{\bigwedge}\Diamond b \wedge \Box i\big),\] we claim that $f(x_U)=U$. By join preservation, proved above, we have
\[f(x_U)= \underset{(a,i)\in U}{\bigvee} f\big(a\wedge \underset{b\in B,\,0\neq b\leq a}{\bigwedge}\Diamond b \wedge \Box i\big).\]
Obviously $U\subseteq f(x_U)$, since $(a,i)\in f\big(a\wedge \underset{b\in B,\,0\neq b\leq a}{\bigwedge}\Diamond b \wedge \Box i\big)$ by definition of $f$.   Conversely, suppose $(a^*,i^*)\in f(x_U)$ and $(a',i')\between  (a^*,i^*)$. Then since $(a^*,i^*)\in f(x_U)$, by the previous equation there is an $(a'',i'')\between (a',i')$ such that for some $(a,i)\in U$, \[(a'',i'')\in f\big(a\wedge \underset{b\in B,\,0\neq b\leq a}{\bigwedge}\Diamond b \wedge \Box i\big),\]  which means
\[a''\wedge \underset{b\in B,\,0\neq b\leq a''}{\bigwedge}\Diamond b \wedge \Box i''\leq a\wedge \underset{b\in B,\,0\neq b\leq a}{\bigwedge}\Diamond b \wedge \Box i\]
and therefore 
\[\neg (a\wedge \underset{b\in B,\,0\neq b\leq a}{\bigwedge}\Diamond b \wedge \Box i)\leq \neg (a''\wedge \underset{b\in B,\,0\neq b\leq a''}{\bigwedge}\Diamond b \wedge \Box i'').\]
Hence any element not under the right side is not under the left, which with Definition \ref{worldly}.\ref{NegLemma} implies that any possibility compatible with $(a'',i'')$ is also compatible with $(a,i)$, so $(a,i)\in U$ implies $(a'',i'')\in U$ by Lemma~\ref{RefinementDef}.  Thus, for any $(a',i')\between (a^*,i^*)$, there is an $(a'',i'')\between (a',i')$ with $(a'',i'')\in U$, which with $U\in O(B^\mathsf{e})$ implies $(a^*,i^*)\in U$, which completes the proof that $f(x_U)\subseteq U$.\end{proof}

In the proof, we use the completeness of the ortholattice in only two places: first, so that the meets $\bigwedge \{\Diamond b\mid b\in B,\,0\neq b\leq a\}$ exist in $L$ for each $a\in B$, and second, in the proof that the map $f$ is surjective. Thus, even if we drop the assumption that $L$ is complete, we can prove the following.
 
 \begin{theorem} Let $L$ be an epistemic ortho-Boolean lattice with distinguished Boolean subalgebra $B$, satisfying all the conditions of Definition \ref{worldly} except possibly completeness. If the meet $\bigwedge \{\Diamond b\mid b\in B,\,0\neq b\leq a\}$ exist in $L$ for each $a\in B$, then there is a complete embedding\footnote{I.e., an embedding preserving all existing meets and joins.} of $L$ into $O(B^\mathsf{e})$.
\end{theorem}
Of course if $B$ is finite, then all the relevant meets exist in $L$.

\begin{corollary}\label{FinCor} Let $L$ be an epistemic ortho-Boolean lattice with distinguished \textit{finite} Boolean subalgebra $B$, satisfying all the conditions of Definition \ref{worldly} except possibly completeness. Then there is a complete embedding  of $L$ into $O(B^\mathsf{e})$.\end{corollary}

\subsection{Decidability of the logic of epistemic frames of Boolean algebras}\label{DecideApp}

In this section, we show that consequence over epistemic frames of Boolean algebras is decidable. The key device is a canonical finite model.

\begin{definition} For any finite $P\subseteq\mathsf{Bool}$, a \textit{$P$-state description} is a conjunction of exactly one of $\mathtt{p}$ or $\neg \mathtt{p}$ for each $\mathtt{p}\in P$. Let $W_P$ be the set of all $P$-state descriptions, $V_P$ the valuation sending each $\mathtt{p}\in P$  to the set of $P$-state descriptions in which $\mathtt{p}$ occurs positively, and $\mathcal{M}_P$ the epistemic model of $(\wp(W_P),V_P)$.
\end{definition}

 For $P\subseteq\mathsf{Bool}$, let $\mathcal{EL}^*(P)$ be the fragment of $\mathcal{EL}^+$ whose only proposition letters belong to $P$. Let $\mathbf{EB}$ be the class of epistemic frames of Boolean algebras, and define $\vDash_\mathbf{EB}$ as in Definition \ref{PossCon}.

\begin{proposition}\label{FMP} For $\varphi,\psi\in\mathcal{EL}^*(P)$, if $\varphi\nvDash_\mathbf{EB}\psi$, then there is a possibility in $\mathcal{M}_P$ satisfying $\varphi$ but not~$\psi$.
\end{proposition}

\begin{proof} We outline the main steps, leaving some details to the reader. Suppose $\varphi\nvDash_\mathbf{EB}\psi$, so there is some Boolean algebra $B$ with valuation $\theta$ such that in the epistemic model $\mathcal{M}$ of $(B,\theta)$, we have $\llbracket \varphi\rrbracket^\mathcal{M}\not\subseteq \llbracket \psi\rrbracket^\mathcal{M}$. Let $W_0$ be the set of ultrafilters of $B$, $V_0$ the valuation for $P$ such that $V(\mathtt{p})=\{u\in W_0\mid \theta(\mathtt{p})\in u\}$, and $\mathcal{M}_0$ the epistemic model of $(\wp(W_0),V_0)$. Then since $B$ embeds into $\wp(W_0)$,  $O(B^\mathsf{e})$ embeds into $O(\wp(W_0)^\mathsf{e})$, from which it is easy to show that $\llbracket \varphi\rrbracket^\mathcal{M}\not\subseteq \llbracket \psi\rrbracket^\mathcal{M}$ implies $\llbracket \varphi\rrbracket^{\mathcal{M}_0}\not\subseteq \llbracket \psi\rrbracket^{\mathcal{M}_0}$. Now let $\sim$ be an equivalence relation on $W_0$ defined by $u\sim u'$ if $u\cap \{\theta(\mathtt{p})\mid \mathtt{p}\in P\} = u'\cap \{\theta(\mathtt{p})\mid \mathtt{p}\in P\}$,\footnote{Cf.~the method of \textit{filtration} in modal logic.} and let $[u]$ be the equivalence class of~$u$. Let $W_1$ be the set of equivalence classes of $\sim$, $V_1$ the valuation for $P$ such that $V_0(\mathtt{p})=\{[u]\in W_1\mid \theta(\mathtt{p})\in u\}$, and $\mathcal{M}_1$ the epistemic model of $(\wp(W_1),V_1)$. Then one can prove by induction that for any $\chi\in \mathcal{EL}^*(P)$, a possibility $(A,I)$ in $\mathcal{M}_0$ satisfies $\chi$ iff the possibility $(\{[u]\mid u\in A\}, \{[u]\mid u\in I\})$ in $\mathcal{M}_1$ satisfies $\chi$. Hence $\llbracket \varphi\rrbracket^{\mathcal{M}_0}\not\subseteq \llbracket \psi\rrbracket^{\mathcal{M}_0}$ implies $\llbracket \varphi\rrbracket^{\mathcal{M}_1}\not\subseteq \llbracket \psi\rrbracket^{\mathcal{M}_1}$. Finally, $W_1$ can be identified with a subset of $W_P$: send an equivalence class $[u]$ of ultrafilters to the $P$-state description $s$ such that $\mathtt{p}$ occurs positively in $s$ iff $\theta(\mathtt{p})\in u$. Then $\wp(W_1)$ embeds into $\wp(W_P)$, so $O(\wp(W_1)^\mathsf{e})$ embeds into $O(\wp(W_P)^\mathsf{e})$, from which it is easy to show that   $\llbracket \varphi\rrbracket^{\mathcal{M}_1}\not\subseteq \llbracket \psi\rrbracket^{\mathcal{M}_1}$ implies $\llbracket \varphi\rrbracket^{\mathcal{M}_P}\not\subseteq \llbracket \psi\rrbracket^{\mathcal{M}_P}$.\end{proof}

For the following corollary, let $\mathcal{EL}^*=\mathcal{EL}^*(\mathsf{Bool})$.

\begin{corollary} For  $\varphi,\psi\in\mathcal{EL}^*$, it is decidable whether $\varphi\vDash_\mathbf{EB}\psi$.
\end{corollary}
\begin{proof} Where $P$ is the set of proposition letters appearing in $\varphi$ or $\psi$, by Proposition \ref{FMP} it suffices to check every possibility in the finite model $\mathcal{M}_P$ to see if any satisfies $\varphi$ but not $\psi$.
\end{proof}

\subsection{Normal forms}\label{NormalApp}

Closely related to the previous section is a semantic normal form theorem. Given a Boolean formula $\alpha\in\mathcal{EL}^*(P)$, let $st_P(\alpha)$ be the set of all $P$-state descriptions that entail $\alpha$ in classical logic.

\begin{definition} For any finite $P\subseteq\mathsf{Bool}$ and $\varphi\in \mathcal{EL}^*(P)$, we say that $\varphi$ is in \textit{$P$-canonical normal form} if it is a disjunction of conjunctions of the form
\[\alpha \wedge \underset{\beta\in st_P(\alpha)}{\bigwedge}\Diamond\beta \wedge\Box \gamma,\]
where $\alpha$ and $\gamma$ are disjunctions of $P$-state descriptions.
\end{definition}

\begin{proposition}\label{NormalForm} For any finite $P\subseteq\mathsf{Bool}$ and $\varphi\in \mathcal{EL}^*(P)$, $\varphi$ is equivalent over epistemic frames of Boolean algebras to a formula in $P$-canonical normal form, which can be effectively computed from $\varphi$.
\end{proposition}
\begin{proof} Given a possibility $(A,I)$ in $\mathcal{M}_P$, let $\alpha_A$ be the disjunction of $P$-state descriptions in $A$ and $\gamma_I$ the disjunction of $P$-state descriptions in $I$, respectively. Then by Lemmas \ref{SimpleTruth} and \ref{SimpleRefinement},  $\alpha_A\wedge \underset{\beta\in st_P(\alpha_A)}{\bigwedge}\Diamond\beta \wedge\Box\gamma_I$ is true at all and only the refinements of $(A,I)$ in $\mathcal{M}_P$. It follows that 
\begin{eqnarray*}
\llbracket \varphi\rrbracket^{\mathcal{M}_P}&=& \underset{(A,I)\in \llbracket \varphi\rrbracket^{\mathcal{M}_P}}{\bigcup }\llbracket \alpha_A\wedge \underset{\beta\in st_P(\alpha_A)}{\bigwedge}\Diamond\beta \wedge\Box\gamma_I\rrbracket^{\mathcal{M}_P} \\
&=& \underset{(A,I)\in \llbracket \varphi\rrbracket^{\mathcal{M}_P}}{\bigvee }\llbracket \alpha_A\wedge \underset{\beta\in st_P(\alpha_A)}{\bigwedge}\Diamond\beta \wedge\Box\gamma_I\rrbracket^{\mathcal{M}_P} \quad\mbox{since }\llbracket \varphi\rrbracket^{\mathcal{M}_P}\mbox{ is $\between$-regular}\\
 &=& \llbracket \underset{(A,I)\in \llbracket \varphi\rrbracket^{\mathcal{M}_P}}{\bigvee} (\alpha_A\wedge \underset{\beta\in st_P(\alpha_A)}{\bigwedge}\Diamond\beta \wedge\Box\gamma_I) \rrbracket^{\mathcal{M}_P} .
 \end{eqnarray*}
Then we claim $\varphi$ is equivalent to the formula in normal form in the last line. For if not, then by Lemma~\ref{FMP}, there is a possibility in $\mathcal{M}_P$ in which only one of the formulas is true, contradicting the equation above. Clearly the normal form of $\varphi$ can be effectively computed via the construction of the finite model $\mathcal{M}_P$.
\end{proof}

From Proposition \ref{NormalForm}, we obtain a somewhat unilluminating but nonetheless complete logic with respect to $\vDash_\mathbf{EB}$. The key principles beyond orthologic are $\varphi\vdash NF_P(\varphi)$ and $NF_P(\varphi)\vdash\varphi$, where $P$ contains all proposition letters in $\varphi$ and $NF_P(\varphi)$ is the $P$-canonical normal form of $\varphi$ computed as in the proof of Proposition \ref{NormalForm} (fixing some ordering of $P$ and $W_P$ to get uniqueness of $NF_P(\varphi)$). Then if $\varphi\vDash_\mathbf{EB}\psi$, the proof of Proposition \ref{NormalForm} shows that $NF_P(\psi)$ can be obtained from $NF_P(\varphi)$ by disjunction introduction, so $\varphi\vdash  NF_P(\varphi)\vdash NF_P(\psi)\vdash\psi$. We leave it to future work to give a more illuminating set of logical principles that allow us to derive the equivalence between $\varphi$ and $NF_P(\varphi)$.

\bibliographystyle{plainnat}
\bibliography{modals}

\end{document}